\newtheorem{thm}{Theorem}[section]
\newtheorem{lemma}{Lemma}
\newtheorem{assump}{Assumption}
\newtheorem*{assumption2prime}{Assumption $2'$}
\def\indpit{I(|\hat{\gamma}_j^*|>\lambda \hat\sigma_{Xj}^*)}
\def\var{\mbox{Var}}
\def\rmivw{\rm IVW}
\def\rmdivw{\rm dIVW}
\def\mu{\delta}
\def\rho{v}
\def\cov{\mbox{Cov}}
\def\Y{\bm{Y}}
\def\Z{\bm{Z}}
\def\X{\bm{X}}
\theoremstyle{remark}
\begin{document}

\begin{frontmatter}
\title{Debiased Inverse-Variance Weighted Estimator in Two-Sample Summary-Data Mendelian Randomization}
\runtitle{Debiased IVW Estimator}

\begin{aug}
\author[A]{\fnms{Ting} \snm{Ye} \ead[label=e1]{tingye@wharton.upenn.edu}},
\author[B,C]{\fnms{Jun} \snm{Shao}\ead[label=e2]{shao@stat.wisc.edu}}
\and
\author[B]{\fnms{Hyunseung} \snm{Kang}
	\ead[label=e3]{hyunseung@stat.wisc.edu}}

\address[A]{Department of Statistics,
	University of Pennsylvania
\printead{e1}}

\address[B]{Department of Statistics,
	University of Wisconsin-Madison
\printead{e2,e3}}

\address[C]{School of Statistics,
	 East China Normal University}

\end{aug}

\begin{abstract}
Mendelian randomization (MR) has become a popular approach to study the effect of a modifiable exposure on an outcome by using genetic variants as instrumental variables. A challenge in MR is that each genetic variant explains a relatively small proportion of variance in the exposure and there are many such variants, a setting known as many weak instruments. To this end, we provide a  theoretical characterization of the statistical properties of two popular estimators in MR, the inverse-variance weighted (IVW) estimator and the  IVW estimator with screened instruments using an independent selection dataset, under many weak instruments. We then propose a debiased IVW estimator, a simple modification of the IVW estimator, that is robust to many weak instruments and doesn't require screening. Additionally, we present two instrument selection methods to improve the efficiency of the new estimator when a selection dataset is available. An extension of the debiased IVW estimator to handle balanced horizontal pleiotropy is also discussed. We conclude by demonstrating our results in simulated and real datasets.
\end{abstract}

\begin{keyword}[class=MSC]
	\kwd[Primary ]{62E30}
	\kwd{60K35}
	\kwd[; secondary ]{46N60, 62P10}
\end{keyword}

\begin{keyword}
	\kwd{causal inference}
	\kwd{inverse variance weighted estimator}
	\kwd{many weak instruments}
	\kwd{Mendelian randomization}
	\kwd{summary data}
\end{keyword}

\end{frontmatter}

\section{Introduction} 
\label{sec: intro}
\subsection{Motivation: Many Weak Instruments in MR}
Instrumental variable (IV) is a well-known method to estimate the effect of a treatment, policy, or an exposure on an outcome in observational studies with unmeasured confounding \cite{Baiocchi:2014aa, Hernan-Robins, Angrist2001}. Mendelian randomization (MR), a type of IV method, 
utilizes genetic variants as instruments to study the  effect of a modifiable exposure or potential risk factor on an outcome in the presence of unmeasured confounding
\citep{Davey-Smith:2003aa, Smith:2004aa,  Lawlor:2008aa, Kamstrup:2009aa, Burgess:2015ab, Corbin:2016aa, Zheng:2017aa, Pingault:2018aa}. A distinct feature of MR is that there can be a large number of genetic variants, specifically single nucleotide polymorphisms (SNPs) from pre-existing large genome-wide association studies (GWASs), and   
many or possibly all SNPs  are weak IVs; the setting is also referred to as many weak instruments in econometrics \citep{Chao:2005aa}. In particular, these genetic instruments/SNPs can be weak for the following three reasons. First, many SNPs may have zero/null effects on the exposure.  Second, when SNPs are {\em common genetic variants}, i.e.,
their minor allele frequencies (MAF) are greater than $0.05$ \citep{Gibson:2012aa, Cirulli:2010aa}, they may have small effects on the exposure.
Third, when SNPs are {\em rare variants}, i.e., their MAF are less than $0.05$, 
they may have small or modest effects on the exposure, but their genetic variances are  small so that their total contribution to the variation of the exposure is small.

In this article, we focus on a popular setup in MR known as two-sample summary-data MR, 
where two sets of summary statistics are obtained from two GWASs \citep{Pierce:2013aa}. 
The first set  from one GWAS consists of $\hat{\gamma}_j$, the estimated marginal association between the $j$th SNP and the exposure, and its standard error (SE) $\hat\sigma_{Xj}$, $j=1,...,p$. 
The second set   from another GWAS consists of $\hat{\Gamma}_j$, the estimated marginal association between the $j$th SNP and the outcome, and its SE $\hat\sigma_{Yj}$, 
$j=1,...,p$. 
In MR,  the main parameter of interest is the exposure effect  on the outcome, denoted as $\beta_0$, which can be estimated by $\hat{\beta}_j  = \hat \Gamma_j / \hat \gamma_j$ for each $j$. However,  $\hat{\beta}_j $ may be seriously biased and unstable
when SNP $j$ is weak because $\hat \gamma_j$ is close to zero  \citep{Sawa:1969aa}. This leads to several modern MR methods that aggregate many possibly unstable estimators $\hat{\beta}_j$s using a meta-analysis strategy \cite{Burgess:2013aa, Bowden:2015aa, Bowden:2016aa, Hartwig:2017aa}. The most popular among them is the inverse-variance weighted (IVW) estimator considered in \citep{Burgess:2013aa}, 
\begin{equation}
	\hat{\beta}_{{\rm IVW}} = 
	\frac{\sum_{j=1}^p \hat w_j \hat{\beta}_j}{\sum_{j=1}^p \hat w_j},  \qquad 
	\hat\beta_j = \frac{\hat{\Gamma}_j}{\hat\gamma_j}, \qquad \hat w_j = \frac{\hat\gamma_j^2}{\hat\sigma_{Yj}^2}. \label{eq: ivw} 
\end{equation}
A variant of the typical IVW estimator (\ref{eq: ivw}) is to only include SNPs that pass the genome-wide significance threshold in  a third independent GWAS, known as the selection dataset, inside the IVW estimator (\ref{eq: ivw}); we call this the IVW estimator with screening and we formally define it in equation (\ref{eq: llivw}). Despite their popularity and widespread usage, very little is known about the theoretical properties of the IVW estimator, with or without screening. Specifically, in common MR setups where there are many weak IVs, it's unknown whether the IVW estimator $\hat{\beta}_{ {\rm IVW}}$ in (\ref{eq: ivw}) or the screening counterpart are consistent or asymptotically normal.


\subsection{Prior Work and Our Contributions}
Prior work on weak IVs in MR is vast, but mostly limited to numerical studies \cite{Burgess:2011ab, Burgess:2011aa, Burgess:2012aa, Pierce:2013aa, Burgess:2013aa}.  In econometrics, the issue of weak IVs has been studied, but the results are limited to one-sample individual-data settings; see \citet{Stock:2002aa} and \citet{NBERt0313} for surveys. Recent papers by \citet{zhao2018statistical, Zhao:2019aa} and \citet{Bowden:2019aa} proposed new two-sample summary-data MR estimators that are robust to many weak IVs. Also,  \citet{wang2019weakinstrument} proposed new tests for two-sample summary-data MR when the number of instruments is fixed, but the instruments are arbitrarily weak. To the best of our knowledge, however, 
no work has addressed the theoretical properties of the IVW estimator in (\ref{eq: ivw}) or the  IVW estimator with screening in (\ref{eq: llivw}),  arguably the most popular estimators in MR, under a typical MR setting with many weak IVs.

Our overarching goal is to characterize the properties of the IVW estimators and to propose some improvements over them. The main contributions can be divided into four parts. 
\begin{enumerate}
	\item We provide an asymptotic phase transition analysis of the IVW estimator 
	(\ref{eq: ivw})  in terms of IVs' average strength defined in (\ref{kappa}). We conduct a similar exercise for the  IVW estimator  (\ref{eq: llivw}) with screening. 
	\item We propose a simple way to improve the IVW estimator   (\ref{eq: ivw}) under many weak IVs, which we call the debiased IVW (dIVW) estimator. It is explicitly formulated as the IVW estimator multiplied by a bias correction factor; see equation \eqref{eq: dIVW}.  Unlike the IVW estimator, the dIVW estimator is robust to many weak IVs. In fact, even without screening for strong IVs, the dIVW estimator is generally consistent and asymptotically normal. As such, the dIVW estimator does not need a third independent GWAS to select  instruments to mitigate the ``winner's curse'' bias \cite{Zhao:2019aa, Burgess:2013aa}.  Finally, our dIVW estimators stand in contrast to recent optimization-based estimators (e.g., \citep{zhao2018statistical,Zhao:2019aa}) that are robust to many weak IVs, but are arguably more complex than the dIVW estimators. As an example, the optimization-based estimator in \cite{Zhao:2019aa} may not have  unique estimates in every data generating scenario.
	\item
	We depart from past theoretical studies in MR by considering the case where $\hat\sigma_{Xj}^2$ and $\hat\sigma_{Yj}^2$ are estimates, not respectively equal to $\sigma_{Xj}^2$ and $\sigma_{Yj}^2$, the true but unknown variances of $\hat\gamma_j$ and $\hat\Gamma_j$. We assess the impact of using estimated variances in the properties of the dIVW and the IVW estimators.
	\item To improve the efficiency of the dIVW estimator, we propose two  methods to select ``efficiency-increasing'' SNPs for the dIVW estimator,  when a third GWAS dataset is available for screening. The first one is straightforward and capable of eliminating IVs with no association to the exposure. The second one is data-driven and iteratively selects a threshold, leading to the most efficient estimator in a given class. 
\end{enumerate}

The rest of this paper is organized as follows. Section \ref{sec: notation and setup} introduces notation, setup,  and assumptions. Section \ref{sec: IVW} characterizes the consistency and asymptotic normality of the IVW estimator and the IVW estimator with screening. Section \ref{sec: dIVW} proposes the dIVW estimator and dIVW estimator with screening for improving efficiency, and
establishes their asymptotic properties. Also included in Section \ref{sec: dIVW} are  two methods for selecting a threshold for screening,  and an extension of the dIVW estimator to balanced horizontal pleiotropy \cite{Solovieff:2013aa, Verbanck:2018aa, Hemani:2018aa}.  Results from simulation studies and a real data analysis are presented in Sections \ref{sec: simu} and \ref{sec: real}, respectively. We conclude with a summary and discussion. Technical proofs and some additional results are in the Supplementary Material.

\section{Notation, Setup, and Assumption}
\label{sec: notation and setup}
As part of a common data cleaning and pre-processing step in MR studies,  millions of SNPs are de-correlated through linkage disequilibrium pruning or clumping via software \cite{Hemani:2018ab}. We assume that this initial step produces  $p$ independent SNPs, represented by bounded and mutually independent random variables $Z_1,..., Z_p$.

Let $X$ be the exposure and $Y$ be the continuous outcome. 
Following the two-sample summary-data MR literature \cite{Pierce:2013aa, Bowden:2017aa, zhao2018statistical}, we assume models 
\begin{align}
	&X=\sum_{j=1}^p \gamma_j Z_j + \eta_X U+E_X, \label{eq: exposure-IV}\\
	&Y=\beta_0 X 
	+\eta_Y U+E_Y, \label{eq: outcome-exposure}
\end{align} 
where $\eta_X$, $\eta_Y$, $\beta_0$, $\gamma_1,...,\gamma_p$ are unknown parameters, $U$ is an unmeasured confounder independent of $Z_1,...,Z_p$, $E_X$ and $E_Y$ are mutually independent random noises that are also independent of $ (Z_1, \dots, Z_p, U)$, 
and $U$, $E_X$ and $E_Y$ have finite 4th order moments. 

The goal in an MR analysis is to estimate the effect of the exposure $X$ on the  outcome  $Y$, which is represented by $\beta_0$. 
Since the unobserved $U$ is  related with $X$, estimating $\beta_0$ using only model (\ref{eq: outcome-exposure})  with ordinary least squares leads to biased estimates. Instead, an MR approach to estimating $\beta_0$ typically assumes a model for $X$ like (\ref{eq: exposure-IV}) and makes three core assumptions
\cite{ Davey-Smith:2003aa, Smith:2004aa,  Lawlor:2008aa}. The first assumption is that instruments are associated with the exposure $X$, which amounts to $\gamma_j$'s in model (\ref{eq: exposure-IV}) not simultaneously  being zero.  We call an instrument with $\gamma_j \neq 0$ to be a relevant or non-null IV and an instrument with $\gamma_j = 0$ to be an irrelevant or  null IV. The second assumption is that instruments are independent of the unmeasured confounder $U$; this is encoded by assuming $ U$ is independent of $Z_{1},\ldots,Z_{p}$ in (\ref{eq: exposure-IV})-(\ref{eq: outcome-exposure}). 
The third and last core assumption is that instruments affect the outcome $Y$ only through the exposure $X$; this is true under  (\ref{eq: exposure-IV})-(\ref{eq: outcome-exposure}) since  (\ref{eq: outcome-exposure}) does not involve $Z_j$'s. 
However, this last assumption may be violated in some studies; see Section \ref{subsec: pleiotropy} for one example based on balanced horizontal pleiotropy. For more detailed discussions on the core assumptions, models, and their implication in MR, see \cite{Didelez:2007aa} and \cite{Bowden:2017aa}. 

In classic IV settings, estimation of $\beta_0$ is based on $n$ independent and identically distributed (i.i.d.) observations of $(Z_1,\ldots,Z_p, X, Y)$. 
In two-sample MR, estimation is based on  $n_X$ i.i.d. observations of $(X, Z_1, \dots, Z_p)$ from the exposure dataset and $n_Y$ i.i.d. observations of $(Y, Z_1, \dots, Z_p)$ from the outcome dataset. 
The two datasets are assumed to be independent of each other and we never jointly observe $Y$ and $X$. 

In two-sample summary-data MR, which is the most popular data setting in MR and the setting considered in this paper, only summary statistics from the exposure and outcome datasets are available for analysis, not the individual-level data. Specifically, from the exposure dataset, we have $\hat{\gamma}_j$, the ordinary least square estimate from a linear regression of $X$ on $Z_j$, and its  SE $\hat\sigma_{Xj}$, $j=1,...,p$. From the outcome dataset, we obtain $\hat{\Gamma}_j$, the ordinary least square estimate from a linear regression of $Y$ on $Z_j$, and its SE $\hat\sigma_{Yj}$, $j=1,...,p$. Note that models (\ref{eq: exposure-IV})-(\ref{eq: outcome-exposure}) and the independence of instruments  imply that $\hat{\gamma}_j$  consistently estimates $\gamma_j$ and $\hat{\Gamma}_j$ consistently estimates $\beta_0 \gamma_j$ for each $j$.

Many of the $p$ SNPs in (\ref{eq: exposure-IV}), produced by the de-correlation step, could be  potentially weak IVs with zero or small values of $\gamma_j^2 {\rm Var}(Z_j)$.  
It is therefore common in MR studies to screen  IVs and  include only selected  IVs in the IVW estimator.
To avoid selection bias or the ``winner's curse'', it is usually recommended to use a third independent dataset of size $n_{X^*}$ under model (\ref{eq: exposure-IV}), called the selection dataset, solely for screening IVs
\cite{Burgess:2013aa,Zhao:2019aa}.
Typically, because only summary statistics are available from the selection dataset, thresholding is applied to screen out SNPs in (\ref{eq: exposure-IV}) with the smallest marginal z-scores calculated from the summary statistics in the selection dataset. Future research will analyze a more sophisticated IV selection that simultaneously incorporates  de-correlation and IV strength and its effects on estimation.

Formally, the IVW estimator with screening is a hard-thresholding estimator with a z-score threshold $\lambda \geq 0$, 
\begin{equation}
	\hat{\beta}_{\lambda, \rmivw} =\frac{\sum_{j\in S_\lambda }\hat w_j \hat{\beta}_j}{\sum_{j\in S_\lambda } \hat{w}_j}  = \frac{\sum_{j\in S_\lambda }\hat{\Gamma}_j \hat{\gamma}_j\hat\sigma_{Yj}^{-2}}{\sum_{j\in S_\lambda } \hat{\gamma}_j^2\hat\sigma_{Yj}^{-2}} , \quad S_\lambda = \{ j: | \hat\gamma_j^* | > \lambda \hat\sigma_{Xj}^*  \} \label{eq: llivw}
\end{equation}
where $\hat{\gamma}^*_j$ and $ \hat\sigma_{Xj}^* $ are counterparts of $\hat\gamma_j$ and $\hat\sigma_{Xj}$  computed from the selection dataset. If $\lambda = 0$, then $\hat{\beta}_{\lambda,\rmivw}$ reduces to the original IVW estimator $\hat{\beta}_{\rmivw}$ in \eqref{eq: ivw}. If $\lambda > 0$, only instruments with absolute value of z-scores higher than $\lambda$ are selected into the IVW estimator. A value for $\lambda$ that is used widely in MR is the genome-wide significance threshold   $\lambda \approx 5.45$, which corresponds to  screening out IVs whose p-values associated with  $\hat{\gamma}_{j}$'s are above the genome-wide significance level  $5 \times 10^{-8}$. More discussions about  this 
genome-wide significance level 
are given in later sections. 


We make the following two assumptions for our asymptotic analysis.
\begin{assump} \label{assump: 1} 
	The sample sizes  $n_{X}$ and $n_Y$ (and $n_{X^*}$ of the selection dataset if it exists) diverge to infinity with the same order.  
	The number of SNPs, $p$, diverges to infinity. 
\end{assump}
The conditions on sample sizes and $p$ are reasonable in our setup as many modern GWASs involve 10 to 100 thousands of participants and a few thousands of SNPs  are typically found to be independent after the de-correlation pre-processing step. 


The next assumption about summary statistics  is also assumed in 
\cite{zhao2018statistical}. 

\begin{assump} \label{assump: 2}
	$\{ \hat{\gamma}_j,\hat{\Gamma}_j, \hat{\gamma}_j^*, j=1,...,p \}$ are mutually independent and, for  every $j$,  
	$\hat{\gamma}_j\sim N(\gamma_j,\sigma_{Xj}^2)$, $\hat{\Gamma}_j\sim N(\beta_0\gamma_j,\sigma_{Yj}^2)$,  and $\hat{\gamma}^*_j\sim N(\gamma_j,\sigma_{Xj}^{*2})$. The variance ratios $\sigma_{Xj}^2/\sigma_{Yj}^2$ and $\sigma_{Xj}^2/\sigma^{*2}_{Xj}$ for all $j$ are bounded away from 0 and infinity. 
\end{assump}

We briefly assess the plausibility of Assumption \ref{assump: 2}.
With large sample sizes under Assumption \ref{assump: 1}, the normality of $\hat{\Gamma}_j$, $\hat{\gamma}_{j}$, and $\hat{\gamma}_{j}^*$ is plausible. 
The two-sample MR data structure guarantees the independence of $\hat{\gamma}_j$'s and $\hat{\Gamma}_j$'s  (and $ \hat{\gamma}_j^* $'s if they exist). Also, two-sample MR prunes/clumps SNPs to be far apart in genetic distance and each SNP only explains a very small proportion of the total variance in the exposure and outcome variables, making the independence within  $\hat{\gamma}_j$'s, $\hat{\Gamma}_j$'s (and $\hat{\gamma}^*_j$'s if they exist) as well as the boundedness of variance ratios likely. 
Furthermore, if  $Y$ is binary, Assumption \ref{assump: 2} is a first-order local approximation of a logistic outcome model \citep{zhao2018statistical, Zhao:2019aa}. 



We define the average strength of $p$ IVs as 
\begin{equation}
	\kappa=\frac{1}{p} \sum_{j=1}^p \frac{\gamma_j^2}{\sigma_{Xj}^2} 
	\label{kappa}
\end{equation}
where  $\gamma_j/\sigma_{Xj}$ is a normalized effect of SNP $j$ on $X$. 
If $\kappa$ is small, SNPs are, on average, weakly associated with the exposure. If $\kappa$ is large, SNPs are, on average, strongly associated with the exposure. 
We also define the average strength of IVs for IVW estimators with screening,
\begin{equation}
	\kappa_\lambda=\frac{1}{p_\lambda}\sum_{j =1}^p\frac{\gamma_j^2}{\sigma_{Xj}^2} q_{\lambda,j},  
	\label{kappa1}
\end{equation}
where $q_{\lambda,j}=P(|\hat{\gamma}_j^*|>\lambda \sigma_{Xj}^*)$  and 
$p_\lambda=\sum_{j=1}^p q_{\lambda,j}$. Clearly, if $\lambda=0$ so that all the IVs are included in the IVW estimator,  $q_{\lambda,j}$, $\kappa_\lambda$, and $p_{\lambda}$ become 1, $\kappa$, and $p$, respectively. As we will see in Sections \ref{sec: IVW}-\ref{sec: dIVW}, the limiting values of $\kappa$ and $\kappa_\lambda$ play a key role in characterizing the asymptotic properties of the IVW estimators.

The parameters $\kappa_\lambda$, $\kappa$,  and $p_\lambda$
can be estimated by  $\hat\kappa_\lambda$, $\hat\kappa$, and $\hat{p}_\lambda$, respectively, where 
\begin{equation}\label{hatkappa}
	\hat{\kappa}_\lambda=  \frac{1}{\hat{p}_{\lambda}} \sum_{j \in S_\lambda }
	\frac{\hat{\gamma}_j^2 }{\hat\sigma_{Xj}^2} -1 , \qquad \hat\kappa = \hat\kappa_0 , \qquad 
	\hat{p}_\lambda= \mbox{the size of $S_\lambda$}  . \end{equation}
We later show how to use these estimators in practice to check the theoretical conditions underlying the properties of the IVW estimator with or without screening.

\section{Properties of the IVW Estimators}
\label{sec: IVW}
We study the consistency and asymptotic normality of the IVW estimators described in Sections \ref{sec: intro}-\ref{sec: notation and setup} under different limiting values of $\kappa$ and $\kappa_\lambda$ defined in (\ref{kappa}) and (\ref{kappa1}).

In the MR literature, it is common to assume 
that the standard deviations (SDs) $\sigma_{Xj}$, $\sigma_{Yj}$, and $\sigma_{Xj}^{*}$ 
(in Assumption \ref{assump: 2}) are known (e.g., \cite{Burgess:2013aa, Bowden:2015aa, zhao2018statistical, Zhao:2019aa, Qi:2019aa}) so that 
$\hat\sigma_{Yj} = \sigma_{Yj}$ and $\hat\sigma^*_{Xj} = \sigma^*_{Xj}$ are used in (\ref{eq: ivw}) and  (\ref{eq: llivw}).
This is  motivated by the fact that the sample sizes $ n_X$, $n_Y$, and $ n_X^* $ are usually very large in modern GWASs and the aforementioned references show empirically that such approximation works well  in practice. 
In this section, we confine ourselves to the situation where the SDs are known and $\hat\sigma_{Yj} = \sigma_{Yj}$ and $\hat\sigma^*_{Xj} = \sigma^*_{Xj}$. The study of more general and realistic case where $\hat\sigma_{Yj} \neq \sigma_{Yj}$ and $\hat\sigma^*_{Xj} \neq \sigma^*_{Xj}$ is deferred to Section \ref{sec: dIVW}.

In what follows, $\xrightarrow{P}$ denotes convergence in probability and 
$\xrightarrow{D}$ denotes convergence in distribution. 

\begin{thm} \label{theo: p}
	Assume models (\ref{eq: exposure-IV})-(\ref{eq: outcome-exposure}) and Assumptions \ref{assump: 1}-\ref{assump: 2}.  Also, assume that 
	$\hat\sigma_{Yj} = \sigma_{Yj}$ and $\hat\sigma_{Xj}^* = \sigma_{Xj}^*$ in (\ref{eq: ivw}) and (\ref{eq: llivw}). When $\beta_0 \neq 0$, 
	we have the following conclusions for either $\lambda =0$ or $\lambda >0$. 
	\begin{enumerate}
		\item[(a)] If $\kappa_\lambda/p_\lambda\rightarrow \infty$,  $\max_j (\gamma_j^2 \sigma_{Xj}^{-2} q_{\lambda,j}) / (\kappa_\lambda p_\lambda) \rightarrow 0$, and when $\lambda \neq 0$, $\kappa_\lambda \sqrt{p_\lambda}/\lambda^2$ $\rightarrow\infty$, then $\hat{\beta}_{\lambda,\rmivw}$ is consistent and asymptotically normal, i.e., 
		\begin{equation}
			V_{\lambda, \rmivw}^{-1/2} \left(\hat{\beta}_{\lambda,\rmivw} -\beta_0\right)\xrightarrow{D}N(0,1), \label{normal}
		\end{equation}
		where 
		\[V_{\lambda, \rmivw}=\frac{\sum_{j =1}^p [(w_j+\rho_j)q_{\lambda,j}+\beta_0^2\rho_j(w_j+3\rho_j)q_{\lambda,j}-\beta_0^2\rho_j^2q_{\lambda,j}^{2}]}{[\sum_{j=1}^p(w_j+\rho_j)q_{\lambda,j}]^2},
		\] 
		$w_j = \gamma_j^2 / \sigma_{Yj}^2$, and $\rho_j= \sigma_{Xj}^2/\sigma_{Yj}^2$, $j=1,...,p$.
		\item [(b)]	If $\kappa_\lambda\rightarrow \infty$ and when $\lambda \neq 0$, $\kappa_\lambda \sqrt{p_\lambda}/\lambda^2\rightarrow\infty$, then $\hat{\beta}_{\lambda,\rmivw} \xrightarrow{P} \beta_0$.
		\item[(c)] If $\kappa_\lambda\rightarrow c>0$ and $\sqrt{p_\lambda}/\max(1,\lambda^2)\rightarrow\infty$, then $$\hat{\beta}_{\lambda, \rmivw}-\beta_0\frac{\sum_{j =1}^p w_j q_{\lambda, j}}{\sum_{j =1}^p (w_j+ \rho_{j})q_{\lambda, j}}\xrightarrow{P} 0.$$
		\item[(d)] If $\kappa_\lambda\rightarrow 0$ and $\sqrt{p_\lambda}/\max(1,\lambda^2)\rightarrow\infty$, then $\hat{\beta}_{\lambda, \rmivw}\xrightarrow{P} 0$.
	\end{enumerate}
	When $\beta_0 = 0$, we have the following conclusion for either $\lambda =0$ or $\lambda >0$. 
	\begin{enumerate}
		\item[(e)] 	If 
		$\max_j (\gamma_j^2 \sigma_{Xj}^{-2} q_{\lambda,j}) / (\kappa_\lambda p_\lambda+ p_\lambda) \rightarrow 0$ and when $ \lambda\neq 0 $, $(\kappa_\lambda \sqrt{p_\lambda}+ \sqrt{p_\lambda})/\max(1, \lambda^2)$ $\rightarrow\infty$, then 
		$\hat{\beta}_{\lambda,\rmivw}\xrightarrow{P} 0$ and
		$V_{\lambda, \rmivw }^{-1/2} \hat{\beta}_{\lambda,\rmivw} 
		\xrightarrow{D}N(0,1)$, where 
		$ V_{\lambda , \rmivw}$ is the same as that in (\ref{normal}) with $\beta_0 = 0$. 
	\end{enumerate}
\end{thm}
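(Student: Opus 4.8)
The plan is to write the screened IVW estimator (with $\beta_0=0$) as a ratio $\hat\beta_{\lambda,\rmivw}=N/D$, where, folding the selection indicator into each summand,
\[
 N=\sum_{j=1}^p\frac{\hat\Gamma_j\hat\gamma_j}{\sigma_{Yj}^2}R_j,\qquad D=\sum_{j=1}^p\frac{\hat\gamma_j^2}{\sigma_{Yj}^2}R_j,\qquad R_j=I(|\hat\gamma_j^*|>\lambda\sigma_{Xj}^*).
\]
The key structural fact, which I would exploit throughout, is that because the selection dataset is independent of the exposure and outcome datasets, $R_j$ is independent of $(\hat\Gamma_j,\hat\gamma_j)$ and the summands are independent across $j$, with $E[R_j]=q_{\lambda,j}$. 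Writing $w_j=\gamma_j^2/\sigma_{Yj}^2$, $\rho_j=\sigma_{Xj}^2/\sigma_{Yj}^2$ and $a_j=\gamma_j^2/\sigma_{Xj}^2$ (so $\kappa_\lambda p_\lambda=\sum_j a_jq_{\lambda,j}$), a direct moment computation using $\hat\gamma_j\sim N(\gamma_j,\sigma_{Xj}^2)$ and, crucially for this case, $\hat\Gamma_j\sim N(0,\sigma_{Yj}^2)$ gives $E[N]=0$, $E[D]=\sum_j(w_j+\rho_j)q_{\lambda,j}$, and $\var(N)=E[D]=V_{\lambda,\rmivw}^{-1}$. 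Since $\rho_j$ is bounded away from $0$ and $\infty$ and $w_j=a_j\rho_j$, we have $E[D]\asymp\kappa_\lambda p_\lambda+p_\lambda$. The strategy is then to show the denominator concentrates, $D/E[D]\xrightarrow{P}1$, establish a central limit theorem for the numerator, $N/\sqrt{\var(N)}\xrightarrow{D}N(0,1)$, and combine by Slutsky's theorem.

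For the denominator I would use the second-moment method: independence gives $\var(D)=\sum_j\sigma_{Yj}^{-4}\var(\hat\gamma_j^2R_j)\le\sum_j\sigma_{Yj}^{-4}E[\hat\gamma_j^4]q_{\lambda,j}$, and expanding the Gaussian moment $E[\hat\gamma_j^4]=\gamma_j^4+6\gamma_j^2\sigma_{Xj}^2+3\sigma_{Xj}^4$ (with $\rho_j$ bounded) shows this is of order $\sum_j(a_j^2+a_j+1)q_{\lambda,j}$. For the numerator I would verify the Lyapunov condition with fourth moments: for $T_j=\hat\Gamma_j\hat\gamma_j\sigma_{Yj}^{-2}R_j$ one has $E[T_j^4]=3\sigma_{Yj}^{-4}E[\hat\gamma_j^4]q_{\lambda,j}$, again of order $(a_j^2+a_j+1)q_{\lambda,j}$. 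Thus both consistency and normality reduce to showing
\[
 \frac{\sum_{j=1}^p(a_j^2+a_j+1)q_{\lambda,j}}{\bigl(\sum_{j=1}^p(a_j+1)q_{\lambda,j}\bigr)^2}\to 0 .
\]
The pieces coming from $\sum_j a_jq_{\lambda,j}=\kappa_\lambda p_\lambda$ and $\sum_j q_{\lambda,j}=p_\lambda$ are $O\!\left(1/[(\kappa_\lambda+1)p_\lambda]\right)$, which vanishes because the divergence hypothesis forces $(\kappa_\lambda+1)\sqrt{p_\lambda}\to\infty$, whence also $E[D]\to\infty$.

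The main obstacle is the term $\sum_j a_j^2q_{\lambda,j}$, and controlling it is exactly where the two hypotheses of part (e) enter. I would split the SNPs by the normalized selection strength $\mu_j^*=\gamma_j/\sigma_{Xj}^*$, which satisfies $\mu_j^{*2}\asymp a_j$ by the bounded variance-ratio assumption and gives $q_{\lambda,j}=\Phi(\mu_j^*-\lambda)+\Phi(-\mu_j^*-\lambda)$. For weak SNPs with $|\mu_j^*|\le\lambda$, i.e.\ $a_j\lesssim\lambda^2$, one bounds $a_j^2q_{\lambda,j}\lesssim\lambda^2 a_jq_{\lambda,j}$, so their total is $\lesssim\lambda^2\kappa_\lambda p_\lambda$; dividing by $(E[D])^2$ leaves a term of order $\lambda^2/[(\kappa_\lambda+1)p_\lambda]$, which vanishes precisely under $(\kappa_\lambda\sqrt{p_\lambda}+\sqrt{p_\lambda})/\max(1,\lambda^2)\to\infty$. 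For strong SNPs with $|\mu_j^*|>\lambda$, the tail bound $q_{\lambda,j}\ge\Phi(|\mu_j^*|-\lambda)\ge1/2$ gives $a_j\le2a_jq_{\lambda,j}$, so $a_j^2q_{\lambda,j}\le2\,a_jq_{\lambda,j}\max_k(a_kq_{\lambda,k})$ and their total is $\lesssim\max_k(a_kq_{\lambda,k})\,\kappa_\lambda p_\lambda$; dividing by $(E[D])^2$ leaves a term $\le\max_k(a_kq_{\lambda,k})/(\kappa_\lambda p_\lambda+p_\lambda)$, which vanishes by the max hypothesis. When $\lambda=0$ all $R_j\equiv1$ and $q_{\lambda,j}\equiv1$, only the strong-regime argument is needed, and the $\lambda^2$ hypothesis is vacuous. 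The same two-regime device controls $\var(D)$ and the Lyapunov numerator.

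Finally I would assemble the pieces. Consistency follows because $\var(N/E[D])=1/E[D]\to0$, so $N/E[D]\xrightarrow{P}0$ and $\hat\beta_{\lambda,\rmivw}=(N/E[D])/(D/E[D])\xrightarrow{P}0$. For the normal limit, writing $V_{\lambda,\rmivw}^{-1/2}\hat\beta_{\lambda,\rmivw}=(N/\sqrt{\var(N)})/(D/E[D])$ and applying Slutsky's theorem with the numerator CLT and $D/E[D]\xrightarrow{P}1$ yields $V_{\lambda,\rmivw}^{-1/2}\hat\beta_{\lambda,\rmivw}\xrightarrow{D}N(0,1)$, as claimed.
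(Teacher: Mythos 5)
Your write-up only proves part (e). You announce at the outset that you take $\beta_0=0$, and your whole computation hinges on that: $E[N]=0$ follows from $\hat\Gamma_j\sim N(0,\sigma_{Yj}^2)$, and $\var(N)=E[D]$ is the $\beta_0=0$ specialization of $V_{\lambda,\rmivw}^{-1}$. Parts (a)--(d), which are the substance of the theorem, are never addressed, and they cannot be obtained by the same bookkeeping: when $\beta_0\neq 0$ one has $E[\hat\Gamma_j\hat\gamma_jR_j]=\beta_0\gamma_j^2q_{\lambda,j}$ while $E[D]=\sum_j(w_j+\rho_j)q_{\lambda,j}$, so the centered numerator $\sum_j(\hat\Gamma_j\hat\gamma_j-\beta_0\hat\gamma_j^2)\sigma_{Yj}^{-2}R_j$ has mean $-\beta_0\sum_j\rho_jq_{\lambda,j}$, of order $p_\lambda$. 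The entire phase transition in the theorem --- asymptotic normality around $\beta_0$ only when $\kappa_\lambda/p_\lambda\to\infty$ (part (a), where the bias-to-SD ratio $\Theta(p_\lambda^{1/2}/(1+\kappa_\lambda)^{1/2})$ must vanish), consistency only when $\kappa_\lambda\to\infty$ (part (b)), and the explicit attenuation limits in (c)--(d) --- comes from tracking this bias term against the standard deviation, plus variance formulas that acquire extra $\beta_0^2\rho_j(w_j+3\rho_j)q_{\lambda,j}-\beta_0^2\rho_j^2q_{\lambda,j}^2$ contributions (note $\var(\hat\Gamma_j\hat\gamma_jR_j)$ is no longer a product of second moments once $E[\hat\Gamma_j\hat\gamma_jR_j]\neq 0$). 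None of this appears in your proposal, so as a proof of the stated theorem it has a major gap.

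For the part you do treat, the argument is essentially correct and takes a mildly different route from the paper. The paper proves the numerator CLT by Lindeberg and controls the denominator via $\var(\hat\gamma_j^2\sigma_{Yj}^{-2}R_j)$, whose dangerous piece is $w_j^2q_{\lambda,j}(1-q_{\lambda,j})$; it kills that piece with a Gaussian tail bound on $1-q_{\lambda,j}$ when $|\gamma_j|/\sigma_{Xj}^*-\lambda\ge\lambda$ and the crude bound $w_j^2\le c\lambda^4$ otherwise, yielding $O(\lambda^4p_\lambda+p_\lambda)$ and needing only $(\kappa_\lambda\sqrt{p_\lambda}+\sqrt{p_\lambda})/\max(1,\lambda^2)\to\infty$. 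You instead keep the cruder quantity $\sum_ja_j^2q_{\lambda,j}$, verify Lyapunov with fourth moments (so one ratio controls both the denominator concentration and the CLT, a nice economy), and split on $|\mu_j^*|\lessgtr\lambda$, using $q_{\lambda,j}\ge 1/2$ on the strong side together with the hypothesis $\max_j(a_jq_{\lambda,j})/(\kappa_\lambda p_\lambda+p_\lambda)\to 0$. That works under the assumptions of part (e), though it leans on the max-condition where the paper's denominator lemma does not; this difference would matter if you tried to reuse your denominator bound for parts (c)--(d), where the max-condition is not assumed.
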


We now elaborate the results in Theorem  \ref{theo: p}.

First, consider the case of $\lambda =0$, i.e., the IVW estimator $\hat\beta_{\rmivw}$  in (\ref{eq: ivw}) without screening.
Part (a)  of Theorem \ref{theo: p} is the only regime where $\hat\beta_{\rmivw}$  is consistent and asymptotically normal when $\beta_0 \neq 0$. The main condition in Theorem  \ref{theo: p}(a) under $\lambda =0$,  
$\kappa/p \rightarrow \infty$,  means that the average IV strength $\kappa$ diverges to infinity at a  rate faster than $p$, which is   unlikely in MR studies.
In fact,  it is shown in the Supplementary Material that,   approximately, 
$ \kappa /p \leq  n_X / p^2  $ and thus,
$\kappa / p \to \infty$ implies $n_X/ p^2  \to \infty$. This rate is unrealistic in a typical MR study where the number of de-correlated SNPs $p $ is one thousand and, even when $n_X $ is as large as one million, we still have $n_X /p^2 = 1$. 
To explain why  $\kappa / p \to \infty$  is needed for 
$\hat{\beta}_{\rmivw}$ to be asymptotically normal with mean $\beta_0$,
consider 
$$
\hat{\beta}_{ \rmivw}-\beta_0=\frac{\sum_{j =1}^p (\hat{\Gamma}_j\hat{\gamma}_j-\beta_0\hat{\gamma}_j^2)\sigma_{Yj}^{-2}}{\sum_{j =1}^p \hat{\gamma}_j^2\sigma_{Yj}^{-2}} $$
whose numerator and denominator have expectations $-\beta_0  \sum_{j=1}^p \rho_j$ and  \linebreak $\sum_{j=1}^p (w_j+v_j)$, respectively, 
as  $E(\hat{\Gamma}_j\hat{\gamma}_j )= E(\hat{\Gamma}_j)E(\hat{\gamma}_j )= \beta_0\gamma_j^2$ and $E(\hat{\gamma}_j^2)=\gamma_j^2+\sigma_{Xj}^2$. 
Thus, as noticed by \cite{zhao2018statistical}, the asymptotic bias of $\hat{\beta}_{\rmivw}$ is
$$ \mbox{abias} (\hat{\beta}_{\rmivw} )= 
\frac{-\beta_0  \sum_{j=1}^p \rho_j}{\sum_{j=1}^p (w_j+v_j)}.$$
It is shown in the Supplementary Material that 
$$V_{ 0, \rmivw}^{-1/2}\{\hat{\beta}_{\rmivw} -\beta_0  - \mbox{abias} (\hat{\beta}_{\rmivw} ) \}
\xrightarrow{D}N(0,1), $$
where $V_{0, \rmivw} $ 
is the asymptotic variance of $\hat\beta_{\rmivw}$ given in (\ref{normal}) with $\lambda =0$. 
When $\beta_0 \neq 0$, this means that $V_{0, \rmivw}^{-1/2}(\hat{\beta}_{\rmivw} -\beta_0  )\xrightarrow{D}N(0,1)
$ cannot hold unless  
\begin{equation}
	\frac{\mbox{abias($\hat{\beta}_{\rmivw}$)} }{V_{0, \rmivw}^{1/2} }=
	\frac{- \beta_0  \sum_{j=1}^p \rho_j}{[ \sum_{j=1}^p \{ (w_j+\rho_j )+\beta_0^2\rho_j(w_j+2\rho_j )\} ]^{1/2}}  \to 0  \label{conda}
\end{equation}
i.e., the asymptotic bias of $\hat{\beta}_{\rmivw}$ tends to $0$ faster than the 
standard deviation of $\hat{\beta}_{\rmivw}$.
Since the numerator of the right side of (\ref{conda}) has  order $p$ and the denominator of the right  side has order $\{\max ( p, \kappa p )\}^{1/2}$,
(\ref{conda}) holds if  $\kappa / p \to \infty$. 
We can easily construct an example in which   (\ref{conda}) does not hold when 
$\kappa / p \not \to \infty$; in fact, the quantity in (\ref{conda})  diverges to infinity when $\kappa$ is bounded. 

In short, our theory explains  
the numerical observations in the literature concerning poor normal approximation
to  $\hat{\beta}_{\rmivw} -\beta_0 $ in the presence of weak IVs. 

Second, consider the case where $\lambda >0$ in part (a) of Theorem \ref{theo: p}. Screening with $\lambda >0$ is a way to relax the condition required for the asymptotic normality of IVW estimator. 
Specifically, if $\lambda >0$, $ \kappa/p \to \infty$ in part (a) is replaced by
$\kappa_\lambda/p_\lambda\rightarrow \infty$ and $\kappa_\lambda \sqrt{p_\lambda}/\lambda^2\rightarrow\infty$. 
The Supplementary Material shows that $\kappa_\lambda$ is approximately increasing in $\lambda$ 
and thus $\kappa_\lambda / p_\lambda \to \infty$ is weaker than $\kappa /p \to \infty$.
A similar analysis (Supplementary Material) shows that  the counterpart of the asymptotic bias and standard deviation ratio in (\ref{conda}) for $\hat\beta_{\lambda , \rmivw}$ is of the order $p_\lambda^{1/2}/(1+ \kappa_\lambda)^{1/2}$, which tends to 0 as $\kappa_\lambda / p_\lambda \to \infty$.
In short, screening reduces the bias but increases the standard deviation of the IVW estimator (\ref{eq: ivw}), which is how 
$\hat\beta_{\lambda , \rmivw}$ becomes consistent and asymptotically normal. 



Third, if we forgo asymptotic normality, part (b) of Theorem \ref{theo: p} shows that the IVW estimator, with or without screening, is consistent for non-zero $\beta_0$ if
the  average IV strength $\kappa_\lambda$ diverges to infinity.
Although this condition is weaker than $\kappa_\lambda /p_\lambda \to \infty$ in part (a)  of Theorem \ref{theo: p},  it is still unlikely to be satisfied in typical MR studies, 
not to mention that the consistency of IVW estimators is not enough for assessing variability or making statistical inference on $\beta_0$.  
To complement part (b), parts (c) and (d) of Theorem \ref{theo: p} show that if the average IV strength   $\kappa_\lambda$ does not diverge 
to infinity, a common scenario in MR studies with many weak and null IVs, the IVW estimators are inconsistent and biased towards $0$. 


Finally, the last part (e) of Theorem \ref{theo: p} is for the special case of $\beta_0 = 0$, in which the  weak IV bias of $\hat{\beta}_{\rmivw} $ is not an issue because the asymptotic bias in (\ref{conda}) equals zero when $ \beta_0=0 $ and $\hat{\beta}_{\rmivw} $ is consistent and asymptotically normal under reasonable conditions. 

Result (\ref{normal}) still holds if we replace $V_{\lambda,\rmivw}$ by a plug-in consistent estimator
\begin{equation}
	\hat{V}_{\lambda,\rmivw}=\frac{\sum_{j\in S_\lambda} [ \hat w_j +\hat{\beta}_{\lambda,\rmivw}^{2}\hat \rho_j(\hat{w}_j+ \hat\rho_j )]
	}{ (\sum_{j\in S_\lambda}\hat w_j)^2},
	\label{evar2}
\end{equation}
where $\hat w_j = \hat\gamma_j^2 / \hat \sigma_{Yj}^2$, $\hat \rho_j = \hat \sigma_{Xj}^2/\hat \sigma_{Yj}^2$, and $ S_\lambda = \{ j: | \hat\gamma_j^* | > \lambda \hat\sigma_{Xj}^*  \} $.

Comparing the IVW estimator \eqref{eq: ivw} with the  IVW estimator \eqref{eq: llivw} with screening, the former requires far more stringent conditions on IV strength to guarantee its consistency or asymptotic normality, whereas  the latter requires
finding a  threshold $\lambda$ and checking whether  $\lambda$  satisfies conditions in Theorem \ref{theo: p}(a), which is cumbersome and  not always successful. 
We highlight some examples below. 
\begin{enumerate}
	\item Consider the common practice of  selecting IVs that pass the p-value threshold of $ 5\times10^{-8}$, which as mentioned earlier is equivalent to setting $\lambda \approx 5.45$. This $\lambda$ may or may not satisfy the conditions for consistency or asymptotic normality of $\hat{\beta}_{\lambda, \rmivw}$ in Theorem \ref{theo: p}(a)-(b).
	\item If all IVs are very weak, for example $\gamma_j^2\leq c \sigma_{Xj}^2 $ for all $j$ and  some positive constant $c$, then $\kappa_\lambda$ is bounded regardless of the choice of $\lambda$. 
	\item If every IV strength equals $\lambda$, then $q_{\lambda, j}\approx 1/2$ and $\kappa_\lambda\approx\lambda^2$. But, $\kappa_{\lambda}/p_\lambda \approx 2\lambda^2/p$ may be small, implying that the asymptotic normality in Theorem \ref{theo: p}(a) may not hold.
\end{enumerate}

\section{Debiased IVW Estimators}
\label{sec: dIVW}


\subsection{Debiased IVW Estimator}
Motivated by the stringent assumptions underlying the asymptotic normality of the IVW estimator without screening and the need of a third dataset and a
carefully chosen threshold $\lambda$ in the IVW estimator with screening, we propose a simple estimator of $\beta_0$ that  relies on neither. We name the new estimator as the debiased IVW (dIVW) estimator. It is the original IVW estimator multiplied by an explicit  bias correction factor, i.e.,
\begin{equation} \label{eq: dIVW}
	\hat{\beta}_{ \rm\rmdivw} = \hat{\beta}_{ \rmivw} \cdot \frac{\sum_{j=1}^{p} \hat w_j}{\sum_{j=1}^{p} (\hat w_j-\hat{v}_j)}
	=\frac{\sum_{j =1}^p \hat{\Gamma}_j\hat{\gamma}_j\hat\sigma_{Yj}^{-2}}{\sum_{j=1}^p (\hat{\gamma}_j^2-\hat\sigma_{Xj}^2)\hat\sigma_{Yj}^{-2} },
\end{equation}
where $ \hat{w}_j=\hat{\gamma}_j^2/\hat{\sigma}_{Yj}^2$ and $ \hat{v}_j= \hat{\sigma}_{Xj}^2/ \hat{\sigma}_{Yj}^2$.
The bias correction factor amplifies the IVW estimator that is biased towards 0 according to Theorem \ref{theo: p}(c)-(d). 

Surprisingly, this simple correction makes the resulting estimator dramatically more robust to many weak IVs. To explain why, 
recall that the asymptotic normality of $\hat{\beta}_{ \rmivw} $ requires that its 
asymptotic bias tend to $0$ fast enough, i.e.,  (\ref{conda}) or 
the stringent condition $\kappa /p \to \infty$ holds. 
For the dIVW estimator, 
\[
\hat{\beta}_{\rmdivw}-\beta_0=\frac{\sum_{j  =1}^p (\hat{\Gamma}_j\hat{\gamma}_j-\beta_0\hat{\gamma}_j^2+\beta_0\hat\sigma_{Xj}^2)\hat\sigma_{Yj}^{-2}}{\sum_{j =1}^p(\hat{\gamma}_j^2-\hat\sigma_{Xj}^2)\hat\sigma_{Yj}^{-2}},
\]
the numerator has mean  zero under Assumption \ref{assump: 2}. 
This indicates that, $\hat{\beta}_{\rmdivw}$  
has a negligible asymptotic bias,  and hence its asymptotic normality does not need a stringent condition such as $\kappa /p \to \infty$ to ensure (\ref{conda}).

As we show in the next section, the dIVW estimator (\ref{eq: dIVW}) is consistent and asymptotically normal if $\kappa\sqrt{p}\rightarrow\infty$ and  $\max_j (\gamma_j^2\sigma_{Xj}^{-2})/(\kappa p+p)\rightarrow 0$. The condition $\kappa\sqrt{p}\rightarrow\infty$ is considerably  weaker than $\kappa / p \to \infty$ required by the original IVW estimator (\ref{eq: ivw}). For example, $\kappa \sqrt{p} \to \infty$ holds even when  $\kappa \to 0$  but at a slower rate than $1/\sqrt{p}$; in contrast,  $\kappa /p \to \infty$ requires $\kappa \to \infty$. Also, when IVs are common variants, but are weak in the sense of \citet{Staiger:1997aa} (i.e., $\gamma_j$ and $\sigma_{Xj}$ are both of the order $n_X^{-1/2}$), the dIVW estimator still remains consistent and asymptotically normal if the number of such weak IVs is large. 
Finally, the condition $\kappa\sqrt{p}\rightarrow\infty$  is also related to conditions imposed by the limited information maximum likelihood (LIML) estimator in the one-sample individual-level data setting \cite{Chao:2005aa} and the robust adjusted profile score (MR-raps) estimator \cite{zhao2018statistical}.

The quantity $\kappa \sqrt{p}$ can be interpreted as an effective sample size for the dIVW estimator and can be estimated by $\hat{\kappa}\sqrt{p}$ with $\hat\kappa$ defined in (\ref{hatkappa}). In our simulation studies (i.e., Figure \ref{figure: condition}), we provide some guidelines on what would be considered a large value of $\hat{\kappa}\sqrt{p}$ for the asymptotics promised to kick in. This is akin to qualitative guidelines on what would be a large enough sample size for a normal approximation of an estimator to hold.

\subsection{Improving Efficiency With Screening}
While the  dIVW estimator $\hat\beta_{\rmdivw}$  (\ref{eq: dIVW}) without screening is consistent and asymptotically normal even if many IVs are weak,  its asymptotic variance, 
$V_{0, \rmdivw}$  defined in (\ref{var}) with $\lambda =0$, is  larger than $V_{0, \rmivw}$, the asymptotic variance of the IVW estimator $\hat\beta_{\rmivw}$ (\ref{eq: ivw}) with $\lambda =0$; we remark that both 
$V_{0, \rmivw}$ and $V_{0, \rmdivw}$ have order $(\kappa p)^{-1}$ when 
$\hat\beta_{\rmdivw}$  and $\hat\beta_{\rmivw}$ are asymptotically normal.
The increase in variance of the dIVW estimator
is due to a bias-variance trade off between the IVW estimator and the dIVW estimator
and the bias due to weak IVs in MR studies tends to dominate the SD of the estimator. 

When summary statistics from an independent selection dataset are available, we explore how to make the dIVW estimator more efficient by screening. 
We remark here that screening in the dIVW estimator is solely for  efficiency improvement since $\hat\beta_{\rmdivw}$ without screening remains asymptotically normal under weak conditions.  In contrast, the IVW estimator uses screening to reduce bias and to achieve asymptotic normality.

Formally, consider the the dIVW estimator using only IVs selected from the selection dataset,
\begin{equation}\label{ldIVW}
	\hat{\beta}_{\lambda,\rmdivw} =\frac{\sum_{j\in S_\lambda}\hat{\Gamma}_j \hat{\gamma}_j\hat\sigma_{Yj}^{-2}}{\sum_{j\in S_\lambda}( \hat{\gamma}_j^2-\hat\sigma_{Xj}^2)\hat\sigma_{Yj}^{-2}}, \qquad S_\lambda = \{ j: | \hat\gamma_j^* | > \lambda \hat\sigma_{Xj}^*  \} 
\end{equation}

Theorem \ref{theo: dIVW} establishes the asymptotic normality of  $\hat{\beta}_{\lambda,\rmdivw}$ in (\ref{ldIVW}). The Theorem includes 
$\hat{\beta}_{ \rm\rmdivw}$ in (\ref{eq: dIVW}) as a special case of $\hat{\beta}_{\lambda,\rmdivw}$ when $\lambda =0$.

\begin{thm} \label{theo: dIVW}
	Assume models (\ref{eq: exposure-IV})-(\ref{eq: outcome-exposure}),  Assumptions \ref{assump: 1}-\ref{assump: 2}, and that
	$\kappa_\lambda \sqrt{p_\lambda} /  \max ( 1, \lambda^2) \to \infty$ and 
	$\max_j (\gamma_j^2\sigma_{Xj}^{-2} q_{\lambda,j})/ (\kappa_\lambda p_\lambda+p_\lambda)\to 0$. Assume further that either 	$\hat\sigma_{Xj} = \sigma_{Xj}$, $\hat\sigma_{Yj} = \sigma_{Yj}$ and $ \hat{\sigma}_{Xj}^*= {\sigma}_{Xj}^* $ in  (\ref{eq: dIVW})-(\ref{ldIVW}) 
	or  $p / n_X \to 0$. Then, 
	$\hat{\beta}_{\lambda , \rmdivw}$ is consistent and asymptotically normal,  i.e., 
	$$
	V_{\lambda , \rmdivw}^{-1/2} (\hat{\beta}_{\lambda , \rmdivw}-\beta_0)\xrightarrow{D}N(0,1)
	$$
	where  
	\begin{equation}
		V_{\lambda, \rmdivw}=\frac{\sum_{j  =1}^p [  (w_j+\rho_j )+\beta_0^2\rho_j (w_j+2\rho_j )]q_{\lambda,j}}{ (\sum_{j=1}^p  w_jq_{\lambda,j})^2} \label{var}
	\end{equation} 	
	for $\lambda =0$ or $\lambda >0$, 
	$w_j = \gamma_j^2 / \sigma_{Yj}^2$, and $\rho_j= \sigma_{Xj}^2/\sigma_{Yj}^2$, $j=1,...,p$.
\end{thm}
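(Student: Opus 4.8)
The plan is to write the centered estimator as a ratio and treat numerator and denominator separately. Starting from the algebraic identity
\[
\hat{\beta}_{\lambda,\rmdivw}-\beta_0=\frac{N_\lambda}{D_\lambda},\qquad
N_\lambda=\sum_{j\in S_\lambda}(\hat{\Gamma}_j\hat{\gamma}_j-\beta_0\hat{\gamma}_j^2+\beta_0\hat\sigma_{Xj}^2)\hat\sigma_{Yj}^{-2},\qquad
D_\lambda=\sum_{j\in S_\lambda}(\hat{\gamma}_j^2-\hat\sigma_{Xj}^2)\hat\sigma_{Yj}^{-2},
\]
I would first dispose of the oracle case in which $\hat\sigma_{Xj}=\sigma_{Xj}$, $\hat\sigma_{Yj}=\sigma_{Yj}$, and $\hat\sigma_{Xj}^*=\sigma_{Xj}^*$. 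Here the crucial simplification is that the selection indicator $I_j=I(j\in S_\lambda)$ depends only on the independent selection dataset and is therefore independent of $(\hat\gamma_j,\hat\Gamma_j)$ with $E(I_j)=q_{\lambda,j}$. Writing $T_j=(\hat\Gamma_j\hat\gamma_j-\beta_0\hat\gamma_j^2+\beta_0\sigma_{Xj}^2)\sigma_{Yj}^{-2}$, the de-biasing identities $E(\hat\Gamma_j\hat\gamma_j)=\beta_0\gamma_j^2$ and $E(\hat\gamma_j^2)=\gamma_j^2+\sigma_{Xj}^2$ give $E(T_j)=0$, so $N_\lambda=\sum_j I_jT_j$ is a sum of independent mean-zero summands. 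A direct Gaussian fourth-moment computation then yields $\mathrm{Var}(I_jT_j)=q_{\lambda,j}[(w_j+\rho_j)+\beta_0^2\rho_j(w_j+2\rho_j)]$ and $E(D_\lambda)=\sum_j w_jq_{\lambda,j}$, so that $V_{\lambda,\rmdivw}=\mathrm{Var}(N_\lambda)/\{E(D_\lambda)\}^2$ is exactly the claimed variance.

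Asymptotic normality then follows from a standard two-step argument. For the numerator I would verify the Lindeberg (or Lyapunov) condition for the triangular array $\{I_jT_j\}$; since the summands are independent across $j$ and all Gaussian moments are finite under Assumption~\ref{assump: 2}, the condition reduces to a no-dominant-summand requirement, and using the boundedness of the ratios $\rho_j$ one checks that $\max_j\mathrm{Var}(I_jT_j)/\mathrm{Var}(N_\lambda)\asymp\max_j(\gamma_j^2\sigma_{Xj}^{-2}q_{\lambda,j})/(\kappa_\lambda p_\lambda+p_\lambda)\to 0$, which is precisely the second hypothesis; hence $\{\mathrm{Var}(N_\lambda)\}^{-1/2}N_\lambda\xrightarrow{D}N(0,1)$. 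For the denominator I would show $D_\lambda/E(D_\lambda)\xrightarrow{P}1$ by bounding $\mathrm{Var}(D_\lambda)/\{E(D_\lambda)\}^2\to 0$: the $\rho_j^2$ and $w_j\rho_j$ contributions are of order $1/(\kappa_\lambda^2 p_\lambda)$ and $1/(\kappa_\lambda p_\lambda)$, while the $w_j^2$ contribution is where the factor $\max(1,\lambda^2)$ enters. Splitting the selected instruments into those with $q_{\lambda,j}\ge 1/2$ (controlled by the Lindeberg quantity) and those with $q_{\lambda,j}<1/2$ (for which $\gamma_j^2\sigma_{Xj}^{-2}\lesssim\max(1,\lambda^2)$), this contribution is $\lesssim\lambda^2/(\kappa_\lambda p_\lambda)+o(1)$, which vanishes under $\kappa_\lambda\sqrt{p_\lambda}/\max(1,\lambda^2)\to\infty$. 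Slutsky's theorem combines the two pieces, and consistency is automatic because the same hypotheses force $V_{\lambda,\rmdivw}\to 0$.

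The main obstacle is the general case where the standard errors are estimated and only $p/n_X\to 0$ is assumed. The strategy I would adopt is a perturbation argument comparing $\hat{\beta}_{\lambda,\rmdivw}$ with its oracle counterpart: it suffices to prove that replacing $(\sigma_{Xj},\sigma_{Yj},\sigma_{Xj}^*)$ by their estimates perturbs $N_\lambda$ by $o_P(\{\mathrm{Var}(N_\lambda)\}^{1/2})$ and $D_\lambda$ by $o_P(E(D_\lambda))$. This splits into two effects. First, the selection set changes, since $S_\lambda$ uses $\hat\sigma_{Xj}^*$; writing $\hat\sigma_{Xj}^*/\sigma_{Xj}^*=1+O_P(n_{X^*}^{-1/2})$ one shows that only instruments within an $O(n_{X^*}^{-1/2})$ band of the threshold can switch status, and their aggregate effect on the sums is negligible relative to the signal scale once $p/n_X\to 0$. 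Second, the estimated weights $\hat\sigma_{Yj}^{-2}$ and the estimated correction $\hat\sigma_{Xj}^2$ enter each summand; here I would Taylor-expand each ratio to first order, use $\hat\sigma_{Xj}^2-\sigma_{Xj}^2=O_P(n_X^{-1/2})\sigma_{Xj}^2$ together with the mutual independence of the per-$j$ errors, and bound the accumulated $p$ error terms. The delicate point is that these accumulated errors must be shown to be of smaller order than $\{\mathrm{Var}(N_\lambda)\}^{1/2}\asymp\sqrt{\kappa_\lambda p_\lambda+p_\lambda}$ in the numerator and than $E(D_\lambda)\asymp\kappa_\lambda p_\lambda$ in the denominator; matching these scales is exactly what produces the sufficient condition $p/n_X\to 0$. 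I expect this bookkeeping, in particular handling the boundary instruments in $S_\lambda$ and confirming that the first-order variance inflation from estimating the standard errors is dominated by $V_{\lambda,\rmdivw}$, to be the most technical part of the argument.
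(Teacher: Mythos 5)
Your treatment of the oracle case (known standard deviations) follows the paper's route essentially step for step: the same ratio decomposition of $\hat\beta_{\lambda,\rmdivw}-\beta_0$, the same observation that the debiasing term makes each numerator summand exactly mean zero, a Lindeberg CLT for the numerator whose no-dominant-summand condition is exactly the hypothesis $\max_j(\gamma_j^2\sigma_{Xj}^{-2}q_{\lambda,j})/(\kappa_\lambda p_\lambda+p_\lambda)\to 0$, and a Chebyshev argument for the denominator in which the $w_j^2q_{\lambda,j}(1-q_{\lambda,j})$ contribution is controlled by a case split near the threshold and is the source of the $\max(1,\lambda^2)$ factor. That part is sound and matches the paper's Lemmas on the numerator and denominator.

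The gap is in the estimated-SE case, specifically in the phrase ``use $\hat\sigma_{Xj}^2-\sigma_{Xj}^2=O_P(n_X^{-1/2})\sigma_{Xj}^2$ together with the mutual independence of the per-$j$ errors.'' The per-$j$ errors are \emph{not} mutually independent: each $\hat\sigma_{Xj}^2$ is a residual variance from a marginal regression on the same exposure sample, so all $p$ errors share the common fluctuation $\X^{T}(I-H_e)\X/\{(n_X-1)\var(X)\}-1=O_P(n_X^{-1/2})$. Consequently the aggregate perturbation of the numerator, $\sum_{j}\beta_0(\hat\sigma_{Xj}^2-\sigma_{Xj}^2)\sigma_{Yj}^{-2}$, is $O_P(p/\sqrt{n_X})$ --- there is no square-root cancellation across $j$ --- rather than the $O_P(\sqrt{p/n_X})$ that independence would yield. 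The paper confronts this directly: it expands $\hat\sigma_{Xj}^2/\sigma_{Xj}^2-1$ into the shared component plus $j$-specific components, collects the shared part into a single term $\Psi_X B_2$ with $\Psi_X=\Theta(p)$, and proves a joint CLT for $B_1+\Psi_X B_2$ (computing their covariance explicitly) to show the extra variance and covariance contributions are $o(\kappa_\lambda p_\lambda+p_\lambda)$ precisely when $p/n_X\to 0$. Your conclusion happens to survive because $p/\sqrt{n_X}=\sqrt{p}\cdot\sqrt{p/n_X}=o(\sqrt{p})$, but the argument as you describe it would not establish the bound; the decomposition isolating the common factor is the missing ingredient. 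A more minor remark applies to the selection-set perturbation: the paper does not argue via a band of instruments near the threshold but instead bounds $E\{I(|\hat\gamma_j^*|>\lambda\hat\sigma_{Xj}^*)-I(|\hat\gamma_j^*|>\lambda\sigma_{Xj}^*)\}^2=O(\lambda^2/n_{X^*})$ directly by Markov's inequality and uses the independence of the selection sample to kill the cross-covariances.
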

For consistency and asymptotic normality, the stringent condition $\kappa_\lambda / p_\lambda $ $\to \infty$ required in Theorem \ref{theo: p}(a) for IVW estimators 
is not needed in Theorem \ref{theo: dIVW}. 

Theorem \ref{theo: dIVW} shows that, if  $p / n_X \to 0$, then
using the estimates SEs leads to the same asymptotic result as using the true
SDs  in (\ref{eq: dIVW})-(\ref{ldIVW}).  
A parallel result can also be established for the IVW estimators but is omitted here. 
Thus, our result provides a theoretical justification for safely treating SEs as SDs, a commonly adopted approach in MR studies.  


The condition $p / n_X \to 0$ typically holds since after de-correlation, the number of independent SNP is usually around a few thousands and the sample size is around 10 to 100 thousands.
Our simulation results in Section \ref{sec: simu} shows that the approximation is still very good even when $p / n_X$ is 20\%, indicating that $p / n_X \to 0$ is only sufficient rather than necessary.

The quantity $\kappa_\lambda\sqrt{p_\lambda} / \max (1, \lambda^2) $ acts like an effective sample size for the  dIVW estimator with screening and can be estimated by $\hat{\kappa}_\lambda\sqrt{\hat{p}_\lambda} /\max (1, \lambda^2)$ with 
$\hat{\kappa}_\lambda$ and $\hat{p}_\lambda$ given by (\ref{hatkappa}). In our simulation studies, specifically Figure \ref{figure: condition}, we provide some guidelines on what would be considered a large effective sample size for the asymptotic result to kick in. 

The results in Theorem \ref{theo: dIVW} still hold  if we replace the asymptotic variance $V_{\lambda , \rmdivw}$ with a consistent estimator 
\begin{equation}
	\hat{V}_{\lambda,\rmdivw}=\frac{\sum_{j\in S_\lambda} [\hat w_j +\hat{\beta}_{\lambda,\rmdivw}^{2}\hat \rho_j (\hat{w}_j+\hat \rho_j )]}{ [\sum_{j\in S_\lambda} (\hat{w}_j-\hat\rho_j)]^2},
	\label{evar}
\end{equation}
where $\hat w_j = \hat\gamma_j^2 / \hat \sigma_{Yj}^2$, $\hat \rho_i = \hat \sigma_{Xj}^2/\hat \sigma_{Yj}^2$, and $ S_\lambda = \{ j: | \hat\gamma_j^* | > \lambda \hat\sigma_{Xj}^*  \} $. 

\subsection{Choice of $\lambda$ in Screening}

We consider the choice of $\lambda$ in the  dIVW estimator (\ref{ldIVW}) with screening.  In general, the threshold $ \lambda $ should satisfy $\kappa_\lambda \sqrt{p_\lambda} /\max ( 1, \lambda^2) \to \infty$, as well as increase the efficiency of the dIVW estimator.

One choice is $\lambda = \sqrt{2\log p}$ that diverges to infinity at a very slow rate. This $\lambda$ guarantees that the probability of  selecting any null IV is very small, because under Assumptions \ref{assump: 1}-\ref{assump: 2} and $p \to \infty$,
$$
P(\mbox{at least one null IV is selected})
\leq\frac{ 2(p-s)}{\lambda\sqrt{2\pi}}e^{-\lambda^2/2} 
= \frac{p-s}{p \sqrt{\pi \log p}}
\rightarrow 0 
$$
where $s$ is the number of non-null IVs. 

Another choice of $\lambda$ is motivated by directly studying the asymptotic variance $V_{\lambda, \rmdivw}$ in (\ref{var}), which has order  $(\kappa_\lambda p_\lambda)^{-1}$ when $\kappa_\lambda  \not \to 0$
and 	$(\kappa_\lambda^2 p_\lambda)^{-1}$ when $\kappa_\lambda   \to 0$.
To illustrate the idea, we focus on the situation where 
$\kappa \not \to 0$ so that the asymptotic variances of $\hat \beta_{\rmdivw}$ and $\hat \beta_{\lambda ,\rmdivw}$ have orders $(\kappa p)^{-1}$ and  $(\kappa_\lambda p_\lambda)^{-1}$, respectively. 
Since $\kappa_\lambda p_\lambda \leq \kappa p$ for any $\lambda >0$, to screen for efficiency rather for relevant IV selection,
we should not screen out too many non-null IVs (even if they are weak) to result in $\kappa_\lambda p_\lambda / \kappa p \to 0$. From this point of view, 
$\lambda = \sqrt{2\log p}$ is an improvement over the genome-wide significance p-value threshold $5\times 10 ^{-8}$ ($\lambda \approx 5.45$) because
$\sqrt{2\log p}<5.45$ if $p < 10^6$, and using
$\lambda = \sqrt{2\log p}$ eliminates null IVs with probability close to 1 as the previous discussion indicated. 
But, if there are many weak IVs, $\hat\beta_{\rmdivw}$ 
may be asymptotically more efficient than $\hat\beta_{\lambda , \rmdivw}$ with any $\lambda >0$ (see Case 3 of the simulation study in Section \ref{subsec: simu BMI-CAD}). In short, it is better if we can select $\lambda$ adaptively. 

This leads to our approach of choosing $\lambda$ that directly minimizes an estimated asymptotic variance of $\hat{\beta}_{\lambda,\rmdivw} $, 
which we call the  Mendelian Randomization Estimation-Optimization (MR-EO) algorithm. In a nutshell, MR-EO considers  $\hat{\beta}_{\lambda,\rmdivw}$ with  $\lambda$ that varies in the interval $[0,\sqrt{2\log p} \, ]$; it assumes that $\kappa_\lambda\sqrt{p_\lambda}/\max(1,\lambda^2)\rightarrow\infty$ holds for every $\lambda$ in the  range. 
It then tries to find the $\lambda$ in this range  that minimizes the asymptotic variance. Since we cannot directly use estimated variance in (\ref{evar}) because $\hat\beta_{\lambda , \rmdivw}$ is not available prior to the selection of $\lambda$, 
MR-EO alternates between estimating  the exposure effect by $\hat\beta_t$ (i.e., the E-Step) and finding the optimal $\lambda$ given the previous estimate  $\hat\beta_t$  (i.e., the O-Step); see Algorithm \ref{lalg} for details. 

\begin{algorithm} 
	\caption{MR-EO algorithm to determine the optimal $\lambda$}\label{lalg}
	\SetAlgoLined 
	Initialize $t=0$, $t_{\rm max}$, $\lambda_{0} = \sqrt{2\log p}, V=\infty$\;
	\While{$t \leq  t_{\rm max}$}{
		E-Step: for a given $\lambda_t$, estimate $\beta_0$ with the dIVW estimator $\hat{\beta}_{\lambda_t, \rmdivw}$\;
		\eIf{ $V \leq \hat{V}_{\lambda_t,\rmdivw} (\hat{\beta}_{\lambda_t,\rmdivw}) $}{exit the while loop\;}{$V=\hat{V}_{\lambda_t,\rmdivw} (\hat{\beta}_{\lambda_t,\rmdivw})$\;}
		
		O-Step:  Plug $\hat{\beta}_{\lambda_t, \rmdivw}$ into the variance estimator and find 
		\[
		\lambda_{t+1} = \arg\min_{\lambda \in [0,\sqrt{2\log p}] } \hat{V}_{\lambda,\rmdivw} (\hat{\beta}_{\lambda_t,\rmdivw})
		\]
		Set $t=t+1$;
	}
	
	Output $\lambda_{t-1}$.\vspace{3mm}
\end{algorithm}

We make some comments regarding the implementation of MR-EO and its final output. First, we initialize MR-EO to $\lambda_0 =\sqrt{2\log p}$ and force the algorithm to stop at $t=t_{\rm max}$ with a reasonably large $t_{\rm max}$, mainly for computational efficiency. Second, the algorithm assumes that $\kappa_\lambda\sqrt{p_\lambda}/\max(1,\lambda^2)\rightarrow\infty$ holds for every $\lambda$ in $[0, \sqrt{2\log p} \, ]$ so that $\hat{\beta}_{\lambda_t, \rmdivw}$ is consistent for $\beta_0$. To verify this, we can empirically evaluate $\hat{\kappa}_\lambda\sqrt{\hat{p}_\lambda}/\max(1,\lambda^2)$ and check whether this quantity is reasonably large for all $\lambda$ in the range. In our simulation study in Section \ref{sec: simu}, we find that the range  $[0, \sqrt{2\log p}]$ works well. 
Third, with fixed $t_{\rm max}$ and range of $\lambda$, MR-EO produces a unique dIVW estimator. 
Finally, the estimated variance for the dIVW estimator based on MR-EO may be too optimistic due to the ``winner's curse''; however, when both $p$ and $n_X$ are large and the ratio $p/n_X$ is bounded, ideally small, this issue will be largely moot. 
In our simulation studies in Section \ref{sec: simu}, we find that the estimator chosen by MR-EO performs well and the resulting confidence interval maintains nominal coverage, 
although Theorem \ref{theo: dIVW} does not directly guarantee that the estimator chosen by MR-EO is asymptotically normal.

\subsection{Extension to Balanced Horizontal Pleiotropy} 
\label{subsec: pleiotropy}
We extend the dIVW estimator to situations under one type of pleiotropy in MR, balanced horizontal pleiotropy \cite{Hemani:2018aa, zhao2018statistical, Bowden:2017aa}. Briefly, under balanced horizontal pleiotropy, the third core IV assumption described in Section \ref{sec: notation and setup} is violated and the model 
(\ref{eq: outcome-exposure}) is extended to 
\begin{equation}
	Y=\beta_0 X +\sum_{j=1}^{p} \alpha_jZ_j
	+\eta_Y U+E_Y, \label{eq: outcome-exposure1} 
\end{equation}
where the pleiotropic effects of $p$ SNPs on $Y$, $\alpha_1,...,\alpha_p$, $\alpha_j \sim N(0,\tau_0^2)$,  are independent random effects and independent of $X$, $Z_j$'s, $U$, $E_Y$ and $E_X$.
To incorporate balanced pleiotropy, we replace Assumption \ref{assump: 2} with the following assumption \cite{Hemani:2018aa, zhao2018statistical, Zhao:2019aa}.

\begin{assumption2prime} Suppose Assumption \ref{assump: 2} holds except conditional on $ \alpha_j$, $\hat{\Gamma}_j  \sim N(\alpha_j+\beta_0\gamma_j,\sigma_{Yj}^2)$ for every $j$. 
	In addition,  for some constant $c_+$, $\tau_0\leq c_+ \sigma_{Yj}$ for all $j$.
\end{assumption2prime}
Under the same conditions in Theorem \ref{theo: dIVW} with (\ref{eq: outcome-exposure}) and Assumption \ref{assump: 2} replaced by (\ref{eq: outcome-exposure1}) and Assumption $2'$, respectively, the Supplementary Material shows that the dIVW estimators in (\ref{eq: dIVW}) and (\ref{ldIVW}) are still consistent and asymptotically normal. However, the variance of the dIVW estimators is larger due to the random effects $ \alpha_j $'s and an estimator of it under balanced pleiotropy is
\begin{equation}\label{evar5}
	\frac{\sum_{j \in S_\lambda}[  \hat{w}_j(1+\hat{\tau}^2\hat{\sigma}_{Yj}^{-2})+\hat{\beta}_{\lambda,\rmdivw}^2\hat{v}_j(\hat{w}_j+\hat{v}_j)]}{ [\sum_{j \in S_\lambda}(\hat{w}_j-\hat{v}_j)]^2},
\end{equation}
where
\[
\hat{\tau}^2=\frac{\sum_{j =1}^p[(\hat{\Gamma}_j-\hat{\beta}_{\rmdivw}\hat{\gamma}_j)^2-\hat\sigma_{Yj}^2-\hat{\beta}_{\rmdivw}^{2}\hat\sigma_{Xj}^2]\hat\sigma_{Yj}^{-2}}{\sum_{j=1}^p\hat\sigma_{Yj}^{-2}}.
\]
We remark that the estimator of $\hat{\tau}^2$ relies on $\hat{\beta}_{\rmdivw}$. Also, if $\max_k \sigma_{Yk}^{-2}$ is bounded by a constant times the average $p^{-1} \sum_{j=1}^p \sigma_{Yj}^{-2} $, as $\kappa \sqrt{ p}\rightarrow \infty$,  the variance estimator in (\ref{evar5}) is consistent. We can use the aforementioned methods (e.g., MR-EO) to choose $\lambda$ and improve efficiency. 

Finally, when balanced horizontal pleiotropy does not hold, the dIVW estimator,  like other MR estimators built upon this assumption, will be biased. In Section 2 of the Supplementary Material, we investigate the magnitude of this bias.

\section{Simulation Studies}
\label{sec: simu}
\subsection{A Simulation with the	BMI-CAD Dataset as Population}
\label{subsec: simu BMI-CAD}
We conduct a simulation study  to 
compare   the finite sample properties of several estimators under different screening thresholds.  To closely mirror what is done in practice, we adopt a real two-sample summary-level MR dataset, the BMI-CAD dataset in the \emph{mr.raps} R package (version 0.3.1) of  \citet{Zhao:2019aa}, as  the simulation population. The BMI-CAD dataset is used to make inference about the effect of $X$,  the body mass index (BMI),   on $Y$, the risk of coronary artery disease (CAD). It contains  three independent datasets:
\begin{enumerate}
	\item Exposure dataset: A GWAS for BMI in round 2 of the UK BioBank (sample size: 336,107) \citep{Abbott2018};
	\item Outcome dataset: A GWAS for CAD from the CARDIoGRAMplusC4D consortium  (sample size: $\approx$185,000), with genotype imputation using the 1000 Genome Project \citep{CARDIoGRAMplusC4D-Consortium:2015aa};
	\item Selection dataset: A GWAS for BMI in the Japanese population (sample size: 173,430) \citep{Akiyama:2017aa}. 
\end{enumerate}
The three datasets have been cleaned so that (i) SNPs appear in all three datasets and (ii)  SNPs are far apart in genetic distance; see \cite{Zhao:2019aa} for details. The initial data cleaning leads to $p=1119$ SNPs available for analysis. Each GWAS contains publicly available summary statistics that are the estimated coefficients from marginal linear regression 
and their SEs. We use them as population parameters in our simulation; in Section \ref{sec: real}, we use them as data.

To begin, we construct three plausible sets of $\gamma_j$ as follows.
\begin{itemize}
	\item[ \bf Case 1](Some strong IVs, many null IVs): 
	There are $s = 20$ non-null IVs whose $\gamma_j$-values are the 20 marginal regression coefficients with the smallest p-values  in the BMI-CAD exposure dataset. The rest $p-s = 1099$ SNPs are null IVs with zero $\gamma_j$'s. Combined, we have a ``population'' with $\kappa=2.90$ and $\kappa \sqrt{p}=97.00$. 
	\item[{\bf Case 2}] (Many weak IVs, many null IVs): 
	This setting is identical to Case 1, except we use the first $s=100$ marginal regression coefficients in the BMI-CAD exposure dataset as non-null SNPs and set the rest $p-s=1019$ SNPs as null IVs. This leads to $\kappa=1.05$ and $\kappa \sqrt{p}=35.12$. 
	\item[{\bf Case 3}] (Many weak IVs, no null IVs):  This setting is identical to Case 1, except we use all $p=s=1119$  marginal regression coefficients in the BMI-CAD exposure dataset as  non-null SNPs and there are no null SNPs. This leads to $\kappa=7.78$ and $\kappa \sqrt{p}=260.25$.
\end{itemize}
Based on the $\gamma_j$'s, we set $\Gamma_j = \beta_0\gamma_j$ with $\beta_0 = 0.4$.

Next, for each simulation run  we generate summary statistics  $\{\hat{\Gamma}_j,\hat{\gamma}_j, \hat{\gamma}_j^*,$ $ j=1,...,p\}$
based on Assumption \ref{assump: 2} with $\gamma_j$'s as described for each of Cases 1-3
and the SEs in the BMI-CAD dataset as  $\sigma_{Xj}$,  
$\sigma_{Yj}$,  and $\sigma_{Xj}^*$, $j=1,...,p$.   
Since we cannot generate SEs (part of summary statistics) from this real-data setting 
for simulation, in each simulation run we set SEs to be the same as $\sigma_{Xj}$,  
$\sigma_{Yj}$,  and $\sigma_{Xj}^*$, $j=1,...,p$. 
This corresponds to treating SDs as SEs as described in the start of Section \ref{sec: IVW}, i.e., assuming that we know SD values.

We compare seven MR methods:  the IVW estimator 
introduced in Section \ref{sec: IVW}, the dIVW estimator proposed in Section \ref{sec: dIVW}, and five other methods in the literature,  
MR-Egger regression \citep{Bowden:2015aa}, weighted median estimator (MR-median) \citep{Bowden:2016aa}, weighted mode estimator (MR-mode) \cite{Hartwig:2017aa},  profile score estimator (MR-raps) \citep{zhao2018statistical}, and profile score with empirical partially Bayes shrinkage weights (MR-raps-shrink) \citep{Zhao:2019aa}. MR-Egger, MR-median and MR-mode are implemented in the \emph{MendelianRandomization} R package (version 0.4.1) \cite{Yavorska:2017aa}. To make the comparisons fair, we use the $l_2$ loss for MR-raps as implemented in the \emph{mr.raps} package. For every method except MR-raps, we also use different screening procedures, including $\lambda=0$ (no screening, all SNPs are included), $\lambda=5.45$ (p-value cutoff based on the  threshold of $5\times 10^{-8}$), and $\lambda=\sqrt{2\log p}$ ($\approx 3.75$ when $p=1119$). We also include the dIVW estimator with $\lambda$ determined by the MR-EO algorithm with the maximum number of iterations set to 
$t_{\max} = 5$
and used the \emph{optimize} function from R in the O-step. The MR-raps does not have any screening.  The default MR-raps-shrink always applies a type of screening through Bayes shrinkage with the independent selection dataset. 
\begin{table}
	\caption{Simulation results for Cases 1-3 based on 10,000 repetitions with $\beta_0 =0.4$;  $ \lambda $ for MR-EO is the simulation average;
		SD is the simulation standard deviation;  SE  is  the average of standard errors; CP is the simulation coverage probability of the 95\% confidence interval based on  normal approximation. }\label{tb:simu}
	\centering
	\begin{tabular}{llcrrrr} \hline\\[-2ex]
		\multicolumn{1}{c}{Case}                & \multicolumn{1}{c}{Method} & \multicolumn{1}{c}{$\lambda$}       & \multicolumn{1}{c}{mean} & \multicolumn{1}{c}{SD}    & \multicolumn{1}{c}{SE} & \multicolumn{1}{c}{CP}  \\ \hline \\[-2ex]
		1      & IVW & $0$                   & 0.260 & 0.069 & 0.069     & 46.9    \\
		$s=20$ & IVW&  5.45                   & 0.398 & 0.094 & 0.093     & 94.8    \\
		$p=1119 $
		& IVW&  $\sqrt{2\log p}=3.75$       & 0.398 & 0.087 & 0.087     & 95.1    \\ 
		&&&&&\\
		& dIVW&  $0$         & 0.402 & 0.107 & 0.107     & 95.2    \\
		& dIVW&  5.45        & 0.401 & 0.095 & 0.094     & 94.9    \\
		& dIVW&  $\sqrt{2\log p}=3.75$   & 0.401 & 0.087 & 0.088     & 95.1    \\
		& dIVW&  MR-EO $\approx 2.80 $      & 0.400 & 0.086 & 0.086    & 95.1    \\&&&&&\\
		& MR-Egger&  $0$ & 0.335	&0.082	&0.082&	87.5 \\
		& MR-Egger& 5.45 & 0.390&	0.240&	0.256&	96.0 \\
		& MR-Egger& $\sqrt{2\log p}=3.75$   & 0.389 & 0.205 & 0.214     & 95.6    \\&&&&&\\
		
		& MR-median& 0  & 0.371&	0.110&	0.122&	96.3    \\
		& MR-median& 5.45   &  0.398&	0.118	&0.128	&96.5   \\
		& MR-median& $\sqrt{2\log p}=3.75$   & 0.397 & 0.113 & 0.124    & 96.7    \\&&&&&\\
		& MR-mode& 0    & 0.033	&74	&75586&	100   \\
		& MR-mode& 5.45     & 0.395&	0.139&	0.151	&97.1   \\
		& MR-mode& $\sqrt{2\log p }=3.75$     & 0.395 & 0.142 &0.157    & 97.2    \\&&&&&\\
		& MR-raps&   0   & 0.401 & 0.105 & 0.105     & 95.2    \\
		& MR-raps-shrink&  Bayes shrinkage & 0.400 & 0.086 & 0.086     & 95.1    \\[0.5ex] \hline\\[-2ex]
		2 & IVW& $0$                  & 0.159 & 0.091 & 0.090    & 23.9    \\
		$s=100$& IVW& 5.45            & 0.397 & 0.206 & 0.206    & 95.1    \\
		$p=1119 $
		& IVW&   $\sqrt{2\log p}=3.75$   & 0.394 & 0.183 & 0.183    & 94.9    \\&&&&&\\
		& dIVW&$0$       & 0.404 & 0.233 & 0.233     & 95.4    \\
		& dIVW& 5.45            & 0.400 & 0.207 & 0.207    & 95.1    \\
		& dIVW&$\sqrt{2\log p}=3.75$         & 0.400 & 0.186 & 0.186     & 94.9    \\
		& dIVW& MR-EO $\approx 2.21 $  & 0.396 & 0.167 & 0.167     & 95.0    \\&&&&&\\
		& MR-Egger& $0$ & 0.231	&0.122	&0.123&	72.3 \\
		& MR-Egger& 5.45 & 0.388&	0.948&	0.966&	96.2\\
		& MR-Egger& $\sqrt{2\log p}=3.75$   & 0.385&	0.359&	0.385&	95.8    \\&&&&&\\
		
		& MR-median& 0  & 0.276	&0.152	&0.170&	91.3   \\
		& MR-median& 5.45   &  0.396&	0.228&	0.245	&96.6 \\
		& MR-median&$\sqrt{2\log p}=3.75$   & 0.394	&0.216&	0.236&	96.9    \\&&&&&\\
		& MR-mode& 0    &-1.062&	130	&78125 &	100   \\
		& MR-mode& 5.45     & 0.387	&0.267	&0.291	&97.0  \\
		& MR-mode& $\sqrt{2\log p }=3.75$     & 0.390	&0.237	&0.286&	97.1   \\&&&&&\\
		& MR-raps&  0    & 0.398 & 0.224 & 0.226    & 94.9    \\
		& MR-raps-shrink&   Bayes shrinkage     & 0.399 & 0.160 & 0.159    & 95.2    \\[0.5ex] \hline
	\end{tabular}
\end{table}\clearpage
\begin{table}
	\begin{tabular}{llcrrrr} \hline\\[-2ex]
		3 & IVW & $0$             & 0.352 & 0.047 & 0.047    & 82.6    \\
		$s = 1119$& IVW&  5.45     & 0.395 & 0.086 & 0.087     & 95.4    \\
		$p=1119 $
		& IVW&  $\sqrt{2\log p}=3.75$       & 0.392 & 0.068 & 0.069     & 95.0    \\&&&&&\\
		& dIVW&  $0$             & 0.400 & 0.054 & 0.054    & 94.7    \\
		& dIVW & 5.45              & 0.399 & 0.087 & 0.088    & 95.4    \\
		& dIVW&  $\sqrt{2\log p}=3.75$   & 0.399 & 0.070 & 0.070     & 95.4    \\
		& dIVW&   MR-EO $\approx 0.03 $  & 0.400 & 0.054 &0.054    & 94.8    \\&&&&&\\
		& MR-Egger& $0$ & 0.372&	0.066&	0.067	&93.1\\
		& MR-Egger& 5.45 & 0.383	&0.189&	0.198&	95.4\\
		& MR-Egger & $\sqrt{2\log p}=3.75$   & 0.372&	0.132&	0.136	&95.0    \\&&&&&\\
		
		& MR-median& 0  & 0.375	&0.079&	0.090&	96.6   \\
		& MR-median& 5.45   & 0.394	&0.114	&0.125&	96.8 \\
		& MR-median& $\sqrt{2\log p}=3.75$   & 0.391	&0.100	&0.111	&96.9   \\&&&&&\\
		& MR-mode&  0    &0.750	&84	&23253&	100  \\
		& MR-mode& 5.45     & 0.391&	0.125&	0.141&	96.8 \\
		& MR-mode& $\sqrt{2\log p }=3.75$     & 0.385&	0.260&	0.504&	97.6  \\&&&&&\\
		& MR-raps&      0    & 0.400 & 0.054 & 0.054    & 94.9    \\
		& MR-raps-shrink&    Bayes shrinkage   & 0.400 & 0.053 & 0.053     & 94.7    \\[0.5ex]\hline
	\end{tabular}
\end{table}

Table \ref{tb:simu} shows  (i) the simulation mean and SD of each estimator, (ii)   average of SEs, which are calculated according to (\ref{evar2}) or (\ref{evar}) for IVW or dIVW estimators, and (iii)  simulation coverage probability (CP) of  95\% confidence intervals from normal approximation. 
The simulation average $ \lambda $ determined by the MR-EO algorithm is also included. 
Under all scenarios, the IVW estimator without screening (i.e., $\lambda =0$) is biased towards zero, which agrees with our theoretical result since the average IV strength $\kappa$'s are relatively small. The  coverage probabilities based on IVW estimators are far from 95\% due to the downward bias and the inaccurate normal approximation. The IVW estimator with screening under the  threshold  $\lambda = 5.45$ 
or $\sqrt{2 \log p}$ does substantially better, which again agrees with our theory that the IVW estimator with screening requires less stringent assumptions for consistency and asymptotic normality. 

The dIVW estimators with and without screening show negligible bias and nominal coverage across all simulation scenarios. This observation agrees with our theoretical assessment that the dIVW estimator requires far less stringent conditions for consistency and asymptotic normality than the IVW estimator. Also, the dIVW estimator with screening improves the dIVW estimator by having a smaller SD in Cases 1-2 where many IVs are null. However, in Case 3 where all IVs are non-null but many are weak, screening in dIVW does not lead to any improvement. In all cases, our MR-EO algorithm adapts to the underlying data and produces the most efficient estimate of $\beta_0$ among dIVW estimators, all without losing coverage or large gains in bias. Finally, all SEs based on (\ref{evar2}) and (\ref{evar}) are close to the simulated SDs of IVW and dIVW estimators, even for the biased IVW estimator without screening. 

The MR-Egger, MR-median and MR-mode estimators without screening are biased when the average IV strength is small. In particular, the MR-mode without screening can be severely biased with unrealistically large SE. MR-Egger, MR-median and MR-mode with screening thresholds at 5.45 or $\sqrt{2\log p}$ generally have larger SDs compared to the dIVW estimators thresholded at the same level. Also, even with thresholding, these three methods (MR-Egger, MR-median and MR-mode) have larger biases than the dIVW estimator because they inherently rely on using the ratio estimator $\hat{\beta}_j$.

The performance of MR-raps is comparable to that of dIVW estimator (\ref{eq: dIVW}). Both methods do not require the independent selection dataset for screening, but the dIVW estimator without screening has a simple explicit form. The MR-raps-shrink uses an independent selection dataset to improve performance and it is comparable to the dIVW estimator with screening and $\lambda$ chosen by MR-EO. However, MR-raps-shrink  is computationally more complicated than dIVW with MR-EO and may not have a unique (or well-defined) solution as mentioned in \cite{Zhao:2019aa}.  

Table \ref{tb:n_IV} presents the total number of IVs selected during screening as well as the number of non-null IVs selected.
In Case 1 where non-null IVs are strong and a good IV selection procedure should perform well,
we see that the number of non-null IVs selected based on  genome-wide significance (p-value $\leq 5\times 10^{-8}$ or $\lambda \approx 5.45$)  is much too  small compared with $s=20$, the true number of non-null IVs.
On the other hand, both $\lambda = \sqrt{2 \log p}$ and MR-EO select close to $s=20 $ non-null IVs. MR-EO selects more null IVs because it aims for efficiency instead of consistent IV selection. In Cases 2-3, there are many weak non-null IVs and all IV selection procedures via thresholding are not adequate (Table  \ref{tb:n_IV}). However, this is not surprising as screening is used in MR to ultimately improve estimation of $\beta_0$, rather than to consistently select non-null IVs. Finally, comparing the results in Tables \ref{tb:simu} and \ref{tb:n_IV} indicates that for the dIVW estimator, removing too many weak IVs may lead to an inefficient estimator of $\beta_0$.

In the Supplementary Material, we conduct a similar simulation study with balanced horizontal pleiotropy added to Case 3. The results are nearly identical to Case 3 without pleiotropy.  Also in the Supplementary Material, we conduct a simulation study where balanced horizontal pleiotropy is violated in Case 3 and we assess the bias of our estimator. We generally find that the dIVW estimator is biased, but the magnitude of the bias is often mild. This is because  the bias term is a weighted average of $ \alpha_j/\gamma_j $'s with weights $ w_jq_{\lambda, j} $'s, where large $ \alpha_j/\gamma_j $ tends to be downweighted by $ w_jq_{\lambda, j} $. 

\begin{table}
	\centering
	\caption{Average number of total IVs and non-null IVs selected from the selection dataset of $p=1119$ IVs under different  thresholds.}\label{tb:n_IV}
	\begin{tabular}{l cccccc } \hline \\[-2ex]
		&     \multicolumn{2}{c}{$\lambda=5.45$}        &  \multicolumn{2}{c}{$\lambda=\sqrt{2\log p}=3.75$}       & \multicolumn{2}{c}{MR-EO}      \\\hline \\[-2ex]
		Case& total & non-null & total & non-null & total & non-null\\
		1, $s = 20$  & 12.8& 12.8        & 18.4& 18.2      & 27.0& 19.8      \\ 
		2, $s = 100$   & 3.9 & 3.9        & 9.2& 9.0       & 63.2& 27.8       \\
		3, $s = 1119$ & 23.8& 23.8      & 84.4& 84.4      &1019.6 &1019.6       \\  \hline
	\end{tabular}
\end{table}

Overall, there are four takeaways from this simulation study.  First,  the dIVW estimator with or without screening always outperforms the IVW estimator. 
Second, without a selection dataset, the dIVW estimator and MR-raps have comparable performances and both are far better than other MR methods under consideration.
Third, with a selection dataset, the dIVW with screening and MR-raps-shrink have comparable performances and outperform other methods.  Finally, if a selection dataset is available and used to improve efficiency, we suggest the threshold to be  $\lambda = \sqrt{2 \log p}$ or $\lambda$ produced by the MR-EO algorithm, instead of the usual cutoff $\lambda \approx 5.45$. 


\subsection{Empirical Guidelines for Asymptotics}
In practice, it is important to have some sense of what  is ``a large enough'' sample size for the asymptotic results in the paper to serve as  good approximations. Many researchers in MR have conducted such analysis for the IVW estimator, most notably \cite{Burgess:2013aa}. We conduct a similar simulation-based analysis for the dIVW estimator where we examine what would be a ``large'' effective sample size, as measured by $\kappa_\lambda\sqrt{p_\lambda}/\max (1, \lambda^2)$, for the asymptotics promised by Theorem \ref{theo: dIVW} to be plausible. 

The setting is identical to Case 3 in Section \ref{subsec: simu BMI-CAD}. We choose a grid of 100 equally spaced  $\lambda$'s between $0$ and $10$. For each $\lambda$, we generate 1,000 simulation datasets and calculate the corresponding $\hat{\beta}_{\lambda, \rmdivw}$ for each dataset. Figure \ref{figure: condition} plots these $\hat{\beta}_{\lambda,\rmdivw}$ values against  $\hat{\kappa}_\lambda\sqrt{\hat{p}_\lambda}/\max (1, \lambda^2)$ as well as two standard error bands centered at $\beta_0$ (shaded area). Note that the sample size $n_X$ and the number of IVs $p$ are fixed and, therefore, as $\hat{\kappa}_\lambda\sqrt{\hat{p}_\lambda}/\max (1, \lambda^2)$ grows, the confidence band first shrinks and then becomes relatively stable.

We find that for any $\lambda$, the coverage probability for the dIVW estimator with screening ranges from 93.5\% to 96.0\%. However, as $\hat{\kappa}_\lambda\sqrt{\hat{p}_\lambda}/\max (1, \lambda^2)$ grows larger, we see fewer estimates far from $\beta_0$, an indication that asymptotics have ``kicked in''. This appears to occur when $\hat{\kappa}_\lambda\sqrt{\hat{p}_\lambda}/\max (1, \lambda^2)$  is greater than $20$. Based on this, we recommend that users of dIVW check to make sure that $\hat{\kappa}_\lambda\sqrt{\hat{p}_\lambda}/\max (1, \lambda^2)$ is at least greater than $20$ as part of a diagnostic check for the dIVW estimator.

\begin{figure}[h]
	\includegraphics[scale=0.4]{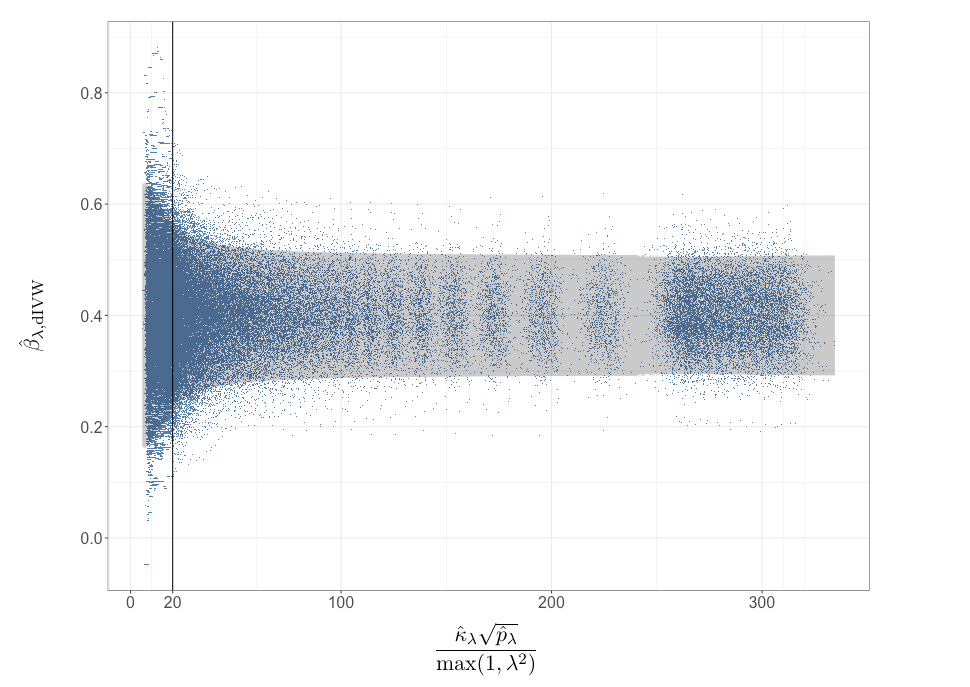}
	\caption[]{Evaluation of the consistency and asymptotic normality condition for the dIVW estimator in Theorem \ref{theo: dIVW} under Case 3. The x-axis plots the condition that governs the asymptotic rate of the dIVW estimator. The y-axis plots values of the dIVW estimator in 1000 simulations. The shaded area represents two-standard error bands centered at $\beta_0$.}
	\label{figure: condition}
\end{figure}

\subsection{Empirical Evaluation of the Effect of Using SEs not SDs} \label{sec:sim_artificial}

In this section, we evaluate the finite sample performance of the proposed dIVW estimators when we don't assume that the SDs of  summary statistics are known and instead, we use the estimated SEs, $ \hat{\sigma}_{Xj}, \hat{\sigma}_{Yj}$, $ \hat{\sigma}_{Xj}^* $'s. We construct a population with parameters 
$\gamma_j=\varphi_j \sqrt{2h^2/s}$ for $ j=1, ..., s $, and $ \gamma_j=0 $ for $ j=s+1, ..., p $, where $s$ is the number of non-null SNPs, $ h^2 $ is the total heritability, i.e., the proportion of  variance in $ X $ that is attributable to the $ s $ non-null SNPs \citep{Visscher:2008aa}, 
and $\varphi_j$'s are constants that are generated once from a standard normal distribution.

To simulate individual-level data, 
we first generate $ p $ independent SNPs, $ Z_1, ..., Z_p $,  from a multinomial distribution with $ P(Z_j=0)=0.25, P(Z_j=1)=0.5, P(Z_j=2)=0.25$, and then generate the exposure variable $ X $ and outcome  variable $ Y $ according to models (\ref{eq: exposure-IV})-(\ref{eq: outcome-exposure}) with $ \eta_X=\eta_Y=1 $, $ \beta_0=0.4 $, $ U\sim N(0, 0.6 (1-h^2)) $, and $ E_X, E_Y\sim N(0, 0.4 (1-h^2)) $.  For each simulation repetition, we generate three independent  datasets of size $n$ that represent the selection, exposure, and outcome datasets. The summary statistics $ \{\hat{\gamma}_j, \hat{\sigma}_{Xj} , j=1, ..., p\}$, $ \{\hat{\Gamma}_j, \hat{\sigma}_{Yj}, j=1,..., p\}$, and $ \{\hat{\gamma}^*_j, \hat{\sigma}^*_{Xj} , j=1, ..., p\}$  are obtained from the three datasets
through marginal linear regression. 

We consider the following combinations of $n$, $p$, $s$, and $h^2$ that reflect what may be found in practice.
\begin{itemize}
	\item[ \bf Case 4] $n= 10,000, p=2,000, s=200,  h^2= 0.1$,  $\kappa= 0.50, \kappa \sqrt{p}=21.41.$
	\item[{\bf Case 5}] $n= 10,000, p=2,000, s=1,000,  h^2= 0.2$, $\kappa= 0.93, \kappa \sqrt{p}=41.77.$
	\item[{\bf Case 6}] $n= 50,000, p=2,000, s=1,000,  h^2= 0.2$, $\kappa= 4.67, \kappa \sqrt{p}=208.84$.
	\item[{\bf Case 7}] $n= 10,000, p=2,000, s=2,000,  h^2= 0.2$, $\kappa= 0.96, \kappa \sqrt{p}=43.10.$
\end{itemize}
We consider $n=10,000$ to be a conservative sample size in modern MR studies and is much smaller than the sample sizes in the BMI-CAD dataset in Section \ref{subsec: simu BMI-CAD}.
Also, $s/p$ takes on values 10\%, 50\% and 100\%. 

Table \ref{tb:individual} presents the mean, SD, SE, and CP of IVW, dIVW, MR-raps, and MR-raps-shrink estimators, based on 10,000 replications. We omit MR-Egger, MR-median, and MR-mode for conciseness.  Overall, a similar trend appears in relation to Table \ref{tb:simu} for the case of assuming SDs $\sigma_{Xj}$, $\sigma_{Yj}$, $\sigma_{Xj}^*$  are known: the IVW estimator without screening ($\lambda =0$) is inconsistent and biased towards 0; the dIVW estimator maintains nominal coverage and outperforms the IVW estimator;  and the dIVW estimator with screening by MR-EO performs similarly with the dIVW without screening when $ s/p $ is not  small. The SEs  are generally close to the simulated SDs of point estimators, including the case where the MR-EO is applied.  The only exception is in Case 7 when $ \lambda=3.90 $. We believe this is because the consistency and asymptotic normality condition, specifically the ``effective sample size'' value $ {\kappa}_\lambda \sqrt{ p_{\lambda}} /\max (1, \lambda^2)$ is $1.11$ when $ \lambda=3.90 $ and as Figure \ref{figure: condition} illustrates, this value is too small for our asymptotic theory to kick in. Also, upon closer inspection of the numerical results in this case, there are two outlier estimates above 40 or below -20 (out of 10,000 simulation runs). Removing these two simulation runs leads to SD= 0.304 and SE=0.293, which agree more closely with each other. Overall, this observation indicates the importance of performing  diagnostic check for the dIVW estimator using $ \hat{\kappa}_\lambda \sqrt{\hat p_{\lambda}} /\max (1, \lambda^2)$ and using the MR-EO algorithm to adaptively select $ \lambda $ when needed.

We also notice the following observations that were not in Table \ref{tb:simu}. First, 
the performance of all estimators tend to improve when $n$ increases (Cases 5-6) even though $s/p =50\%$. Second, the use of genome-wide significance threshold $\lambda \approx 5.45$  often selects no SNPs in many simulation runs when $s=1,000$ or  $2,000$,  another indication that this threshold is too large in MR studies with many weak IVs.

In the Supplementary Material, we also run the same simulations using  $ \hat\sigma_{Xj} = \sigma_{Xj}, \hat\sigma_{Yj}= \sigma_{Yj}, \hat\sigma_{Xj}^* = \sigma^*_{Xj} $ and obtain almost identical results as Table \ref{tb:individual}; see  Table S3 of the Supplementary Material. This indicates that the effect of assuming known SDs  and using them as SEs is negligible, which agrees with many empirical results in the literature as well as our theoretical results in Theorem \ref{theo: dIVW}.  



\begin{table}
	\caption{Simulation results for Cases 4-7 based on 10,000 repetitions with $\beta_0 =0.4$;  $ \lambda $ for  MR-EO is the simulation average;
		SD is the simulation standard deviation;  SE  is  the average of standard errors; CP is the simulation coverage probability of the 95\% confidence interval based on  normal approximation. }\label{tb:individual}
	\centering
	\begin{tabular}{llcrrrr} \hline  \\[-2ex]
		\multicolumn{1}{c}{Case}                & \multicolumn{1}{c}{Method} & \multicolumn{1}{c}{$\lambda$}       & \multicolumn{1}{c}{mean} & \multicolumn{1}{c}{SD}    & \multicolumn{1}{c}{SE} & \multicolumn{1}{c}{CP}  \\ \hline  \\[-2ex]
		4      & IVW & $0$                   & 0.129 & 0.027 & 0.027     & 0    \\
		$s=200$ & IVW&  5.45                   & 0.393 & 0.125 & 0.122     & 94.6    \\
		$p=2000$			& IVW&  $\sqrt{2\log p}=3.90$       & 0.382 & 0.075 & 0.074     & 93.8    \\ 
		$n=10000$	&&&&&\\
		& dIVW&  $0$         & 0.402 & 0.090 & 0.089    & 95.0    \\
		& dIVW&  5.45        & 0.406 & 0.131 & 0.127     & 95.0    \\
		& dIVW&  $\sqrt{2\log p}=3.90$   & 0.402 & 0.079 & 0.078     & 95.0    \\
		& dIVW&  MR-EO $\approx 2.17$      & 0.396 & 0.061 & 0.060     & 94.9
		\\&&&&&\\
		& MR-raps&   0   & 0.401 & 0.085 & 0.085    & 94.8    \\
		& MR-raps-shrink&  Bayes shrinkage & 0.399 & 0.062 & 0.062     & 95.2    \\[0.5ex] \hline  \\[-2ex]
		5 & IVW& $0$                  & 0.193 & 0.024 & 0.023     & 0    \\
		$s=1000$& IVW& 5.45            & \multicolumn{4}{l}{ select no IV over 25\% of runs}   \\
		$p=2000$		& IVW&   $\sqrt{2\log p}=3.90$   & 0.364 & 0.099 & 0.099     & 92.8    \\
		$n=10000$ &&&&&\\
		& dIVW&$0$       & 0.401 & 0.051 & 0.051     & 94.7    \\
		& dIVW& 5.45            &     \multicolumn{4}{l}{ select no IV over 25\% of runs} \\
		& dIVW&$\sqrt{2\log p}=3.90 $         & 0.405& 0.112 & 0.111     & 95.3    \\
		& dIVW&MR-EO $\approx 1.10$  & 0.394 & 0.048 & 0.048     & 94.6    \\&&&&&\\
		& MR-raps&   0   & 0.400 & 0.049 & 0.049     & 94.5    \\
		& MR-raps-shrink&   Bayes shrinkage     & 0.399 & 0.047 & 0.047     & 94.8    \\ [0.5ex]\hline  \\[-2ex]
		6     & IVW & $0$                   & 0.330 & 0.014 & 0.014     & 0.1    \\
		$s=1000$ & IVW&  5.45                   & 0.390 & 0.024 & 0.024     & 93.0    \\
		$p=2000$			& IVW&  $\sqrt{2\log p}=3.90$       & 0.386 & 0.019 & 0.018    & 88.1    \\ 
		$n=50000$	&&&&&\\
		& dIVW&  $0$         & 0.400 & 0.017 & 0.017    & 94.8    \\
		& dIVW&  5.45        & 0.400 & 0.024 & 0.024     & 94.8    \\
		& dIVW&  $\sqrt{2\log p}=3.90$   & 0.400 & 0.019 & 0.019     & 94.5    \\
		& dIVW&  MR-EO $\approx 1.31$      & 0.399 & 0.017 & 0.017     & 94.9
		\\&&&&&\\
		& MR-raps&   0   & 0.400 & 0.017 & 0.017    & 94.8    \\
		& MR-raps-shrink&  Bayes shrinkage & 0.400 & 0.017 & 0.017     & 94.8    \\[0.5ex]\hline  \\[-2ex]
		7 & IVW& $0$                  & 0.196 & 0.023 & 0.023     & 0    \\
		$s=2000$& IVW& 5.45            & \multicolumn{4}{l}{ select no IV over 81\% of runs}   \\
		$p=2000$		& IVW&   $\sqrt{2\log p}=3.90$   & 0.343 & 0.197 & 0.195     & 93.5    \\
		$n=10000$ &&&&&\\
		& dIVW&$0$       & 0.400 & 0.050 & 0.049     & 94.5    \\
		& dIVW& 5.45            &     \multicolumn{4}{l}{ select no IV over 81\% of runs} \\
		& dIVW&$\sqrt{2\log p}=3.90 $         & 0.423& 0.622 & 0.428     & 96.5    \\
		& dIVW&MR-EO $\approx 0.44$  & 0.395 & 0.051 & 0.050     & 94.4    \\&&&&&\\
		& MR-raps&    0  & 0.400 & 0.048 & 0.047     & 94.8    \\
		& MR-raps-shrink&     Bayes shrinkage   & 0.399 & 0.048 & 0.047     & 94.8    \\ [0.5ex]\hline
	\end{tabular}
\end{table}

\section{Real Data Example}
\label{sec: real}
We apply our methods to the BMI-CAD example described in Section \ref{subsec: simu BMI-CAD}. Table \ref{tb: real} summarizes the results, where ${\rm dIVW}_\alpha$ denotes the dIVW estimator developed under balanced horizontal pleiotropy, MR-${\rm raps}_\alpha$ and MR-raps-${\rm shrink}_\alpha$ are MR-raps estimators that account for balanced  horizontal  pleiotropy by setting the over.dispersion parameter in the \emph{mr.raps} R package to be \emph{TRUE}. 

We make the following comments. First, we see that the MR-mode estimator with or without screening is very unstable. 
Second, in light of our simulation result under Case 3, we suspect that the IVW estimate 0.315 without screening ($\lambda = 0$)  is slightly biased towards zero, compared with the dIVW estimate 0.365, although the difference is not statistically significant. Third, except for MR-Egger and MR-mode, selecting IVs based on genome-wide significance (i.e., $\lambda=5.45$)
produces point estimates between 0.278 and 0.287 and larger SEs across all methods, most likely because too many IVs are screened out.  Fourth, except for the IVW estimator without screening, the dIVW estimator with MR-EO achieves the smallest SE among all estimates. But, since the dIVW estimate based on MR-EO has the same SE as the dIVW estimate without screening, screening is probably not necessary for this dataset with many weak IVs; this is also supported by the fact that there is not much difference between MR-raps and MR-raps-shrink. Fifth, the estimators accounting for balanced horizontal pleiotropy are similar to those without it, except for an expected increase in SEs due to the random effect terms.



\begin{table}[t]
	\centering
	\caption{Point estimates of exposure effect and their SEs (in parentheses) from different MR methods in the BMI-CAD example.} 
	
	\label{tb: real}
	\begin{tabular}{lccccc} \hline
		&       &   &   & {\tiny MR-EO} & {\tiny MR-EO$_\alpha$} \\ \cline{5-6}  \\[-2ex]
		$\lambda$        & 0      & 5.45   & $\sqrt{2\log p}\! =\! 3.75$   & 0.57 & 0.59 \\  \hline   \\[-2ex]
		{\# of IVs selected }
		& 1119          & 44            & 165           & 1029 & 1023       \\
		$\hat{\kappa}_\lambda\sqrt{\hat{p}_\lambda} \! /\! \max (1, \! \lambda^2)$ & 226.8 & 16.3 &25.7 &232.4&  233.1\\ \hline  \\[-2ex]
		IVW                      & 0.315 (0.050) & 0.282 (0.084) & 0.319 (0.068) &               \\
		dIVW                     & 0.365 (0.058) & 0.287 (0.085) & 0.331 (0.071) & 
		\multicolumn{2}{c}{0.345  (0.058)} \\
		$\mbox{dIVW}_\alpha$ & 0.365 (0.067) & 0.287 (0.100) &0.331 (0.082) & \multicolumn{2}{c}{0.345 (0.067)}\\
		MR-Egger                 & 0.386 (0.077) & 0.513 (0.184) & 0.390 (0.129)                \\
		MR-median                & 0.322 (0.097) & 0.278 (0.124) & 0.304 (0.116) &               \\
		MR-mode                  & 0.739 (402.9) & 0.499 (0.402) & 0.488 (4.241) &               \\  MR-raps                    & 0.382 (0.061) &   \\ 
		$\mbox{MR-raps}_\alpha$ 	   &  0.367 (0.067)    & \\
		\multicolumn{2}{l}{  MR-raps-shrink (Bayes shrinkage) }
		& 0.388  (0.060) \\
		\multicolumn{2}{l}{$\mbox{MR-raps-shrink}_\alpha$(Bayes shrinkage)} & 0.374  (0.067)   \\[0.5ex]
		\hline  \\[-2ex]
		\multicolumn{6}{l}{The subscript $\alpha$ indicates application under balanced horizontal pleiotropy }
	\end{tabular}
\end{table}

Following \cite{zhao2018statistical}, 
we run a diagnostic to assess the plausibility of Assumption \ref{assump: 2}
by constructing a Quantile-Quantile  plot of the standardized residuals, 
$
(\hat{\Gamma}_j-\hat{\beta}_{ \rmdivw}\hat{\gamma}_j)/
(\hat\sigma_{Yj}^2+\hat{\beta}_{ \rmdivw}^2\hat\sigma_{Xj}^2)^{1/2}$, $j=1,...,p$. 
Figure \ref{figure: qq} shows the result. Since the residuals line up close to the 45-degree line, Assumption \ref{assump: 2} is likely to hold for this example.
In the Supplementary Material, a similar figure is obtained for assessing the plausibility of Assumption $2'$.

\begin{figure}[h]
	\includegraphics[scale=0.7]{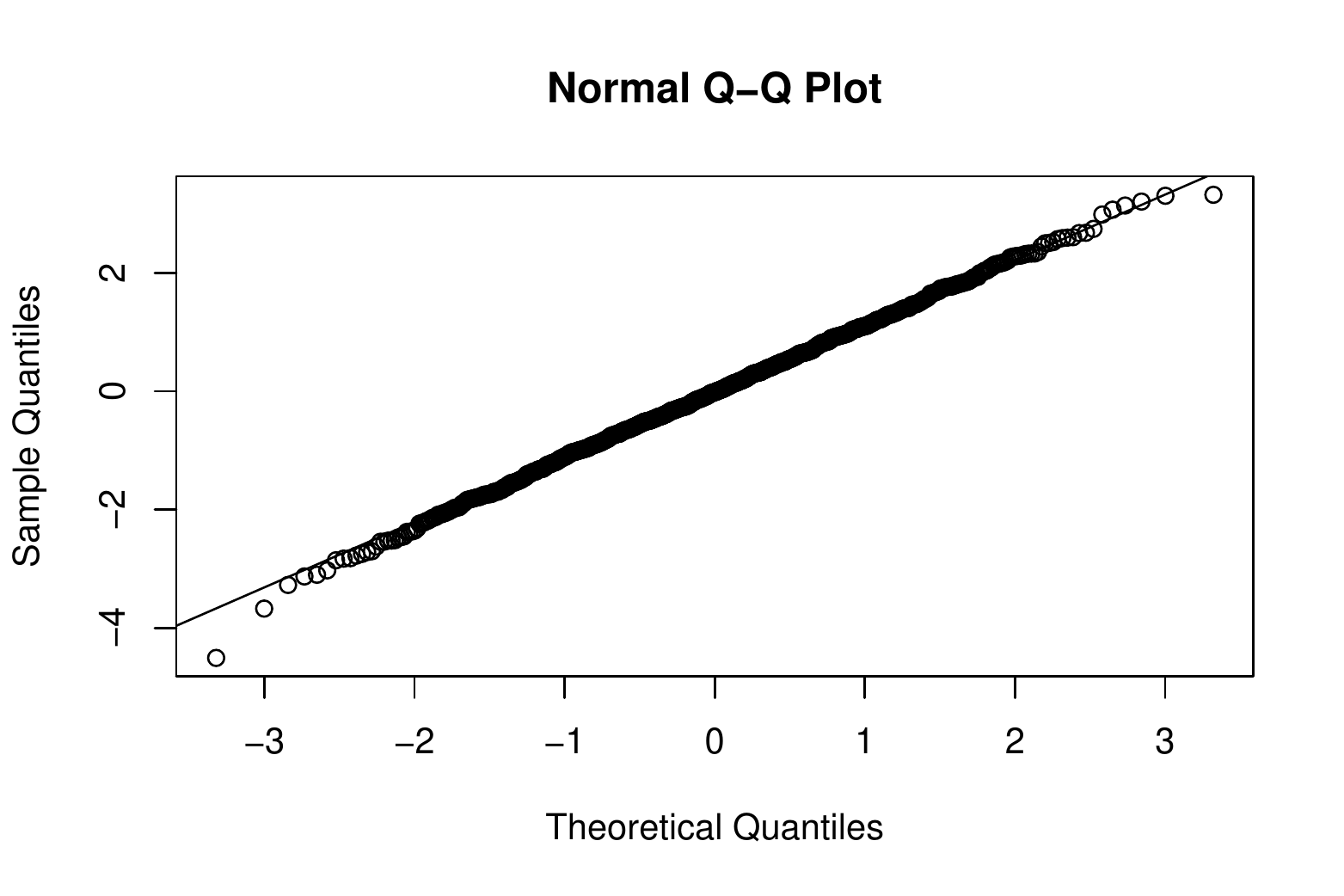}
	\caption[]{Quantile-Quantile plot of the standardized residuals  against a standard normal.}
	\label{figure: qq}
\end{figure}

\section{Summary and Discussion}
\label{sec: discussion} 
In  two-sample summary-data MR studies, we show that the IVW estimator requires stringent conditions on the average strength of IVs for consistency and asymptotic normality. The IVW estimator with screening relaxes these conditions somewhat, but requires carefully choosing a threshold $\lambda$ and a third independent dataset. We then propose a simple modification of the IVW estimator,  
called the debiased IVW (dIVW) estimator. The dIVW estimator, with or without screening, is shown to be consistent and asymptotically normal under conditions that are much weaker than those required by the IVW estimator, with or without screening. Finally, we provide some theoretical and numerical results on assuming the commonly invoked known-variance condition.

While our work primarily focuses on the ``standard'' IVW estimator commonly used in practice, as suggested by the anonymous referees and the editor, the standard IVW estimator, with or without screening, can be viewed as instances of the generalized IVW estimator
\begin{align*}
	\frac{	\sum_{j=1}^{p} \hat\beta_j f(\hat w_j)}{\sum_{j=1}^p f(\hat w_j )}, \
\end{align*}
where $ \hat \beta_j = \hat\Gamma_j /\hat\gamma_j$, and $ f $ is a general weighting function. This general weighting function can encompass soft thresholding and other IV selection procedures. However, this class of estimators does not include the proposed dIVW estimator since the weights from the dIVW estimator will not sum to 1. Nevertheless, extending the current theory to better understand this broader class of IVW estimators under many weak IVs is an important direction for future research.

Finally, based on our theoretical and simulation work, we make three recommendations for practice. 
First, we argue that the dIVW estimator without screening should be the default baseline estimator for two-sample summary-data  MR studies instead of the IVW estimator. It is as simple as the IVW estimator, 
and has provable robustness against many weak instruments and balanced horizontal pleiotropy, lending itself as the baseline estimator for investigating more complex pleiotropy.
Second,   if there are many irrelevant IVs and summary statistics from a third independent selection dataset are available, we may improve the efficiency of the dIVW estimator by screening with threshold  $\lambda$ produced by the MR-EO algorithm; we discourage the use of the genome-wide significance p-value threshold $\lambda \approx 5.45$  as it tends to screen out too many IVs. Third, for the promised theoretical properties of the proposed dIVW estimator to hold, it is important to perform diagnostics by constructing Quantile-Quantile  plot of the standardized residuals and also checking that $ \hat\kappa_{\lambda}\sqrt{\hat p_\lambda} /\max (1,\lambda^2)$ is at least greater than 20.

\section*{Acknowledgements}
The authors would like to thank the anonymous referees, an Associate
Editor and the Editor for their constructive comments that led to a much improved  paper.

The research of Jun Shao was partially supported by the National Natural Science Foundation of China Grant 11831008 and the U.S. National Science Foundation Grant DMS-1914411.

The research of Hyunseung Kang was supported in part by the U.S. National Science Foundation Grant DMS-1811414.

%
%

\section*{Software and Reproducibility} R code for the methods proposed in this paper can be found in the R package mr.divw, which is posed at \url{https://github.com/tye27/mr.divw}. Numerical examples in this article can be reproduced by running examples in the R package.

\begin{supplement}
\textbf{Supplementary Material: Debiased Inverse-Variance Weighted Estimator in Two-Sample Summary-Data Mendelian Randomization}.
We provide additional numerical results and theoretical proofs for the theorems in the paper.
\end{supplement}


\bibliographystyle{imsart-nameyear}
\bibliography{reference}       

\begin{thebibliography}{44}

\bibitem[\protect\citeauthoryear{Abbott et~al.}{2018}]{Abbott2018}
\begin{barticle}[author]
\bauthor{\bsnm{Abbott},~\bfnm{L}\binits{L.}},
  \bauthor{\bsnm{Bryant},~\bfnm{S}\binits{S.}},
  \bauthor{\bsnm{Churchhouse},~\bfnm{C}\binits{C.}} \AND
  \bauthor{\bparticle{et} \bsnm{al.}}
(\byear{2018}).
\btitle{Round 2 GWAS Results of Thousands of Phenotypes in the UK BioBank.}
\bjournal{http://www.nealelab.is/uk-biobank}
\bpages{(14 November 2018, date last accessed)}.
\end{barticle}
\endbibitem

\bibitem[\protect\citeauthoryear{Akiyama et~al.}{2017}]{Akiyama:2017aa}
\begin{barticle}[author]
\bauthor{\bsnm{Akiyama},~\bfnm{Masato}\binits{M.}},
  \bauthor{\bsnm{Okada},~\bfnm{Yukinori}\binits{Y.}},
  \bauthor{\bsnm{Kanai},~\bfnm{Masahiro}\binits{M.}},
  \bauthor{\bsnm{Takahashi},~\bfnm{Atsushi}\binits{A.}},
  \bauthor{\bsnm{Momozawa},~\bfnm{Yukihide}\binits{Y.}},
  \bauthor{\bsnm{Ikeda},~\bfnm{Masashi}\binits{M.}},
  \bauthor{\bsnm{Iwata},~\bfnm{Nakao}\binits{N.}},
  \bauthor{\bsnm{Ikegawa},~\bfnm{Shiro}\binits{S.}},
  \bauthor{\bsnm{Hirata},~\bfnm{Makoto}\binits{M.}},
  \bauthor{\bsnm{Matsuda},~\bfnm{Koichi}\binits{K.}},
  \bauthor{\bsnm{Iwasaki},~\bfnm{Motoki}\binits{M.}},
  \bauthor{\bsnm{Yamaji},~\bfnm{Taiki}\binits{T.}},
  \bauthor{\bsnm{Sawada},~\bfnm{Norie}\binits{N.}},
  \bauthor{\bsnm{Hachiya},~\bfnm{Tsuyoshi}\binits{T.}},
  \bauthor{\bsnm{Tanno},~\bfnm{Kozo}\binits{K.}},
  \bauthor{\bsnm{Shimizu},~\bfnm{Atsushi}\binits{A.}},
  \bauthor{\bsnm{Hozawa},~\bfnm{Atsushi}\binits{A.}},
  \bauthor{\bsnm{Minegishi},~\bfnm{Naoko}\binits{N.}},
  \bauthor{\bsnm{Tsugane},~\bfnm{Shoichiro}\binits{S.}},
  \bauthor{\bsnm{Yamamoto},~\bfnm{Masayuki}\binits{M.}},
  \bauthor{\bsnm{Kubo},~\bfnm{Michiaki}\binits{M.}} \AND
  \bauthor{\bsnm{Kamatani},~\bfnm{Yoichiro}\binits{Y.}}
(\byear{2017}).
\btitle{Genome-wide association study identifies 112 new loci for body mass
  index in the Japanese population}.
\bjournal{Nature Genetics}
\bvolume{49}
\bpages{1458-1467}.
\end{barticle}
\endbibitem

\bibitem[\protect\citeauthoryear{Anderson, Vamanamurthy and
  Vuorinen}{2006}]{Anderson:2006aa}
\begin{barticle}[author]
\bauthor{\bsnm{Anderson},~\bfnm{Glen}\binits{G.}},
  \bauthor{\bsnm{Vamanamurthy},~\bfnm{Mavina}\binits{M.}} \AND
  \bauthor{\bsnm{Vuorinen},~\bfnm{Matti}\binits{M.}}
(\byear{2006}).
\btitle{Monotonicity Rules in Calculus}.
\bvolume{113}
\bpages{805--816}.
\bdoi{10.2307/27642062}
\end{barticle}
\endbibitem

\bibitem[\protect\citeauthoryear{Andrews and Stock}{2005}]{NBERt0313}
\begin{btechreport}[author]
\bauthor{\bsnm{Andrews},~\bfnm{Donald W.~K.}\binits{D.~W.~K.}} \AND
  \bauthor{\bsnm{Stock},~\bfnm{James~H}\binits{J.~H.}}
(\byear{2005}).
\btitle{Inference with Weak Instruments}.
\btype{Working Paper} No. \bnumber{313},
\bpublisher{National Bureau of Economic Research}.
\bdoi{10.3386/t0313}
\end{btechreport}
\endbibitem

\bibitem[\protect\citeauthoryear{Angrist and Krueger}{2001}]{Angrist2001}
\begin{barticle}[author]
\bauthor{\bsnm{Angrist},~\bfnm{Joshua~D.}\binits{J.~D.}} \AND
  \bauthor{\bsnm{Krueger},~\bfnm{Alan~B.}\binits{A.~B.}}
(\byear{2001}).
\btitle{Instrumental Variables and the Search for Identification: From Supply
  and Demand to Natural Experiments}.
\bjournal{Journal of Economic Perspectives}
\bvolume{15}
\bpages{69-85}.
\bdoi{10.1257/jep.15.4.69}
\end{barticle}
\endbibitem

\bibitem[\protect\citeauthoryear{Baiocchi, Cheng and
  Small}{2014}]{Baiocchi:2014aa}
\begin{barticle}[author]
\bauthor{\bsnm{Baiocchi},~\bfnm{Michael}\binits{M.}},
  \bauthor{\bsnm{Cheng},~\bfnm{Jing}\binits{J.}} \AND
  \bauthor{\bsnm{Small},~\bfnm{Dylan~S}\binits{D.~S.}}
(\byear{2014}).
\btitle{Instrumental variable methods for causal inference}.
\bjournal{Statistics in medicine}
\bvolume{33}
\bpages{2297--2340}.
\bdoi{10.1002/sim.6128}
\end{barticle}
\endbibitem

\bibitem[\protect\citeauthoryear{Bowden, Davey~Smith and
  Burgess}{2015}]{Bowden:2015aa}
\begin{barticle}[author]
\bauthor{\bsnm{Bowden},~\bfnm{Jack}\binits{J.}},
  \bauthor{\bsnm{Davey~Smith},~\bfnm{George}\binits{G.}} \AND
  \bauthor{\bsnm{Burgess},~\bfnm{Stephen}\binits{S.}}
(\byear{2015}).
\btitle{Mendelian randomization with invalid instruments: effect estimation and
  bias detection through Egger regression}.
\bjournal{International Journal of Epidemiology}
\bvolume{44}
\bpages{512--525}.
\bdoi{10.1093/ije/dyv080}
\end{barticle}
\endbibitem

\bibitem[\protect\citeauthoryear{Bowden et~al.}{2016}]{Bowden:2016aa}
\begin{barticle}[author]
\bauthor{\bsnm{Bowden},~\bfnm{Jack}\binits{J.}},
  \bauthor{\bsnm{Davey~Smith},~\bfnm{George}\binits{G.}},
  \bauthor{\bsnm{Haycock},~\bfnm{Philip~C.}\binits{P.~C.}} \AND
  \bauthor{\bsnm{Burgess},~\bfnm{Stephen}\binits{S.}}
(\byear{2016}).
\btitle{Consistent Estimation in Mendelian Randomization with Some Invalid
  Instruments Using a Weighted Median Estimator}.
\bjournal{Genetic Epidemiology}
\bvolume{40}
\bpages{304--314}.
\bdoi{10.1002/gepi.21965}
\end{barticle}
\endbibitem

\bibitem[\protect\citeauthoryear{Bowden et~al.}{2017}]{Bowden:2017aa}
\begin{barticle}[author]
\bauthor{\bsnm{Bowden},~\bfnm{Jack}\binits{J.}}, \bauthor{\bsnm{Del
  Greco~M},~\bfnm{Fabiola}\binits{F.}},
  \bauthor{\bsnm{Minelli},~\bfnm{Cosetta}\binits{C.}},
  \bauthor{\bsnm{Davey~Smith},~\bfnm{George}\binits{G.}},
  \bauthor{\bsnm{Sheehan},~\bfnm{Nuala}\binits{N.}} \AND
  \bauthor{\bsnm{Thompson},~\bfnm{John}\binits{J.}}
(\byear{2017}).
\btitle{A framework for the investigation of pleiotropy in two-sample summary
  data Mendelian randomization}.
\bjournal{Statistics in Medicine}
\bvolume{36}
\bpages{1783--1802}.
\bdoi{10.1002/sim.7221}
\end{barticle}
\endbibitem

\bibitem[\protect\citeauthoryear{Bowden et~al.}{2019}]{Bowden:2019aa}
\begin{barticle}[author]
\bauthor{\bsnm{Bowden},~\bfnm{Jack}\binits{J.}}, \bauthor{\bsnm{Del
  Greco~M},~\bfnm{Fabiola}\binits{F.}},
  \bauthor{\bsnm{Minelli},~\bfnm{Cosetta}\binits{C.}},
  \bauthor{\bsnm{Zhao},~\bfnm{Qingyuan}\binits{Q.}},
  \bauthor{\bsnm{Lawlor},~\bfnm{Debbie~A}\binits{D.~A.}},
  \bauthor{\bsnm{Sheehan},~\bfnm{Nuala~A}\binits{N.~A.}},
  \bauthor{\bsnm{Thompson},~\bfnm{John}\binits{J.}} \AND
  \bauthor{\bsnm{Davey~Smith},~\bfnm{George}\binits{G.}}
(\byear{2019}).
\btitle{Improving the accuracy of two-sample summary-data Mendelian
  randomization: moving beyond the NOME assumption}.
\bjournal{International journal of epidemiology}
\bvolume{48}
\bpages{728--742}.
\bdoi{10.1093/ije/dyy258}
\end{barticle}
\endbibitem

\bibitem[\protect\citeauthoryear{Burgess, Butterworth and
  Thompson}{2013}]{Burgess:2013aa}
\begin{barticle}[author]
\bauthor{\bsnm{Burgess},~\bfnm{Stephen}\binits{S.}},
  \bauthor{\bsnm{Butterworth},~\bfnm{Adam}\binits{A.}} \AND
  \bauthor{\bsnm{Thompson},~\bfnm{Simon~G.}\binits{S.~G.}}
(\byear{2013}).
\btitle{Mendelian Randomization Analysis With Multiple Genetic Variants Using
  Summarized Data}.
\bjournal{Genetic Epidemiology}
\bvolume{37}
\bpages{658--665}.
\bdoi{10.1002/gepi.21758}
\end{barticle}
\endbibitem

\bibitem[\protect\citeauthoryear{Burgess, Small and
  Thompson}{2015}]{Burgess:2015ab}
\begin{barticle}[author]
\bauthor{\bsnm{Burgess},~\bfnm{Stephen}\binits{S.}},
  \bauthor{\bsnm{Small},~\bfnm{Dylan~S}\binits{D.~S.}} \AND
  \bauthor{\bsnm{Thompson},~\bfnm{Simon~G}\binits{S.~G.}}
(\byear{2015}).
\btitle{A review of instrumental variable estimators for Mendelian
  randomization}.
\bjournal{Statistical Methods in Medical Research}
\bvolume{26}
\bpages{2333--2355}.
\bdoi{10.1177/0962280215597579}
\end{barticle}
\endbibitem

\bibitem[\protect\citeauthoryear{Burgess and Thompson}{2011}]{Burgess:2011ab}
\begin{barticle}[author]
\bauthor{\bsnm{Burgess},~\bfnm{Stephen}\binits{S.}} \AND
  \bauthor{\bsnm{Thompson},~\bfnm{Simon~G.}\binits{S.~G.}}
(\byear{2011}).
\btitle{Bias in causal estimates from Mendelian randomization studies with weak
  instruments}.
\bjournal{Statistics in Medicine}
\bvolume{30}
\bpages{1312--1323}.
\bdoi{10.1002/sim.4197}
\end{barticle}
\endbibitem

\bibitem[\protect\citeauthoryear{Burgess, Thompson and
  Collaboration}{2011}]{Burgess:2011aa}
\begin{barticle}[author]
\bauthor{\bsnm{Burgess},~\bfnm{Stephen}\binits{S.}},
  \bauthor{\bsnm{Thompson},~\bfnm{Simon~G}\binits{S.~G.}} \AND
  \bauthor{\bsnm{Collaboration},~\bfnm{CRP CHD~Genetics}\binits{C.~C.~G.}}
(\byear{2011}).
\btitle{Avoiding bias from weak instruments in Mendelian randomization
  studies}.
\bjournal{International Journal of Epidemiology}
\bvolume{40}
\bpages{755--764}.
\bdoi{10.1093/ije/dyr036}
\end{barticle}
\endbibitem

\bibitem[\protect\citeauthoryear{Burgess and Thompson}{2012}]{Burgess:2012aa}
\begin{barticle}[author]
\bauthor{\bsnm{Burgess},~\bfnm{Stephen}\binits{S.}} \AND
  \bauthor{\bsnm{Thompson},~\bfnm{Simon~G.}\binits{S.~G.}}
(\byear{2012}).
\btitle{Improving bias and coverage in instrumental variable analysis with weak
  instruments for continuous and binary outcomes}.
\bjournal{Statistics in Medicine}
\bvolume{31}
\bpages{1582--1600}.
\bdoi{10.1002/sim.4498}
\end{barticle}
\endbibitem

\bibitem[\protect\citeauthoryear{Chao and Swanson}{2005}]{Chao:2005aa}
\begin{barticle}[author]
\bauthor{\bsnm{Chao},~\bfnm{John~C.}\binits{J.~C.}} \AND
  \bauthor{\bsnm{Swanson},~\bfnm{Norman~R.}\binits{N.~R.}}
(\byear{2005}).
\btitle{Consistent Estimation with a Large Number of Weak Instruments}.
\bjournal{Econometrica}
\bvolume{73}
\bpages{1673--1692}.
\bdoi{10.1111/j.1468-0262.2005.00632.x}
\end{barticle}
\endbibitem

\bibitem[\protect\citeauthoryear{Cirulli and Goldstein}{2010}]{Cirulli:2010aa}
\begin{barticle}[author]
\bauthor{\bsnm{Cirulli},~\bfnm{Elizabeth~T.}\binits{E.~T.}} \AND
  \bauthor{\bsnm{Goldstein},~\bfnm{David~B.}\binits{D.~B.}}
(\byear{2010}).
\btitle{Uncovering the roles of rare variants in common disease through
  whole-genome sequencing}.
\bjournal{Nature Reviews Genetics}
\bvolume{11}
\bpages{415-425}.
\end{barticle}
\endbibitem

\bibitem[\protect\citeauthoryear{Corbin et~al.}{2016}]{Corbin:2016aa}
\begin{barticle}[author]
\bauthor{\bsnm{Corbin},~\bfnm{Laura~J.}\binits{L.~J.}},
  \bauthor{\bsnm{Richmond},~\bfnm{Rebecca~C.}\binits{R.~C.}},
  \bauthor{\bsnm{Wade},~\bfnm{Kaitlin~H.}\binits{K.~H.}},
  \bauthor{\bsnm{Burgess},~\bfnm{Stephen}\binits{S.}},
  \bauthor{\bsnm{Bowden},~\bfnm{Jack}\binits{J.}},
  \bauthor{\bsnm{Smith},~\bfnm{George~Davey}\binits{G.~D.}} \AND
  \bauthor{\bsnm{Timpson},~\bfnm{Nicholas~J.}\binits{N.~J.}}
(\byear{2016}).
\btitle{BMI as a Modifiable Risk Factor for Type 2 Diabetes: Refining and
  Understanding Causal Estimates Using Mendelian Randomization}.
\bjournal{Diabetes}
\bvolume{65}
\bpages{3002}.
\bdoi{10.2337/db16-0418}
\end{barticle}
\endbibitem

\bibitem[\protect\citeauthoryear{Davey~Smith and
  Ebrahim}{2003}]{Davey-Smith:2003aa}
\begin{barticle}[author]
\bauthor{\bsnm{Davey~Smith},~\bfnm{George}\binits{G.}} \AND
  \bauthor{\bsnm{Ebrahim},~\bfnm{Shah}\binits{S.}}
(\byear{2003}).
\btitle{`Mendelian randomization': can genetic epidemiology contribute to
  understanding environmental determinants of disease?}
\bjournal{International Journal of Epidemiology}
\bvolume{32}
\bpages{1--22}.
\bdoi{10.1093/ije/dyg070}
\end{barticle}
\endbibitem

\bibitem[\protect\citeauthoryear{Didelez and Sheehan}{2007}]{Didelez:2007aa}
\begin{barticle}[author]
\bauthor{\bsnm{Didelez},~\bfnm{Vanessa}\binits{V.}} \AND
  \bauthor{\bsnm{Sheehan},~\bfnm{Nuala}\binits{N.}}
(\byear{2007}).
\btitle{Mendelian randomization as an instrumental variable approach to causal
  inference}.
\bjournal{Statistical Methods in Medical Research}
\bvolume{16}
\bpages{309--330}.
\bdoi{10.1177/0962280206077743}
\end{barticle}
\endbibitem

\bibitem[\protect\citeauthoryear{Gibson}{2012}]{Gibson:2012aa}
\begin{barticle}[author]
\bauthor{\bsnm{Gibson},~\bfnm{Greg}\binits{G.}}
(\byear{2012}).
\btitle{Rare and common variants: twenty arguments}.
\bjournal{Nature Reviews Genetics}
\bvolume{13}
\bpages{135 - 145}.
\end{barticle}
\endbibitem

\bibitem[\protect\citeauthoryear{Hartwig, Davey~Smith and
  Bowden}{2017}]{Hartwig:2017aa}
\begin{barticle}[author]
\bauthor{\bsnm{Hartwig},~\bfnm{Fernando~Pires}\binits{F.~P.}},
  \bauthor{\bsnm{Davey~Smith},~\bfnm{George}\binits{G.}} \AND
  \bauthor{\bsnm{Bowden},~\bfnm{Jack}\binits{J.}}
(\byear{2017}).
\btitle{Robust inference in summary data Mendelian randomization via the zero
  modal pleiotropy assumption}.
\bjournal{International Journal of Epidemiology}
\bvolume{46}
\bpages{1985--1998}.
\bdoi{10.1093/ije/dyx102}
\end{barticle}
\endbibitem

\bibitem[\protect\citeauthoryear{Hemani, Bowden and
  Davey~Smith}{2018}]{Hemani:2018aa}
\begin{barticle}[author]
\bauthor{\bsnm{Hemani},~\bfnm{Gibran}\binits{G.}},
  \bauthor{\bsnm{Bowden},~\bfnm{Jack}\binits{J.}} \AND
  \bauthor{\bsnm{Davey~Smith},~\bfnm{George}\binits{G.}}
(\byear{2018}).
\btitle{Evaluating the potential role of pleiotropy in Mendelian randomization
  studies}.
\bjournal{Human molecular genetics}
\bvolume{27}
\bpages{R195--R208}.
\bdoi{10.1093/hmg/ddy163}
\end{barticle}
\endbibitem

\bibitem[\protect\citeauthoryear{Hemani et~al.}{2018}]{Hemani:2018ab}
\begin{barticle}[author]
\bauthor{\bsnm{Hemani},~\bfnm{Gibran}\binits{G.}},
  \bauthor{\bsnm{Zheng},~\bfnm{Jie}\binits{J.}},
  \bauthor{\bsnm{Elsworth},~\bfnm{Benjamin}\binits{B.}},
  \bauthor{\bsnm{Wade},~\bfnm{Kaitlin~H}\binits{K.~H.}},
  \bauthor{\bsnm{Haberland},~\bfnm{Valeriia}\binits{V.}},
  \bauthor{\bsnm{Baird},~\bfnm{Denis}\binits{D.}},
  \bauthor{\bsnm{Laurin},~\bfnm{Charles}\binits{C.}},
  \bauthor{\bsnm{Burgess},~\bfnm{Stephen}\binits{S.}},
  \bauthor{\bsnm{Bowden},~\bfnm{Jack}\binits{J.}},
  \bauthor{\bsnm{Langdon},~\bfnm{Ryan}\binits{R.}},
  \bauthor{\bsnm{Tan},~\bfnm{Vanessa~Y}\binits{V.~Y.}},
  \bauthor{\bsnm{Yarmolinsky},~\bfnm{James}\binits{J.}},
  \bauthor{\bsnm{Shihab},~\bfnm{Hashem~A}\binits{H.~A.}},
  \bauthor{\bsnm{Timpson},~\bfnm{Nicholas~J}\binits{N.~J.}},
  \bauthor{\bsnm{Evans},~\bfnm{David~M}\binits{D.~M.}},
  \bauthor{\bsnm{Relton},~\bfnm{Caroline}\binits{C.}},
  \bauthor{\bsnm{Martin},~\bfnm{Richard~M}\binits{R.~M.}},
  \bauthor{\bsnm{Davey~Smith},~\bfnm{George}\binits{G.}},
  \bauthor{\bsnm{Gaunt},~\bfnm{Tom~R}\binits{T.~R.}},
  \bauthor{\bsnm{Haycock},~\bfnm{Philip~C}\binits{P.~C.}} \AND
  \bauthor{\bsnm{Loos},~\bfnm{Ruth}\binits{R.}}
(\byear{2018}).
\btitle{The MR-Base platform supports systematic causal inference across the
  human phenome}.
\bjournal{eLife}
\bvolume{7}
\bpages{e34408}.
\bdoi{10.7554/eLife.34408}
\end{barticle}
\endbibitem

\bibitem[\protect\citeauthoryear{Hernan and Robins}{2020}]{Hernan-Robins}
\begin{bbook}[author]
\bauthor{\bsnm{Hernan},~\bfnm{Miguel~A}\binits{M.~A.}} \AND
  \bauthor{\bsnm{Robins},~\bfnm{James~M}\binits{J.~M.}}
(\byear{2020}).
\btitle{Causal Inference: What If}.
\bpublisher{Boca Raton: Chapman \& Hall/CRC}.
\end{bbook}
\endbibitem

\bibitem[\protect\citeauthoryear{Kamstrup et~al.}{2009}]{Kamstrup:2009aa}
\begin{barticle}[author]
\bauthor{\bsnm{Kamstrup},~\bfnm{Pia~R.}\binits{P.~R.}},
  \bauthor{\bsnm{Tybj{\ae}rg-Hansen},~\bfnm{Anne}\binits{A.}},
  \bauthor{\bsnm{Steffensen},~\bfnm{Rolf}\binits{R.}} \AND
  \bauthor{\bsnm{Nordestgaard},~\bfnm{B{\o}rge~G.}\binits{B.~G.}}
(\byear{2009}).
\btitle{Genetically Elevated Lipoprotein(a) and Increased Risk of Myocardial
  Infarction}.
\bjournal{JAMA}
\bvolume{301}
\bpages{2331--2339}.
\bdoi{10.1001/jama.2009.801}
\end{barticle}
\endbibitem

\bibitem[\protect\citeauthoryear{Lawlor et~al.}{2008}]{Lawlor:2008aa}
\begin{barticle}[author]
\bauthor{\bsnm{Lawlor},~\bfnm{Debbie~A.}\binits{D.~A.}},
  \bauthor{\bsnm{Harbord},~\bfnm{Roger~M.}\binits{R.~M.}},
  \bauthor{\bsnm{Sterne},~\bfnm{Jonathan A.~C.}\binits{J.~A.~C.}},
  \bauthor{\bsnm{Timpson},~\bfnm{Nic}\binits{N.}} \AND
  \bauthor{\bsnm{Davey~Smith},~\bfnm{George}\binits{G.}}
(\byear{2008}).
\btitle{Mendelian randomization: Using genes as instruments for making causal
  inferences in epidemiology}.
\bjournal{Statistics in Medicine}
\bvolume{27}
\bpages{1133--1163}.
\bdoi{10.1002/sim.3034}
\end{barticle}
\endbibitem

\bibitem[\protect\citeauthoryear{Pierce and Burgess}{2013}]{Pierce:2013aa}
\begin{barticle}[author]
\bauthor{\bsnm{Pierce},~\bfnm{Brandon~L.}\binits{B.~L.}} \AND
  \bauthor{\bsnm{Burgess},~\bfnm{Stephen}\binits{S.}}
(\byear{2013}).
\btitle{Efficient Design for Mendelian Randomization Studies: Subsample and
  2-Sample Instrumental Variable Estimators}.
\bjournal{American Journal of Epidemiology}
\bvolume{178}
\bpages{1177--1184}.
\bdoi{10.1093/aje/kwt084}
\end{barticle}
\endbibitem

\bibitem[\protect\citeauthoryear{Pingault et~al.}{2018}]{Pingault:2018aa}
\begin{barticle}[author]
\bauthor{\bsnm{Pingault},~\bfnm{Jean-Baptiste}\binits{J.-B.}},
  \bauthor{\bsnm{O'Reilly},~\bfnm{Paul~F.}\binits{P.~F.}},
  \bauthor{\bsnm{Schoeler},~\bfnm{Tabea}\binits{T.}},
  \bauthor{\bsnm{Ploubidis},~\bfnm{George~B.}\binits{G.~B.}},
  \bauthor{\bsnm{Rijsdijk},~\bfnm{Fr{\"u}hling}\binits{F.}} \AND
  \bauthor{\bsnm{Dudbridge},~\bfnm{Frank}\binits{F.}}
(\byear{2018}).
\btitle{Using genetic data to strengthen causal inference in observational
  research}.
\bjournal{Nature Reviews Genetics}
\bvolume{19}
\bpages{566--580}.
\bdoi{10.1038/s41576-018-0020-3}
\end{barticle}
\endbibitem

\bibitem[\protect\citeauthoryear{Qi and Chatterjee}{2019}]{Qi:2019aa}
\begin{barticle}[author]
\bauthor{\bsnm{Qi},~\bfnm{Guanghao}\binits{G.}} \AND
  \bauthor{\bsnm{Chatterjee},~\bfnm{Nilanjan}\binits{N.}}
(\byear{2019}).
\btitle{Mendelian randomization analysis using mixture models for robust and
  efficient estimation of causal effects}.
\bjournal{Nature Communications}
\bvolume{10}
\bpages{1941}.
\bdoi{10.1038/s41467-019-09432-2}
\end{barticle}
\endbibitem

\bibitem[\protect\citeauthoryear{Sawa}{1969}]{Sawa:1969aa}
\begin{barticle}[author]
\bauthor{\bsnm{Sawa},~\bfnm{Takamitsu}\binits{T.}}
(\byear{1969}).
\btitle{The Exact Sampling Distribution of Ordinary Least Squares and Two-Stage
  Least Squares Estimators}.
\bjournal{Journal of the American Statistical Association}
\bvolume{64}
\bpages{923--937}.
\bdoi{10.2307/2283473}
\end{barticle}
\endbibitem

\bibitem[\protect\citeauthoryear{Shao}{2003}]{Shao}
\begin{bbook}[author]
\bauthor{\bsnm{Shao},~\bfnm{Jun}\binits{J.}}
(\byear{2003}).
\btitle{Mathematical Statistics}.
\bpublisher{Springer-Verlag New York}.
\end{bbook}
\endbibitem

\bibitem[\protect\citeauthoryear{Smith and Ebrahim}{2004}]{Smith:2004aa}
\begin{barticle}[author]
\bauthor{\bsnm{Smith},~\bfnm{George~Davey}\binits{G.~D.}} \AND
  \bauthor{\bsnm{Ebrahim},~\bfnm{Shah}\binits{S.}}
(\byear{2004}).
\btitle{Mendelian randomization: prospects, potentials, and limitations}.
\bjournal{International Journal of Epidemiology}
\bvolume{33}
\bpages{30--42}.
\bdoi{10.1093/ije/dyh132}
\end{barticle}
\endbibitem

\bibitem[\protect\citeauthoryear{Solovieff et~al.}{2013}]{Solovieff:2013aa}
\begin{barticle}[author]
\bauthor{\bsnm{Solovieff},~\bfnm{Nadia}\binits{N.}},
  \bauthor{\bsnm{Cotsapas},~\bfnm{Chris}\binits{C.}},
  \bauthor{\bsnm{Lee},~\bfnm{Phil~H.}\binits{P.~H.}},
  \bauthor{\bsnm{Purcell},~\bfnm{Shaun~M.}\binits{S.~M.}} \AND
  \bauthor{\bsnm{Smoller},~\bfnm{Jordan~W.}\binits{J.~W.}}
(\byear{2013}).
\btitle{Pleiotropy in complex traits: challenges and strategies}.
\bjournal{Nature Reviews Genetics}
\bvolume{14}
\bpages{483 - 495}.
\end{barticle}
\endbibitem

\bibitem[\protect\citeauthoryear{Staiger and Stock}{1997}]{Staiger:1997aa}
\begin{barticle}[author]
\bauthor{\bsnm{Staiger},~\bfnm{Douglas}\binits{D.}} \AND
  \bauthor{\bsnm{Stock},~\bfnm{James~H.}\binits{J.~H.}}
(\byear{1997}).
\btitle{Instrumental Variables Regression with Weak Instruments}.
\bjournal{Econometrica}
\bvolume{65}
\bpages{557--586}.
\bdoi{10.2307/2171753}
\end{barticle}
\endbibitem

\bibitem[\protect\citeauthoryear{Stock, Wright and Yogo}{2002}]{Stock:2002aa}
\begin{barticle}[author]
\bauthor{\bsnm{Stock},~\bfnm{James~H.}\binits{J.~H.}},
  \bauthor{\bsnm{Wright},~\bfnm{Jonathan~H.}\binits{J.~H.}} \AND
  \bauthor{\bsnm{Yogo},~\bfnm{Motohiro}\binits{M.}}
(\byear{2002}).
\btitle{A Survey of Weak Instruments and Weak Identification in Generalized
  Method of Moments}.
\bjournal{Journal of Business \& Economic Statistics}
\bvolume{20}
\bpages{518--529}.
\end{barticle}
\endbibitem

\bibitem[\protect\citeauthoryear{the CARDIoGRAMplusC4D~Consortium
  et~al.}{2015}]{CARDIoGRAMplusC4D-Consortium:2015aa}
\begin{barticle}[author]
\bauthor{\bparticle{the} \bsnm{CARDIoGRAMplusC4D~Consortium}},
  \bauthor{\bsnm{Nikpay},~\bfnm{Majid}\binits{M.}},
  \bauthor{\bsnm{Goel},~\bfnm{Anuj}\binits{A.}},
  \bauthor{\bsnm{Won},~\bfnm{Hong-Hee}\binits{H.-H.}},
  \bauthor{\bsnm{Hall},~\bfnm{Leanne~M}\binits{L.~M.}},
  \bauthor{\bsnm{Willenborg},~\bfnm{Christina}\binits{C.}} \AND
  \bauthor{\bparticle{et} \bsnm{al.}}
(\byear{2015}).
\btitle{A comprehensive 1000 Genomes--based genome-wide association
  meta-analysis of coronary artery disease}.
\bjournal{Nature Genetics}
\bvolume{47}
\bpages{1121 -1130}.
\end{barticle}
\endbibitem

\bibitem[\protect\citeauthoryear{Verbanck et~al.}{2018}]{Verbanck:2018aa}
\begin{barticle}[author]
\bauthor{\bsnm{Verbanck},~\bfnm{Marie}\binits{M.}},
  \bauthor{\bsnm{Chen},~\bfnm{Chia-Yen}\binits{C.-Y.}},
  \bauthor{\bsnm{Neale},~\bfnm{Benjamin}\binits{B.}} \AND
  \bauthor{\bsnm{Do},~\bfnm{Ron}\binits{R.}}
(\byear{2018}).
\btitle{Detection of widespread horizontal pleiotropy in causal relationships
  inferred from Mendelian randomization between complex traits and diseases}.
\bjournal{Nature Genetics}
\bvolume{50}
\bpages{693--698}.
\bdoi{10.1038/s41588-018-0099-7}
\end{barticle}
\endbibitem

\bibitem[\protect\citeauthoryear{Visscher, Hill and
  Wray}{2008}]{Visscher:2008aa}
\begin{barticle}[author]
\bauthor{\bsnm{Visscher},~\bfnm{Peter~M.}\binits{P.~M.}},
  \bauthor{\bsnm{Hill},~\bfnm{William~G.}\binits{W.~G.}} \AND
  \bauthor{\bsnm{Wray},~\bfnm{Naomi~R.}\binits{N.~R.}}
(\byear{2008}).
\btitle{Heritability in the genomics era ---concepts and misconceptions}.
\bjournal{Nature Reviews Genetics}
\bvolume{9}
\bpages{255--266}.
\bdoi{10.1038/nrg2322}
\end{barticle}
\endbibitem

\bibitem[\protect\citeauthoryear{Wang and Kang}{2019}]{wang2019weakinstrument}
\begin{barticle}[author]
\bauthor{\bsnm{Wang},~\bfnm{Sheng}\binits{S.}} \AND
  \bauthor{\bsnm{Kang},~\bfnm{Hyunseung}\binits{H.}}
(\byear{2019}).
\btitle{Weak-Instrument Robust Tests in Two-Sample Summary-Data Mendelian
  Randomization}.
\bjournal{arXiv 1909.06950}.
\end{barticle}
\endbibitem

\bibitem[\protect\citeauthoryear{Yavorska and Burgess}{2017}]{Yavorska:2017aa}
\begin{barticle}[author]
\bauthor{\bsnm{Yavorska},~\bfnm{Olena~O}\binits{O.~O.}} \AND
  \bauthor{\bsnm{Burgess},~\bfnm{Stephen}\binits{S.}}
(\byear{2017}).
\btitle{MendelianRandomization: an R package for performing Mendelian
  randomization analyses using summarized data}.
\bjournal{International Journal of Epidemiology}
\bvolume{46}
\bpages{1734--1739}.
\bdoi{10.1093/ije/dyx034}
\end{barticle}
\endbibitem

\bibitem[\protect\citeauthoryear{Zhao et~al.}{2019a}]{zhao2018statistical}
\begin{barticle}[author]
\bauthor{\bsnm{Zhao},~\bfnm{Qingyuan}\binits{Q.}},
  \bauthor{\bsnm{Wang},~\bfnm{Jingshu}\binits{J.}},
  \bauthor{\bsnm{Hemani},~\bfnm{Gibran}\binits{G.}},
  \bauthor{\bsnm{Bowden},~\bfnm{Jack}\binits{J.}} \AND
  \bauthor{\bsnm{Small},~\bfnm{Dylan~S}\binits{D.~S.}}
(\byear{2019}a).
\btitle{Statistical inference in two-sample summary-data Mendelian
  randomization using robust adjusted profile score}.
\bjournal{To appear in Annals of Statistics}.
\end{barticle}
\endbibitem

\bibitem[\protect\citeauthoryear{Zhao et~al.}{2019b}]{Zhao:2019aa}
\begin{barticle}[author]
\bauthor{\bsnm{Zhao},~\bfnm{Qingyuan}\binits{Q.}},
  \bauthor{\bsnm{Chen},~\bfnm{Yang}\binits{Y.}},
  \bauthor{\bsnm{Wang},~\bfnm{Jingshu}\binits{J.}} \AND
  \bauthor{\bsnm{Small},~\bfnm{Dylan~S}\binits{D.~S.}}
(\byear{2019}b).
\btitle{Powerful three-sample genome-wide design and robust statistical
  inference in summary-data Mendelian randomization}.
\bjournal{International Journal of Epidemiology}.
\bdoi{10.1093/ije/dyz142}
\end{barticle}
\endbibitem

\bibitem[\protect\citeauthoryear{Zheng et~al.}{2017}]{Zheng:2017aa}
\begin{barticle}[author]
\bauthor{\bsnm{Zheng},~\bfnm{Jie}\binits{J.}},
  \bauthor{\bsnm{Baird},~\bfnm{Denis}\binits{D.}},
  \bauthor{\bsnm{Borges},~\bfnm{Maria-Carolina}\binits{M.-C.}},
  \bauthor{\bsnm{Bowden},~\bfnm{Jack}\binits{J.}},
  \bauthor{\bsnm{Hemani},~\bfnm{Gibran}\binits{G.}},
  \bauthor{\bsnm{Haycock},~\bfnm{Philip}\binits{P.}},
  \bauthor{\bsnm{Evans},~\bfnm{David~M.}\binits{D.~M.}} \AND
  \bauthor{\bsnm{Smith},~\bfnm{George~Davey}\binits{G.~D.}}
(\byear{2017}).
\btitle{Recent Developments in Mendelian Randomization Studies}.
\bjournal{Current Epidemiology Reports}
\bvolume{4}
\bpages{330--345}.
\bdoi{10.1007/s40471-017-0128-6}
\end{barticle}
\endbibitem

\end{thebibliography}

%
%
%

\clearpage
\setcounter{equation}{0}
\setcounter{figure}{1}
\setcounter{table}{0}
\setcounter{page}{1}
\setcounter{section}{0}
\makeatletter
\renewcommand{\theequation}{S\arabic{equation}}
\renewcommand{\thefigure}{S\arabic{figure}}
\renewcommand{\thetable}{S\arabic{table}}

\begin{center}
	\huge Supplementary Material: Debiased Inverse-Variance Weighted Estimator in Two-Sample Summary-Data Mendelian Randomization
\end{center}

\section{Additional Simulation and Real Data Results} 
\subsection{Comparison of Estimators under Balanced Horizontal Pleiotropy}\label{app: simu}
We conduct an additional simulation study when there is balanced horizontal pleiotropy. The simulation setting is identical to Case 3, except now $\Gamma_j=\beta_0\gamma_j+\alpha_j$ for $j=1,\dots,p$, where $\alpha_j\sim N(0,\tau_0^2)$ with $\tau_0=2 p^{-1}\sum_{j=1}^p \sigma_{Yj}$. $\mbox{dIVW}_\alpha$ is to denote the dIVW estimator developed in Section \ref{subsec: pleiotropy} under Assumption $ 2' $. MR-$\mbox{raps}_\alpha$ and MR-raps-$\mbox{shrink}_\alpha$ are implemented via the \emph{mr.raps} package setting the over.dispersion parameter to be \emph{TRUE}.

From Table \ref{app: tb}, we observe that MR-Egger, MR-median, MR-mode with different choices of $\lambda$ all have undesirable performance. For MR-Egger, it shows large bias and wide confidence interval. For MR-median, the constructed confidence interval does not have correct coverage probability. For MR-mode, it is sensitive to many weak IVs if without thresholding, it has large bias and incorrect coverage probability if with thresholding. Overall, the dIVW estimators and MR-raps have good statistical properties under balanced horizontal pleiotropy.

\begin{table}[h]
	\caption{Simulation results for Case 3 under balanced horizontal pleiotropy based on 10,000 repetitions with $\beta_0 =0.4$;  $ \lambda $ for MR-EO is the simulation average;
		SD is simulation standard deviation;  SE  is  the average of standard errors; CP is the simulation coverage probability of the 95\% confidence interval based on  normal approximation. }\label{app: tb}
	\centering
	\begin{tabular}{llcrrrr} \hline\\[-2ex]
		\multicolumn{1}{c}{Case}                & \multicolumn{1}{c}{Method} & \multicolumn{1}{c}{$\lambda$}       & \multicolumn{1}{c}{mean} & \multicolumn{1}{c}{SD}    & \multicolumn{1}{c}{SE} & \multicolumn{1}{c}{CP}  \\ \hline \\[-2ex]
		$s = 1119$	& $\text{dIVW}_\alpha$ & $0$             & 0.401&	0.139&	0.138	&95.0    \\
		$p=1119$ 	& $\text{dIVW}_\alpha$&5.45              & 0.398&	0.223&	0.224&	95.0    \\
		& $\text{dIVW}_\alpha$& $\sqrt{2\log p}=3.75$   & 0.399&	0.180&	0.182&	95.2   \\
		& $\text{dIVW}_\alpha$& MR-EO$ \approx 0.03 $ & 0.402&	0.139&	0.138&	94.9  \\&&&&&&\\
		& MR-Egger & $0$ & 0.372	&0.171&	0.163&	93.5\\
		& MR-Egger & 5.45 & 0.378&	0.495	&0.448	&91.3\\
		& MR-Egger& $\sqrt{2\log p}=3.75$ & 0.369	&0.346&	0.327&	93.0   \\&&&&&&\\
		
		& MR-median &0  & 0.372&	0.206	&0.115&	72.1  \\
		& MR-median & 5.45   & 0.392	&0.298	&0.151&	68.0\\
		& MR-median & $\sqrt{2\log p}=3.75$  & 0.388	&0.260&	0.138	&70.4  \\&&&&&&\\
		& MR-mode & 0    &4.783&	376	&59530&	100  \\
		& MR-mode & 5.45     & 0.373	&0.329	&0.171&	63.8 \\
		& MR-mode & $\sqrt{2\log p}=3.75$     & 0.370&	0.587&	0.913	&72.8  \\&&&&&&\\
		& MR-$\text{raps}_\alpha$ &     0      & 0.402	&0.135	&0.135&	94.9    \\
		& MR-raps-$\text{shrink}_\alpha$ & Bayes shrinkage     & 0.401	&0.134	&0.134	&95.0   \\[0.5ex] \hline 
	\end{tabular}
\end{table}

\subsection{Empirical Evaluation of the Effect of Using SEs not SDs} 
In this section, we include simulation results of the MR estimators using the true SDs $ (\sigma_{Xj}, \sigma_{Yj}, \sigma_{Xj}^*)$ under Cases 4-7, where the true SDs can be calculated via
\begin{align*}
	\sigma_{Xj}^2= \sigma_{Xj}^{*2}= \frac{\var(X)- \gamma_j^2\var(Z_j)}{n_X\var (Z_j)}, \qquad \sigma_{Yj}^2= \frac{\var(Y)- \beta_0^2\gamma_j^2\var(Z_j)}{n_Y\var (Z_j)}.
\end{align*}

Simulation results are in Table \ref{apptb:individual}. By comparing the ``feasible'' dIVW estimators using $ \hat{\sigma}_{Xj}, \hat{\sigma}_{Xj}^*, \hat{\sigma}_{Yj} $ (Table \ref{tb:individual}) with the ``infeasible'' dIVW estimators using $ \sigma_{Xj}, \sigma_{Xj}^*, \sigma_{Yj} $ (Table \ref{apptb:individual}), we find that the point estimators and the standard deviations are all very similar, supporting our theoretical results in Theorem \ref{theo: dIVW} that the ''feasible'' and the ''infeasible'' dIVW estimators are asymptotically equivalent when $ p/n_X\rightarrow 0 $.

\begin{table}
	\setcounter{table}{2}
	\renewcommand{\thetable}{S\arabic{table}}
	\caption{Simulation results of MR estimators using true SDs for Case 4-Case 7 based on 10,000 repetitions with $\beta_0 =0.4$;  $ \lambda $ for the MR-EO is the average $ \lambda $ determined by the algorithm;
		SD is simulation standard deviation;  SE  is  the average of standard errors; CP is the simulation coverage probability of the 95\% confidence interval based on  normal approximation. }\label{apptb:individual}
	\centering
	\begin{tabular}{llcrrrr} \hline\\[-2ex]
		\multicolumn{1}{c}{Case}                & \multicolumn{1}{c}{Method} & \multicolumn{1}{c}{$\lambda$}       & \multicolumn{1}{c}{mean} & \multicolumn{1}{c}{SD}    & \multicolumn{1}{c}{SE} & \multicolumn{1}{c}{CP}  \\ \hline\\[-2ex]
		4      & IVW & $0$                   & 0.129 & 0.027 & 0.027     & 0    \\
		$s=200$ & IVW&  5.45                   & 0.393 & 0.126 & 0.122     & 94.7    \\
		$p=2000$			& IVW&  $\sqrt{2\log p}=3.90$       & 0.382 & 0.075 & 0.074     & 93.6    \\ 
		$n=10000$	&&&&&\\
		& dIVW&  $0$         & 0.403 & 0.092 & 0.089    & 94.6    \\
		& dIVW&  5.45        & 0.406 & 0.131 & 0.127     & 95.1    \\
		& dIVW&  $\sqrt{2\log p}=3.90$   & 0.402 & 0.079 & 0.078     & 94.9    \\
		& dIVW&  MR-EO $\approx 2.17$      & 0.396 & 0.061 & 0.060     & 94.8
		\\&&&&&\\
		& MR-raps&     0 & 0.401 & 0.088 & 0.085    & 94.2    \\
		& MR-raps-shrink& Bayes shrinkage & 0.399 & 0.063 & 0.062     & 95.0    \\[0.5ex] \hline\\[-2ex]
		5 & IVW& $0$                  & 0.193 & 0.024 & 0.023     & 0    \\
		$s=1000$& IVW& 5.45            & \multicolumn{4}{l}{ \ select no IV over 25\% of runs}   \\
		$p=2000$		& IVW&   $\sqrt{2\log p}=3.90$   & 0.364 & 0.099 & 0.099     & 92.8    \\
		$n=10000$ &&&&&\\
		& dIVW&$0$       & 0.401 & 0.052 & 0.051     & 94.1    \\
		& dIVW& 5.45            &     \multicolumn{4}{l}{ \ select no IV over 25\% of runs} \\
		& dIVW&$\sqrt{2\log p}=3.90 $         & 0.404& 0.112 & 0.111     & 95.2    \\
		& dIVW&MR-EO $\approx 1.10$  & 0.394 & 0.048 & 0.048     & 94.6    \\&&&&&\\
		& MR-raps&   0   & 0.401 & 0.051 & 0.049     & 93.6    \\
		& MR-raps-shrink&  Bayes shrinkage     & 0.400 & 0.048 & 0.047     & 93.9    \\[0.5ex] \hline\\[-2ex]
		6     & IVW & $0$                   & 0.330 & 0.014 & 0.014     & 0.1    \\
		$s=1000$ & IVW&  5.45                   & 0.390 & 0.024 & 0.024     & 93.1    \\
		$p=2000$			& IVW&  $\sqrt{2\log p}=3.90$       & 0.386 & 0.019 & 0.018    & 88.1    \\ 
		$n=50000$	&&&&&\\
		& dIVW&  $0$         & 0.400 & 0.017 & 0.017    & 94.7    \\
		& dIVW&  5.45        & 0.400 & 0.024 & 0.024     & 94.9    \\
		& dIVW&  $\sqrt{2\log p}=3.90$   & 0.400 & 0.019 & 0.019     & 94.5    \\
		& dIVW&  MR-EO $\approx 1.31$      & 0.399 & 0.017 & 0.017     & 94.7
		\\&&&&&\\
		& MR-raps&      0& 0.400 & 0.017 & 0.017    & 94.6    \\
		& MR-raps-shrink&  Bayes shrinkage& 0.400 & 0.017 & 0.017     & 94.6    \\ [0.5ex]\hline\\[-2ex]
		7 & IVW& $0$                  & 0.196 & 0.023 & 0.023     & 0    \\
		$s=2000$& IVW& 5.45            & \multicolumn{4}{l}{ \ select no IV over 81\% of runs}   \\
		$p=2000$		& IVW&   $\sqrt{2\log p}=3.90$   & 0.344 & 0.198 & 0.196     & 93.6    \\
		$n=10000$ &&&&&\\
		& dIVW&$0$       & 0.401 & 0.051 & 0.049     & 94.0    \\
		& dIVW& 5.45            &     \multicolumn{4}{l}{ \ select no IV over 81\% of runs} \\
		& dIVW&$\sqrt{2\log p}=3.90 $         & 0.424& 0.602 & 0.408     & 96.4    \\
		& dIVW&MR-EO $\approx 0.43$  & 0.395 & 0.052 & 0.050     & 94.0    \\&&&&&\\
		& MR-raps& 0     & 0.400 & 0.050 & 0.048     & 93.8    \\
		& MR-raps-shrink&   Bayes shrinkage    & 0.400 & 0.050 & 0.047     & 93.7    \\[0.5ex] \hline
	\end{tabular}
\end{table}

\subsection{Q-Q Plot of BMI-CAD under Balanced Horizontal Pleiotropy} \label{appsec: qq}
Under Assumption $ 2' $ when $\kappa\sqrt{p}\rightarrow \infty$, $\hat{\beta}_{\rmdivw}$ is close to $\beta_0$ and the standardized residuals 
\begin{eqnarray}
	\	\frac{\hat{\Gamma}_j-\hat{\beta}_{ \rmdivw}\hat{\gamma}_j}{\sqrt{\hat{\sigma}_{Yj}^2+\hat{\tau}^2+\hat{\beta}_{ \rmdivw}^2\hat{\sigma}_{Xj}^2}}, ~ j=1,\dots,p. \label{appeq: qq}
\end{eqnarray}
should follow a standard normal distribution. We  use this to assess the plausibility of Assumption $ 2' $  by making a Quantile-Quantile (Q-Q) plot of the standardized residuals as in Figure \ref{appfigure: qq}. Since the residuals line up close to the 45-degree line, Assumption $ 2' $  is likely to hold for the BMI-CAD dataset.
\begin{figure}[!h]
	\includegraphics[scale=0.6]{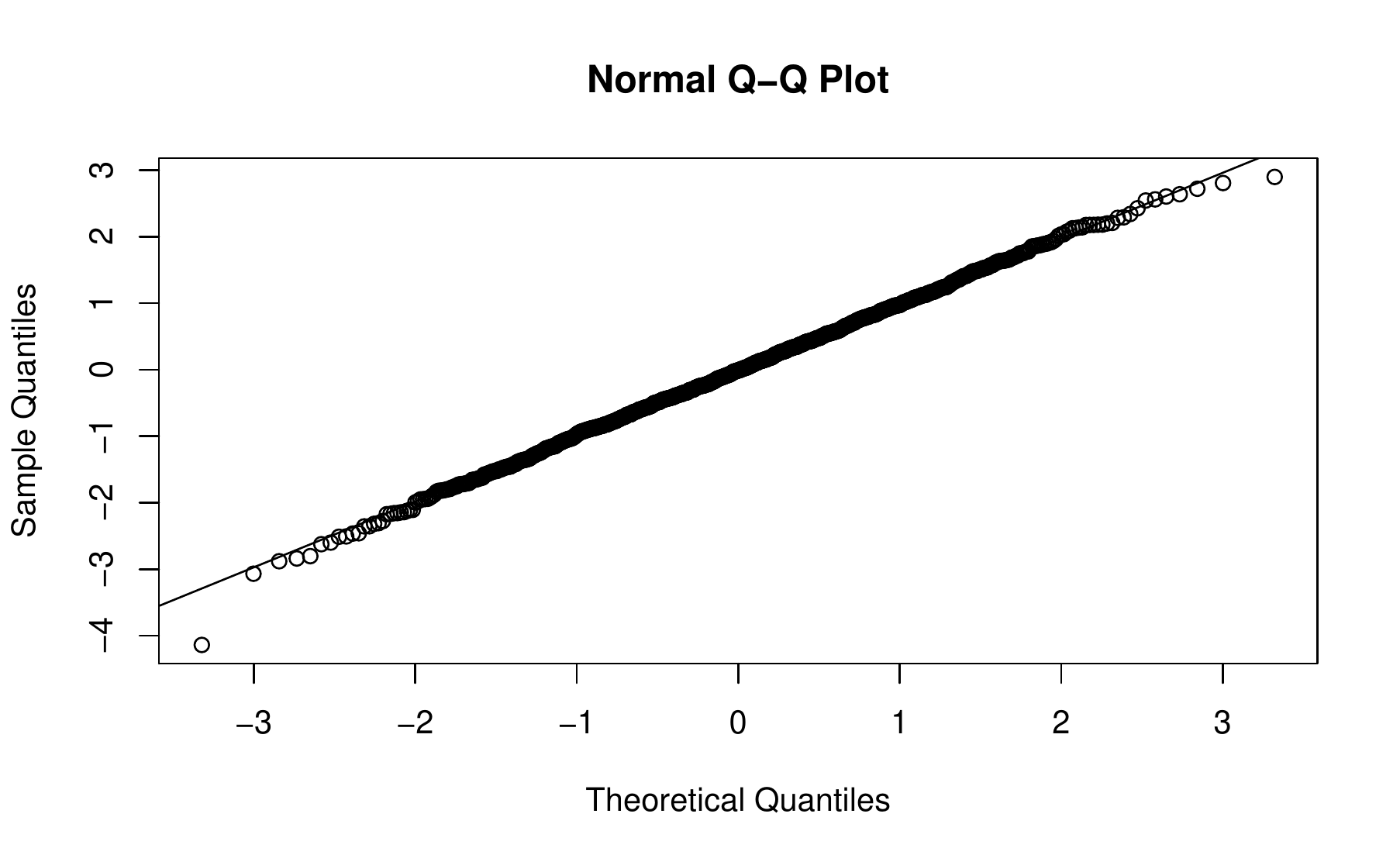}
	\caption[]{Quantile-Quantile plot of the standardized residuals in (\ref{appeq: qq}) against a standard normal.}
	\label{appfigure: qq}
\end{figure}

\clearpage

\section{The dIVW Estimator Beyond Balanced Horizontal Pleiotropy}
When Assumption 2$ ' $ (balanced horizontal pleiotropy) does not hold, the dIVW estimator,  like other MR estimators built upon this assumption, will be biased. Here, we provide heuristics about the magnitude of bias for the dIVW estimator.

Suppose that Assumption 2 holds except $ \hat\Gamma_j \sim N(\alpha_j +\beta_0 \gamma_j, \sigma_{Yj}^2) $ for $ j=1,\dots, p $, where $ \alpha_j $'s are viewed in this section only as fixed parameters. To avoid unnecessary technical details, we assume that the SDs $ \sigma_{Xj}, \sigma_{Yj}, \sigma_{Xj}^* $ are known so that $ \hat\sigma_{Xj}= \sigma_{Xj},  \hat  \sigma_{Yj}=  \sigma_{Yj},  \hat \sigma_{Xj}^*= \sigma_{Xj}^*  $.  

From the proof of Theorem 4.1, we have that as $ \kappa_\lambda \sqrt{p_\lambda } / \max(1, \lambda^2)\rightarrow\infty $ and $ p\rightarrow\infty $, 
\begin{eqnarray}
	&&\hat\beta_{\lambda, \rm dIVW}- \beta_0 - \frac{\sum_{j=1}^p (\alpha_j /\gamma_j) w_j q_{\lambda, j}}{\sum_{j\in S_\lambda} (\hat{\gamma}_j^2- \sigma_{Xj}^2)\sigma_{Yj}^{-2}} \nonumber\\
	&&=    \frac{\sum_{j=1}^{p}(\hat{\Gamma}_j \hat{\gamma}_j-\beta_0\hat{\gamma}_j^2+\beta_0\sigma_{Xj}^2)\sigma_{Yj}^{-2}\indpit- \sum_{j=1}^p (\alpha_j /\gamma_j) w_j q_{\lambda, j}}{\sum_{j\in S_\lambda} (\hat{\gamma}_j^2- \sigma_{Xj}^2)\sigma_{Yj}^{-2}}\nonumber\\
	&&= \frac{\sum_{j=1}^{p}(\hat{\Gamma}_j \hat{\gamma}_j-\beta_0\hat{\gamma}_j^2+\beta_0\sigma_{Xj}^2)\sigma_{Yj}^{-2}\indpit- \sum_{j=1}^p (\alpha_j /\gamma_j) w_j q_{\lambda, j}}{ \sum_{j=1}^p w_jq_{\lambda,j}} \bigg/ (1+o_P(1))\nonumber.
\end{eqnarray}
Then, for any $\epsilon > 0$,
\begin{align}
	&P\left\{ \bigg| \frac{\sum_{j=1}^{p}\left(\hat{\Gamma}_j \hat{\gamma}_j-\beta_0\hat{\gamma}_j^2+\beta_0\sigma_{Xj}^2\right)\sigma_{Yj}^{-2}\indpit-  \sum_{j=1}^p (\alpha_j/ \gamma_j) w_jq_{\lambda, j}}{\sum_{j=1}^p w_j q_{\lambda, j} }\bigg|>\epsilon\right\}\nonumber\\ 
	&\leq \frac{ \Theta (\kappa_{\lambda} p_\lambda+p_\lambda)}{\epsilon^2 \Theta (\kappa_{\lambda} p_\lambda)^2}\rightarrow 0.  \nonumber
\end{align}
In consequence,
\begin{align}
	\hat\beta_{\lambda, \rm dIVW}= \beta_0 + \frac{\sum_{j=1}^p (\alpha_j /\gamma_j) w_j q_{\lambda, j}}{\sum_{j=1}^{p} w_jq_{\lambda, j}} (1+ o_P(1)) + o_P(1) \nonumber,
\end{align}
where $  w_j =  \gamma_j^2/\sigma_{Yj}^2 $, $ v_j = \sigma_{Xj}^2/ \sigma_{Yj}^2 $ and $ q_{\lambda, j}= P(|\hat\gamma_j^*|>\lambda\sigma_{Xj}^*) $. In other words, $ \hat{\beta}_{\lambda, \rm dIVW} $ converges in probability to 
\begin{align}
	\beta_0 + {\rm bias}_{\lambda}= \beta_0 +  \frac{\sum_{j=1}^p (\alpha_j /\gamma_j) w_j q_{\lambda, j}}{\sum_{j=1}^{p} w_jq_{\lambda, j}},  \label{appeq: dIVW}
\end{align}
provided that $   {\rm bias}_{\lambda}$ is bounded. The bias term   ${\rm bias}_{\lambda}$ is a weighted average of $ \alpha_j/\gamma_j $, with weight $ w_jq_{\lambda, j} $. The magnitude of bias will be small, for example,  if (i) only a small proportion of IVs has $ \alpha_j \neq 0$; (ii) the positive $ (\alpha_j /\gamma_j)$'s and negative $ (\alpha_j/ \gamma_j)$'s cancel out; or (iii) $ \alpha_j/\gamma_j $ is small for IVs with large weight $ w_jq_{\lambda, j} $. In particular, if weak IVs have more pleiotropy than strong IVs,  point (iii) suggests that using the weight $ \hat w_j $ in our dIVW estimator not only makes the estimator more robust against weak IVs, but also gives protection against bias due to horizontal pleiotropy.

We conduct a small simulation study to numerically assess the above derivations. The simulation setting is identical to Case 3 in Section \ref{app: simu} of the Supplementary Material, except now we set $ \alpha_j $ differently.	Specifically, we set $ \alpha_j =0.01$ (two times the mean of $ \{|\gamma_j| \}_{j=1}^p$) for $ j=1,\dots, {\rm int}(p\xi) $, and the rest $ \alpha_j=0 $, where $ {\rm int}(x) $ denotes the integer closest to $ x $. In Table \ref{app: tb2}, we vary $ \xi $ to be $ 0.25, 0.5, 0.75 $, meaning that respectively $ 25\%, 50\%, 75\% $ of the $ \alpha_j $'s are non-zero.  

Overall, the performance of the dIVW estimator in Table \ref{app: tb2} agrees with our derivations.  When balanced horizontal pleiotropy does not hold, like all the other methods in Table  \ref{app: tb2},  the dIVW estimator is biased. 
We find that compared to most estimators in the literature, the empirical bias of the dIVW estimator is not too severe even when 75\% of the IVs violate the balanced horizontal pleiotropy assumption ($ \xi=0.75 $). This observation agrees with point (iii) above as the large $\alpha_j/ \gamma_j $'s are downweighted by $ w_jq_{\lambda, j}$ and thus, the bias of the dIVW estimator is small. However, we acknowledge that the simulation study is not comprehensive in terms of all possible violations of the balanced horizontal pleiotropy assumption and our estimator may be more biased in certain settings.

\begin{table}[]
	\caption{Simulation results for Case 3 when balanced horizontal pleiotropy does not hold based on 1,000 repetitions with $\beta_0 =0.4$; $ \xi $ is fraction of non-zero $ \alpha_j $'s, $\beta_0 + {\rm bias}_{\lambda} $ is the theoretically derived limit of $ \hat\beta_{\lambda, \rm dIVW} $ in \eqref{appeq: dIVW};  $ \lambda $ for MR-EO is the simulation average;
		SD is simulation standard deviation;  SE  is  the average of standard errors; CP is the simulation coverage probability of the 95\% confidence interval based on  normal approximation. }\label{app: tb2} \centering
	\begin{tabular}{llccrrrrr} \hline\\[-2ex]
		\multicolumn{1}{c}{$ \xi $}                & \multicolumn{1}{c}{Method} & \multicolumn{1}{c}{$\lambda$}  &		$ \beta_0 + {\rm bias}_{\lambda}$           & \multicolumn{1}{c}{mean} & \multicolumn{1}{c}{SD}    & \multicolumn{1}{c}{SE} & \multicolumn{1}{c}{CP}  \\ \hline\\[-2ex]
		$  0.25    $ & $\text{dIVW}_\alpha$ &     0  & 0.383& 0.384   & 0.056   & 0.057     & 95.3 \\
		& 	$\text{dIVW}_\alpha$   &     5.45  & 0.373& 0.374   & 0.090    & 0.093     & 95.7 \\
		&$\text{dIVW}_\alpha$	  &   $ \sqrt{2\log p} =3.75$     &0.375& 0.375   & 0.072   & 0.075     & 94.9 \\
		& $\text{dIVW}_\alpha$      &     MR-EO$ \approx0.18 $   && 0.383   & 0.056   & 0.057     & 95.0   \\[1.5ex]
		& MR-Egger     &   0     && 0.357   & 0.070    & 0.070      & 91.1 \\
		& MR-Egger &     5.45   && 0.366   & 0.196   & 0.205     & 95.2 \\
		& MR-Egger   &      $ \sqrt{2\log p} =3.75$  && 0.353   & 0.138   & 0.142     & 95.2 \\[1.5ex]
		& MR-median   &    0    && 0.361   & 0.084   & 0.091     & 95.8 \\
		& MR-median &      5.45  && 0.376   & 0.120    & 0.126     & 96.2 \\
		& MR-median  &      $ \sqrt{2\log p} =3.75$  && 0.374   & 0.106   & 0.112     & 95.5 \\[1.5ex]
		& MR-mode      &    0    && 7.181   & 227 & 32173  & 100  \\
		& MR-mode   &    5.45    && 0.375   & 0.129   & 0.154     & 96.5 \\
		& MR-mode   &     $ \sqrt{2\log p} =3.75$   && 0.360    & 0.287   & 0.446     & 96.4 \\[1.5ex]
		& MR-$\text{raps}_\alpha$           &   0    & & 0.385   & 0.056   & 0.057     & 95.5 \\
		& MR-raps-$\text{shrink}_\alpha$    &     Bayes shrinkage   && 0.384   & 0.056   & 0.057     & 95.4 \\[0.5ex] \hline\\[-2ex]
		$  0.5    $	&  $\text{dIVW}_\alpha$         &     0  &		0.421   & 0.422   & 0.056   & 0.061     & 95.7 \\
		&   $\text{dIVW}_\alpha$     &    5.45    &0.408& 0.408   & 0.090    & 0.099     & 97.3 \\
		&   $\text{dIVW}_\alpha$       &     $ \sqrt{2\log p} =3.75$   &0.408& 0.408   & 0.074   & 0.080      & 96.9 \\
		&   $\text{dIVW}_\alpha$      &     MR-EO $ \approx 0.18 $ & & 0.422   & 0.057   & 0.061     & 95.6 \\[1.5ex]
		& MR-Egger     &    0    && 0.397   & 0.071   & 0.074     & 96.3 \\
		& MR-Egger  &    5.45    && 0.503   & 0.202   & 0.219     & 93.3 \\
		& MR-Egger   &    $ \sqrt{2\log p} =3.75$   & & 0.405   & 0.139   & 0.152     & 96.6 \\[1.5ex]
		& MR-median    &   0     && 0.415   & 0.083   & 0.092     & 96.8 \\
		& MR-median&    5.45    && 0.444   & 0.121   & 0.129     & 95.1 \\
		& MR-median  &   $ \sqrt{2\log p} =3.75$     && 0.434   & 0.108   & 0.114     & 95.4 \\[1.5ex]
		& MR-mode      &  0      && 7.410    & 227  & 32433 & 100  \\
		& MR-mode  &    5.45    && 0.451   & 0.134   & 0.157     & 94.9 \\
		& MR-mode    &     $ \sqrt{2\log p} =3.75$   && 0.422   & 0.232   & 0.460      & 96.4 \\[1.5ex]
		& MR-$\text{raps}_\alpha$            &     0  & & 0.424   & 0.057   & 0.061     & 95.7 \\
		& MR-raps-$\text{shrink}_\alpha$      &    Bayes shrinkage   && 0.424   & 0.056   & 0.060      & 95.8 \\[0.5ex]\hline \\[-2ex]
		$ 0.75 $	&   $\text{dIVW}_\alpha$       &   0     &0.453& 0.455   & 0.057   & 0.064     & 89.4 \\
		&   $\text{dIVW}_\alpha$     &    5.45    &0.451& 0.451   & 0.091   & 0.105     & 95.5 \\
		&  $\text{dIVW}_\alpha$     &    $ \sqrt{2\log p} =3.75$    &0.439& 0.439   & 0.074   & 0.084     & 95.6 \\
		&   $\text{dIVW}_\alpha$     &    MR-EO $ \approx0.18 $   && 0.455   & 0.057   & 0.064     & 88.7 \\[1.5ex]
		& MR-Egger     &   0     && 0.426   & 0.073   & 0.078     & 95.3 \\
		& MR-Egger &    5.45    && 0.591   & 0.206   & 0.224     & 86.7 \\
		& MR-Egger   &    $ \sqrt{2\log p} =3.75$    && 0.468   & 0.142   & 0.159     & 94.8 \\[1.5ex]
		& MR-median   &    0    && 0.460    & 0.092   & 0.095     & 90.9 \\
		& MR-median &   5.45     && 0.495   & 0.126   & 0.129     & 88.5 \\
		& MR-median  &    $ \sqrt{2\log p} =3.75$    && 0.480    & 0.114   & 0.116     & 89.9 \\	[1.5ex]
		& MR-mode      &   0     && -10.549 & 315 & 34036 & 100  \\
		& MR-mode   &   5.45     && 0.505   & 0.131   & 0.156     & 89.0   \\
		& MR-mode    &  $ \sqrt{2\log p} =3.75$      && 0.489   & 0.277   & 0.434     & 92.7 \\[1.5ex]
		& MR-$\text{raps}_\alpha$             &     0  && 0.455   & 0.057   & 0.064     & 88.2 \\
		& MR-raps-$\text{shrink}_\alpha$      &  Bayes shrinkage      && 0.455   & 0.057   & 0.064     & 88.6\\[0.5ex]\hline
	\end{tabular}
\end{table}

\section{Proofs}\label{app}
In this section, we provide theoretical proofs for the theorems in the main paper. Throughout this section, we use $c$ to denote a generic constant. For two sequences of  real numbers $a_n$ and $b_n$ indexed by $n$, we write $a_n=O(b_n)$ if $|a_n|\leq cb_n$ for all $n$ and a constant $c$, $a_n=o(b_n)$ if $a_n/b_n\rightarrow 0$ as $n\rightarrow \infty$, $a_n=\Theta(b_n)$ if $c^{-1}b_n\leq |a_n|\leq cb_n$ for all $n$ and a constant $c$. We use  $\xrightarrow{P}$ to denote convergence in probability, $\xrightarrow{D}$ to denote convergence in distribution. For random variables $X$ and $Y$, we denote $X=o_P(Y)$ if $X/Y\xrightarrow{P} 0$,  $X=O_P(Y)$ if $X/Y$ is bounded in probability.

\subsection{ Proof of Theorem \ref{theo: p}}
We first provide a proof for the IVW estimator when $ \lambda=0 $. Then, we provide a full proof for the IVW estimator with general $ \lambda\geq 0 $.  In the  proof, we also derive the asymptotic bias and standard deviation ratio for $ \hat{\beta}_{\rmivw} $ and $ \hat{\beta}_{\lambda, \rmivw} $, respectively. In this section, we assume  $ \hat{\sigma}^*_{Xj}= \sigma^*_{Xj}, \hat{\sigma}_{Yj}= \sigma_{Yj}$ for every $ j $.

\subsubsection{When $ \lambda=0$}
From the definition of the IVW estimator, we have
\begin{eqnarray}
	\hat{\beta}_{\rmivw}-\beta_0&=&\frac{\sum_{j=1}^p\left(\hat{\Gamma}_j \hat{\gamma}_j-\beta_0\hat{\gamma}_j^2\right)\sigma_{Yj}^{-2}}{\sum_{j=1}^p \hat{\gamma}_j^2\sigma_{Yj}^{-2}}\nonumber
\end{eqnarray}
From Assumption \ref{assump: 2}, we have for every $j=1,\dots,p$,
\begin{displaymath}
	\left[ \left(
	\begin{array}{c}
		\hat{\Gamma}_j\\
		\hat{\gamma}_j
	\end{array}\right)-\left(
	\begin{array}{c}
		\beta_0\gamma_j\\
		\gamma_j
	\end{array}\right)
	\right]\sim N \left( \left( 
	\begin{array}{c}
		0\\
		0
	\end{array}
	\right),  \left( 
	\begin{array}{cc}
		\sigma_{Yj}^{2} &0\\
		0 & \sigma_{Xj}^2
	\end{array}
	\right)
	\right)
\end{displaymath}
It is easy to show that for every $j$,
\begin{align}
	&E\left\{\left(\hat{\Gamma}_j \hat{\gamma}_j-\beta_0 \hat{\gamma}_j^2\right)\sigma_{Yj}^{-2}\right\}=-\beta_0v_j,\nonumber\\
	&\var \left\{\left(\hat{\Gamma}_j \hat{\gamma}_j-\beta_0 \hat{\gamma}_j^2\right)\sigma_{Yj}^{-2}\right\}=(w_j+v_j)+\beta_0^2v_j(w_j+2v_j)\nonumber, \\
	&E\left(\hat{\gamma}_j^2\sigma_{Yj}^{-2}\right)=w_j+v_j\nonumber,\\
	&\var\left(\hat{\gamma}_j^2\sigma_{Yj}^{-2}\right)=(4w_j+2v_j)v_j,\nonumber
\end{align}
where $w_j=\gamma_j^2/\sigma_{Yj}^2$, $v_j=\sigma_{Xj}^2/\sigma_{Yj}^2$. First, we want to show
\begin{eqnarray}
	\frac{\sum_{j=1}^p \hat{\gamma}_j^2\sigma_{Yj}^{-2}}{\sum_{j=1}^p (w_j+v_j)}\xrightarrow{P} 1, \label{app:1}
\end{eqnarray}
and it suffices to show that 
\[
\var(\sum_{j=1}^p \hat{\gamma}_j^2\sigma_{Yj}^{-2})/\{\sum_{j=1}^p (w_j+v_j)\}^2=O( \frac{1}{\kappa p+p})=o(1)
\]
when $\kappa p+ p\rightarrow\infty$, then (\ref{app:1}) follows from Markov Inequality. Note that (\ref{app:1}) holds under all four regimes considered in this theorem.

Next, we use Lindeberg Central Limit Theorem to show that as $p\rightarrow \infty$,
\begin{eqnarray}
	\frac{\sum_{j=1}^p\left(\hat{\Gamma}_j \hat{\gamma}_j-\beta_0 \hat{\gamma}_j^2\right)\sigma_{Yj}^{-2}+\sum_{j=1}^p \beta_0v_j}{[\sum_{j=1}^p (w_j+v_j)+\beta_0^2v_j(w_j+2v_j)]^{1/2}} \xrightarrow{D} N(0,1). \label{eq: app, CLT}
\end{eqnarray}

To verify the Lindeberg's condition \cite{Shao}, we define the normalized version of  $\left(\hat{\Gamma}_j \hat{\gamma}_j-\beta_0 \hat{\gamma}_j^2\right)\sigma_{Yj}^{-2}$ as 
\[
K_j=\frac{\left(\hat{\Gamma}_j \hat{\gamma}_j-\beta_0 \hat{\gamma}_j^2\right)\sigma_{Yj}^{-2}+\beta_0v_j}{a_j}.
\]
where $ a_j^2=(w_j+v_j)+\beta_0^2v_j(w_j+2v_j)$. Define  $ \sigma^2_p=\sum_{j=1}^p a_j^2$, the Lindeberg's condition holds because for  any $\epsilon>0$,
\begin{align}
	&\sum_{j=1}^p E\left[\frac{ a_j^2K_j^2}{\sigma^2_{p}}I \left\{a_j|K_j|>\epsilon \sigma_{p}\right\}\right]\leq \sum_{j=1}^{p} \frac{a_j^2}{\sigma_p^2} \max_j E\left[  K_j^2  I \left\{a_j|K_j|>\epsilon \sigma_{p}\right\}  \right]\nonumber\\
	&= \max_j E\left[  K_j^2  I \left\{a_j|K_j|>\epsilon \sigma_{p}\right\}  \right] = o(1)\nonumber
\end{align}
which is a direct result from $E(K_j^2)=1$ and $ \max_j a_j^2/\sigma_p^2=o(1) $ from the assumption that $\max_j(\gamma_j^2\sigma_{Xj}^{-2})/(\kappa p)=o(1)$. 

From equations (\ref{app:1})-(\ref{eq: app, CLT}), and Slutsky's theorem, we directly have  
$$V_{ 0, \rmivw}^{-1/2}\{\hat{\beta}_{\rmivw} -\beta_0  - \mbox{abias} (\hat{\beta}_{\rmivw} ) \}
\xrightarrow{D}N(0,1), $$
where $V_{0, \rmivw} $ 
is the asymptotic variance of $\hat\beta_{\rmivw}$ given in (\ref{normal}) with $\lambda =0$, $\mbox{abias} (\hat{\beta}_{\rmivw} )= 
- \sum_{j=1}^p \beta_0 v_j/ \sum_{j=1}^p (w_j+v_j). $

(a) From the assumption that $\kappa/p\rightarrow \infty$, \eqref{eq: app, CLT} implies 
\[
\frac{\sum_{j=1}^p\left(\hat{\Gamma}_j \hat{\gamma}_j-\beta_0 \hat{\gamma}_j^2\right)\sigma_{Yj}^{-2}}{[\sum_{j=1}^p (w_j+v_j)+\beta_0^2v_j(w_j+2v_j)]^{1/2}} \xrightarrow{D} N(0,1). 
\]

Combining these results and using Slutsky's theorem, we have thus proved 
\[
V^{-1/2}_{0,\rmivw}(\hat{\beta}_{\rmivw}-\beta_0)\xrightarrow{D} N(0,1).
\]
Consistency follows directly from $V_{0,\rmivw}=O(\frac{1}{\kappa p+p})=o(1).$

Then, we prove that the same result still holds when replacing $V_{0,\rmivw}$ by $\hat{V}_{0,\rmivw}$ by showing 
\[
\frac{\hat{V}_{0,\rmivw}}{V_{0,\rmivw}}\xrightarrow{P}1.
\]

From (\ref{app:1}), we have 
\begin{eqnarray}
	\frac{\sum_{j=1}^p \hat{w}_j}{\sum_{j=1}^p (w_j+v_j)}\xrightarrow{P} 1,
\end{eqnarray}
where $ \hat{w}_j=\hat{\gamma}^2_j\sigma_{Yj}^{-2} $. 
Then, it remains to show that 
\[
\frac{\sum_{j=1}^p \hat{\beta}_{ \rmivw}^2v_j(\hat{w}_j+v_j)+\hat{w}_j}{\sum_{j=1}^p \beta_0^2 v_j(w_j+2v_j)+(w_j+v_j)}\xrightarrow{P}1,
\]
where the denominator in the above formula  is of the order $\Theta(\kappa p+p)$. Additionally, the difference between the numerator and the denominator can be written as
\begin{align}
	&\underbrace{(\hat{\beta}_{ \rmivw}^2-\beta_0^2)\sum_{j=1}^pv_j(\hat{w}_j+v_j)}_{A1}+\underbrace{\sum_{j=1}^p (\beta_0^2v_j+1)(\hat{w}_j-w_j-v_j)}_{A2}.\nonumber
\end{align}
Because the estimator is consistent, i.e.,  $\hat{\beta}^2_{ \rmivw}-\beta_0^2=o_P(1)$, and  $\sum_{j=1}^p\hat{w}_j=\sum_{j=1}^p (w_j+v_j) +o_P( \kappa p+p)= \Theta (\kappa p+p)+o_P(\kappa p+p)$ from (\ref{app:1}), we arrive at $A1/\Theta(\kappa p+p)=o_P(1)$. Similarly, because $A2$ is of order $o_P(\kappa p+p)$, we have $A2/\Theta(\kappa p+p)=o_P(1)$ and thus complete the proof. 

(b) We show that $\hat{\beta}_{\rmivw}=\beta_0+o_P(1)$ as $\kappa\rightarrow\infty$. Notice that 
\begin{align}
	&\hat{\beta}_{\rmivw}-\beta_0=\frac{\sum_{j=1}^p\left(\hat{\Gamma}_j \hat{\gamma}_j-\beta_0\hat{\gamma}_j^2\right)\sigma_{Yj}^{-2}}{\sum_{j=1}^p \hat{\gamma}_j^2\sigma_{Yj}^{-2}}\nonumber\\
	&=\underbrace{\frac{\sum_{j=1}^p\left(\hat{\Gamma}_j \hat{\gamma}_j-\beta_0\hat{\gamma}_j^2\right)\sigma_{Yj}^{-2}}{\sum_{j=1}^p (w_j+v_j)}}_{o_P(1)}\biggr/ \underbrace{\frac{\sum_{j=1}^p \hat{\gamma}_j^2\sigma_{Yj}^{-2}}{\sum_{j=1}^p (w_j+v_j) } }_{1+o_P(1)} \nonumber.
\end{align}
The first term is $o_P(1)$ is because of Markov inequality, for any $\epsilon > 0$, as $\kappa\rightarrow \infty$, 
\begin{align}
	&P\left\{\left|\frac{\sum_{j=1}^p\left(\hat{\Gamma}_j \hat{\gamma}_j-\beta_0\hat{\gamma}_j^2\right)\sigma_{Yj}^{-2}}{\sum_{j=1}^p (w_j+v_j)} \right|>\epsilon\right\}\nonumber\\
	& \leq  \frac{(\sum_{j=1}^p \beta_0v_j)^2+\sum_{j=1}^p (w_j+v_j)+\beta_0^2v_j(w_j+2v_j)}{\epsilon^2(\sum_{j=1}^p (w_j+v_j))^2}\nonumber\\
	&=\Theta(\frac{p}{\kappa p+p})^2+\Theta(\frac{1}{\kappa p +p})=o(1). \nonumber
\end{align}
The second term is $1+o_P(1)$ from (\ref{app:1}).

(c) Following the proofs in (b), we can show that as $p\rightarrow \infty$, if $\kappa \rightarrow c>0$, 
\begin{align}
	&\hat{\beta}_{\rmivw}-\beta_0+\frac{\beta_0\sum_{j=1}^p v_j}{\sum_{j=1}^p \hat{\gamma}_j^2\sigma_{Yj}^{-2}}\nonumber\\
	&=\frac{\sum_{j=1}^p\left(\hat{\Gamma}_j \hat{\gamma}_j-\beta_0\hat{\gamma}_j^2\right)\sigma_{Yj}^{-2}+\beta_0\sum_{j=1}^p v_j}{\sum_{j=1}^p \hat{\gamma}_j^2\sigma_{Yj}^{-2}} \nonumber\\
	&=\underbrace{\frac{\sum_{j=1}^p\left(\hat{\Gamma}_j \hat{\gamma}_j-\beta_0\hat{\gamma}_j^2\right)\sigma_{Yj}^{-2}+\beta_0\sum_{j=1}^pv_j}{\sum_{j=1}^p (w_j+v_j) }}_{o_P(1)}\biggr/ \underbrace{\frac{\sum_{j=1}^p \hat{\gamma}_j^2\sigma_{Yj}^{-2}}{\sum_{j=1}^p (w_j+v_j) }}_{1+o_P(1)} =o_P(1) \nonumber
\end{align}
Then the result follows from Slutsky's theorem and (\ref{app:1}).

(d) The statements in the proof of (c) remain true if $\kappa\rightarrow 0$ and $p\rightarrow \infty$. In this regime, \[
\frac{\beta_0\sum_{j=1}^p v_j}{\sum_{j=1}^p \hat{\gamma}_j^2\sigma_{Yj}^{-2}}=\frac{\beta_0\sum_{j=1}^p v_j}{\sum_{j=1}^p (w_j+v_j)}(1+o_P(1))=\beta_0+o_P(1),
\]
where the first equality is from (\ref{app:1}), and thus $\hat{\beta}_{\rmivw}=o_P(1)$.

(e)  When $ \beta_0=0 $,  if $ \max_j(\gamma_j^2\sigma_{Xj}^{-2})/(\kappa p+p) \rightarrow 0$ and $ p\rightarrow \infty $,  then
$ \hat{\beta}_{\rmivw}  $ is consistent and  asymptotically normal. This is a direct result from (\ref{app:1})- (\ref{eq: app, CLT}).

\subsubsection{ When  $ \lambda\geq 0 $} 
(a) We have
\[
\hat{\beta}_{\lambda, \rmivw}-\beta_0=\frac{\sum_{j=1}^{p}\left(\hat{\Gamma}_j \hat{\gamma}_j-\beta_0\hat{\gamma}_j^2\right)\sigma_{Yj}^{-2}\indpit}{\sum_{j=1}^{p} \hat{\gamma}_j^2\sigma_{Yj}^{-2}\indpit}:=\frac{\psi(\beta_0)}{-\psi'},
\]
where $\psi'=\partial \psi(\beta)/\partial \beta$. To prove the asymptotic normality for $\hat{\beta}_{\lambda, \rmivw}-\beta_0$, we first show in Lemma \ref{lemma: ivw three sample 1} that the numerator $\psi(\beta_0)$ is asymptotically normal with mean zero after standardization. We then show in Lemma \ref{lemma: ivw three sample 2} that the denominator $-\psi '$ converges in probability to a positive number.
\begin{lemma}
	\label{lemma: ivw three sample 1} For a given threshold $\lambda\geq 0$, if  $\max_j(\gamma_j^2\sigma_{Xj}^{-2}q_{\lambda,j})/(\kappa_\lambda p_\lambda)\rightarrow 0$, $\kappa_\lambda/p_\lambda\rightarrow\infty$, as $p\rightarrow \infty$,
	\[
	V_1^{-1/2}\psi(\beta_0) \xrightarrow{D} N(0,1),\]
	where
	\[ V_1=\sum_{j=1}^p \left[ (w_j+v_j)q_{\lambda, j} +\beta_0^2v_j(w_j+3v_j)q_{\lambda, j} - \beta_0^2v_j^2q_{\lambda, j}^2\right]
	\]
\end{lemma}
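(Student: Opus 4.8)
The plan is to treat $\psi(\beta_0)=\sum_{j=1}^p T_j$ with $T_j=(\hat{\Gamma}_j\hat{\gamma}_j-\beta_0\hat{\gamma}_j^2)\sigma_{Yj}^{-2}\,\indpit$ as a sum of independent summands and apply a central limit theorem for triangular arrays. The essential structural fact from Assumption \ref{assump: 2} is that the selection indicator $\indpit$ depends only on $\hat{\gamma}_j^*$, which is independent of $(\hat{\gamma}_j,\hat{\Gamma}_j)$, and that $\hat{\sigma}_{Xj}^*=\sigma_{Xj}^*$ in this section, so $E(\indpit)=q_{\lambda,j}$ and the indicator factors out of every moment. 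Writing $R_j=(\hat{\Gamma}_j\hat{\gamma}_j-\beta_0\hat{\gamma}_j^2)\sigma_{Yj}^{-2}$, the Gaussian computations already recorded for $\lambda=0$ give $E(R_j)=-\beta_0 v_j$ and $E(R_j^2)=(w_j+v_j)+\beta_0^2 v_j(w_j+3v_j)$. Because $\indpit^2=\indpit$, I would then obtain $E(T_j)=-\beta_0 v_j q_{\lambda,j}$ and $\var(T_j)=E(R_j^2)q_{\lambda,j}-(E R_j)^2q_{\lambda,j}^2$, and summing over $j$ shows that $V_1$ is \emph{exactly} $\var(\psi(\beta_0))$, the quantity that must appear in the normalization.

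With this identity I would split
$$\frac{\psi(\beta_0)}{\sqrt{V_1}}=\frac{\psi(\beta_0)-E\psi(\beta_0)}{\sqrt{V_1}}+\frac{E\psi(\beta_0)}{\sqrt{V_1}}$$
and dispose of the deterministic second term first. Using $w_j=(\gamma_j^2/\sigma_{Xj}^2)v_j$ and the boundedness of the variance ratios, one gets $\sum_j v_j q_{\lambda,j}=\Theta(p_\lambda)$, $\sum_j w_j q_{\lambda,j}=\Theta(\kappa_\lambda p_\lambda)$, and hence $V_1=\Theta(\kappa_\lambda p_\lambda+p_\lambda)$; since $q_{\lambda,j}\geq 2\Phi(-\lambda)>0$ forces $p_\lambda\to\infty$, the hypothesis $\kappa_\lambda/p_\lambda\to\infty$ gives $\kappa_\lambda\to\infty$ and $V_1=\Theta(\kappa_\lambda p_\lambda)$. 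Therefore $|E\psi(\beta_0)|/\sqrt{V_1}=\Theta(p_\lambda/\sqrt{\kappa_\lambda p_\lambda})=\Theta(\sqrt{p_\lambda/\kappa_\lambda})\to 0$. This is precisely where $\kappa_\lambda/p_\lambda\to\infty$ enters, and it is the analytic counterpart of the bias-to-standard-deviation ratio discussed in the main text.

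For the centered term I would invoke Lyapunov's condition with fourth moments for the independent array $\{T_j-E(T_j)\}$. A routine Gaussian moment bound gives $E|R_j|^4=O((1+w_j)^2)$ and $(E R_j)^4=O(1)$, so that $E|T_j-E T_j|^4=O((1+w_j)^2q_{\lambda,j})$; consequently
$$\frac{\sum_{j=1}^p E|T_j-E T_j|^4}{V_1^2}=O\!\left(\frac{p_\lambda+\kappa_\lambda p_\lambda+\sum_{j=1}^p w_j^2 q_{\lambda,j}}{(\kappa_\lambda p_\lambda)^2}\right).$$
The first two numerator terms are of lower order than $(\kappa_\lambda p_\lambda)^2$ because $\kappa_\lambda\to\infty$, so verifying Lyapunov's condition reduces to showing $\sum_j w_j^2 q_{\lambda,j}=o((\kappa_\lambda p_\lambda)^2)$. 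Combining this with the bias estimate and Slutsky's theorem finishes the proof.

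\textbf{Main obstacle.} I expect the delicate step to be controlling $\sum_j w_j^2 q_{\lambda,j}$, not the moment bookkeeping. The natural bound $\sum_j w_j^2 q_{\lambda,j}\le(\max_j w_j)\sum_j w_j q_{\lambda,j}=\Theta((\max_j w_j)\,\kappa_\lambda p_\lambda)$ reduces matters to $\max_j w_j=O(\kappa_\lambda p_\lambda)$, whereas the available hypothesis only controls $\max_j(\gamma_j^2\sigma_{Xj}^{-2}q_{\lambda,j})=\Theta(\max_j w_j q_{\lambda,j})$, i.e. the strengths \emph{weighted by their selection probabilities}. Bridging this gap requires the observation that a very strong instrument is selected with probability close to one: if $w_j$ is large then $\gamma_j^2/\sigma_{Xj}^{*2}$ is large (as $\sigma_{Xj}^{*2}\asymp\sigma_{Xj}^2$ by Assumption \ref{assump: 2}), which forces $q_{\lambda,j}\to 1$ and hence $\max_j w_j=O(\max_j w_j q_{\lambda,j}+1)$. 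Feeding this back gives $\sum_j w_j^2 q_{\lambda,j}/(\kappa_\lambda p_\lambda)^2=O(\max_j w_j q_{\lambda,j}/(\kappa_\lambda p_\lambda))+O(1/(\kappa_\lambda p_\lambda))\to 0$, where the first term vanishes by hypothesis. Making this strength-selection interplay quantitative, and checking that multiplying by the independent indicator does not inflate the normalized fourth moments of the weak-instrument terms, is the real content; everything else is bookkeeping and Slutsky's theorem.
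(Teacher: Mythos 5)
Your proposal is correct and follows essentially the same route as the paper: the paper likewise centers $\psi(\beta_0)$, shows $E[\psi(\beta_0)]/\sqrt{V_1}=\Theta\bigl(p_\lambda/(\kappa_\lambda p_\lambda+p_\lambda)^{1/2}\bigr)\to 0$ from $\kappa_\lambda/p_\lambda\to\infty$, and applies a CLT for independent summands using $\max_j \var(\psi_j)/V_1\to 0$ --- the only difference being that it verifies Lindeberg's condition directly where you verify Lyapunov's condition with fourth moments (arguably making explicit a uniform-integrability step the paper leaves implicit). The ``main obstacle'' you flag largely dissolves here: since $\lambda$ is fixed, $q_{\lambda,j}\geq 2\Phi(-\lambda)>0$ uniformly in $j$, so $\max_j w_j = O(\max_j w_j q_{\lambda,j})$ immediately, without needing the strong-instrument/selection-probability interplay.
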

\begin{lemma}
	\label{lemma: ivw three sample 2} For a given threshold $\lambda\geq  0$, if  $(\kappa_\lambda\sqrt{p_\lambda}+\sqrt{p_\lambda})/\max(1,\lambda^2)\rightarrow\infty$,
	\[
	-V_2^{-1}\psi'\xrightarrow{P}1,\]
	where \[ V_2=\sum_{j=1}^p (w_j+v_j)q_{\lambda, j}
	\]
\end{lemma}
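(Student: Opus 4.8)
The plan is to prove the claim by a first- and second-moment argument showing that $-\psi'$ concentrates at its mean $V_2$. Since we work under $\hat\sigma_{Xj}^*=\sigma_{Xj}^*$, we have $-\psi'=\sum_{j=1}^p\hat\gamma_j^2\sigma_{Yj}^{-2}\indpit$, and the selection indicator $\indpit$ depends only on $\hat\gamma_j^*$, which is independent of $\hat\gamma_j$ because the two come from different datasets. First I would compute the mean: using $E(\hat\gamma_j^2)=\gamma_j^2+\sigma_{Xj}^2$ and $E(\indpit)=q_{\lambda,j}$ together with this independence,
\[
E(-\psi')=\sum_{j=1}^p(w_j+v_j)q_{\lambda,j}=V_2.
\]
Hence by Chebyshev's inequality it suffices to show $\var(-\psi')/V_2^2\to 0$.

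For the variance, the summands are independent across $j$, so $\var(-\psi')=\sum_j\var(\hat\gamma_j^2\sigma_{Yj}^{-2}\indpit)$. Writing $m_j=w_j+v_j$ and $s_j^2=(4w_j+2v_j)v_j$ for the mean and variance of $\hat\gamma_j^2\sigma_{Yj}^{-2}$ (already recorded in the $\lambda=0$ computation), and using $\indpit^2=\indpit$ with independence, a direct calculation gives
\[
\var(\hat\gamma_j^2\sigma_{Yj}^{-2}\indpit)=s_j^2q_{\lambda,j}+m_j^2q_{\lambda,j}(1-q_{\lambda,j}).
\]
I would then record the order $V_2=\Theta(\kappa_\lambda p_\lambda+p_\lambda)$: the bounded variance ratios give $v_j=\Theta(1)$ and $w_j=\Theta(\gamma_j^2\sigma_{Xj}^{-2})$, so $\sum_j w_jq_{\lambda,j}=\Theta(\kappa_\lambda p_\lambda)$ and $\sum_j v_jq_{\lambda,j}=\Theta(p_\lambda)$. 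Because $\max(1,\lambda^2)\geq 1$, the hypothesis $(\kappa_\lambda\sqrt{p_\lambda}+\sqrt{p_\lambda})/\max(1,\lambda^2)\to\infty$ forces $(\kappa_\lambda+1)\sqrt{p_\lambda}\to\infty$ and hence $V_2\to\infty$.

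It then remains to bound the two pieces of $\var(-\psi')$. Since $s_j^2=\Theta(w_j+1)$, the first piece satisfies $\sum_j s_j^2q_{\lambda,j}=\Theta(V_2)$, so $\sum_j s_j^2q_{\lambda,j}/V_2^2=O(1/V_2)\to 0$. The delicate piece is $\sum_j m_j^2q_{\lambda,j}(1-q_{\lambda,j})$, whose dominant part is $\sum_j w_j^2q_{\lambda,j}(1-q_{\lambda,j})$. This is the crux of the argument and the main obstacle, because $w_j^2$ is unbounded while the condition carries no maximal-term assumption. I would control it through the pointwise bound
\[
w_j(1-q_{\lambda,j})=O(\max(1,\lambda^2))\qquad\text{uniformly in }j,
\]
which is where the factor $\max(1,\lambda^2)$ enters. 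Writing $\delta_j^*=\gamma_j/\sigma_{Xj}^*$, the bounded ratios give $w_j=\Theta((\delta_j^*)^2)$ and $q_{\lambda,j}=\Phi(\delta_j^*-\lambda)+\Phi(-\delta_j^*-\lambda)$, so it suffices to bound $(\delta_j^*)^2(1-q_{\lambda,j})$. For $|\delta_j^*|\leq 2\lambda$ this is at most $4\lambda^2$ since $1-q_{\lambda,j}\leq 1$; for $|\delta_j^*|>2\lambda$ Mill's ratio gives $1-q_{\lambda,j}\leq\Phi(\lambda-|\delta_j^*|)\leq\phi(|\delta_j^*|-\lambda)/(|\delta_j^*|-\lambda)$, so the Gaussian tail dominates the polynomial factor and the product is bounded by a universal constant.

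Granting this bound, $w_j^2q_{\lambda,j}(1-q_{\lambda,j})=[w_j(1-q_{\lambda,j})]\,[w_jq_{\lambda,j}]=O(\max(1,\lambda^2))\,w_jq_{\lambda,j}$, so $\sum_j w_j^2q_{\lambda,j}(1-q_{\lambda,j})=O(\max(1,\lambda^2)\kappa_\lambda p_\lambda)$, while the cross and $v_j^2$ terms of $m_j^2$ are $O(V_2)$. Dividing by $V_2^2=\Theta((\kappa_\lambda+1)^2p_\lambda^2)$ gives a bound of order
\[
\frac{\max(1,\lambda^2)}{(\kappa_\lambda+1)p_\lambda}=\frac{\max(1,\lambda^2)}{(\kappa_\lambda+1)\sqrt{p_\lambda}}\cdot\frac{1}{\sqrt{p_\lambda}},
\]
whose first factor tends to $0$ by the hypothesis and whose second factor is at most $1$. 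Combining the two pieces yields $\var(-\psi')/V_2^2\to 0$, and Chebyshev's inequality delivers $-V_2^{-1}\psi'\xrightarrow{P}1$, as claimed.
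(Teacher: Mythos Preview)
Your proof is correct and follows essentially the same approach as the paper: compute $E(-\psi')=V_2$, then bound $\var(-\psi')/V_2^2$ via Chebyshev, with the crucial step being a pointwise control of $w_j^2q_{\lambda,j}(1-q_{\lambda,j})$ through a case split at $|\gamma_j|/\sigma_{Xj}^*\approx 2\lambda$ and a Gaussian tail bound. The only notable variation is in how you package that bound: the paper bounds $w_j^2q_{\lambda,j}(1-q_{\lambda,j})\le c\max(1,\lambda^4)q_{\lambda,j}$ directly, summing to $O(\max(1,\lambda^4)p_\lambda)$, whereas you factor it as $[w_j(1-q_{\lambda,j})][w_jq_{\lambda,j}]$ and bound the first factor by $O(\max(1,\lambda^2))$, summing to $O(\max(1,\lambda^2)\kappa_\lambda p_\lambda)$; both routes suffice under the stated hypothesis. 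One small remark: your final assertion that $1/\sqrt{p_\lambda}\le 1$ tacitly assumes $p_\lambda\ge 1$, but the paper's argument relies on the equivalent implicit fact that $(\kappa_\lambda+1)p_\lambda\to\infty$, so this is not a discrepancy between the two.
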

Combining Lemmas \ref{lemma: ivw three sample 1} and \ref{lemma: ivw three sample 2}, we have from  Slutsky's theorem that
\[
\left(\frac{V_1}{V_2^2}\right)^{-1/2}(\hat{\beta}_{\lambda,\rmivw}-\beta_0) \xrightarrow{D} N(0,1).
\]
Consistency follows from $V_1/V_2^2=\Theta(\kappa_\lambda p_\lambda+p_\lambda)\rightarrow \infty$ as $\kappa_\lambda/p_\lambda\rightarrow \infty$.

To show the asymptotic normality still holds when replacing $V_{\lambda, \rmivw}$ by $\hat{V}_{\lambda, \rmivw}$, it suffices to show that 
\[
\frac{\hat{V}_{\lambda,\rmivw}}{V_{\lambda,\rmivw}}\xrightarrow{P}1.
\]
From Lemma \ref{lemma: ivw three sample 2}, we have 
\[
\left\{\frac{\sum_{j\in S_\lambda} \hat{w}_j}{\sum_{j=1}^p (w_j+v_j)q_{\lambda, j}}\right\}^2\xrightarrow{P}1.
\]
Then it remains to show that
\[
\frac{\sum_{j\in S_\lambda}\left[ \hat{w}_j+\hat{\beta}_{\lambda,\rmivw}^{2} v_j(\hat{w}_j+v_j)\right]}{\sum_{j=1}^p \left[(w_j+v_j)q_{\lambda,j}+\beta_0^2v_j(w_j+3v_j)q_{\lambda,j}-\beta_0^2v_j^2q_{\lambda,j}^{2}\right]}\xrightarrow{P}1,
\]
where the denominator in the above formula is of order $\Theta(\kappa_\lambda p_\lambda+p_\lambda)$. Also,  the difference between the numerator and the denominator is 
\begin{eqnarray}
	&\underbrace{\sum_{j\in S_\lambda}\hat{w}_j -\sum_{j=1}^{p} (w_j+v_j)q_{\lambda,j}}_{A1} +\underbrace{(\hat{\beta}^2_{\lambda,  \rmivw}-\beta_0^2) \sum_{j\in S_\lambda}v_j(\hat{w}_j+v_j)}_{A2}  \nonumber\\
	&+\underbrace{\beta_0^2\left[  \sum_{j\in S_\lambda}v_j(\hat{w}_j+v_j)-\sum_{j=1}^p v_j(w_j+2v_j)q_{\lambda, j}\right]}_{A3}-\underbrace{\beta_0^2\sum_{j=1}^pv_j^2 q_{\lambda,j}(1-q_{\lambda,j}) }_{A4} \nonumber
\end{eqnarray}
From Lemma \ref{lemma: ivw three sample 2}, we have $A1/\Theta(\kappa_\lambda p_\lambda+p_\lambda)=o_P(1)$ and $A3/\Theta(\kappa_\lambda p_\lambda+p_\lambda)=o_P(1)$. Also, the consistency of $\hat{\beta}_{\lambda,\rmivw}$ and $\sum_{j\in S_\lambda}v_j(\hat{w}_j+v_j)=\sum_{j=1}^p v_j(w_j+2v_j)q_{\lambda,j}+o_P(\kappa_\lambda p_\lambda+p_\lambda)$ from Lemma \ref{lemma: ivw three sample 2} implies $\hat{\beta}_{\lambda,\rmivw}^2-\beta_0^2=o_P(1)$ and $A2/\Theta(\kappa_\lambda p_\lambda+p_\lambda)=o_P(1)$. Finally, we see that $0\leq A4\leq \beta_0^2 \Theta(p_\lambda)$ and thus $A4/\Theta(\kappa_\lambda p_\lambda+p_\lambda)=o(1)$ as $\kappa_\lambda /p_\lambda\rightarrow \infty$. Hence, $\hat{V}_{\lambda,\rmivw}/V_{\lambda,\rmivw}\xrightarrow{P}1$.

In what follows, we prove Lemma \ref{lemma: ivw three sample 1} and Lemma \ref{lemma: ivw three sample 2}, completing the proof of Theorem  \ref{theo: p}(a).

{\bf Proof of Lemma \ref{lemma: ivw three sample 1}.} 
Denote  $\psi_j(\beta_0)=\left(\hat{\Gamma}_j \hat{\gamma}_j-\beta_0\hat{\gamma}_j^2\right)\sigma_{Yj}^{-2}\indpit$, it is easy to show that under Assumptions \ref{assump: 1}-\ref{assump: 2}, for every $j=1,\dots, p$,
\begin{align}
	& E[\psi_j(\beta_0)]=-\beta_0v_jq_{\lambda,j}, \nonumber\\
	& E[\psi_j^2(\beta_0)]=\left[(w_j+v_j) + \beta_0^2 v_j(w_j+3v_j)\right]q_{\lambda,j}, \nonumber \\
	&\var[\psi_j(\beta_0)]= \left[(w_j+v_j) + \beta_0^2 v_j(w_j+3v_j)\right]q_{\lambda,j}-\beta_0^2v_j^2q_{\lambda,j}^{2}=\Theta((w_j+v_j)q_{\lambda,j}).\nonumber
\end{align}
As $p\rightarrow \infty$,  we will use  Lindeberg Central Limit Theorem to show that the properly normalized sum of $\psi_{j}(\beta_0)$ is asymptotically normal, i.e., $V_1^{-1/2}\psi(\beta_0) \xrightarrow{D} N(0,1)$. To do this, we verify the Lindeberg's condition
\[
V_1^{-1}\sum_{j=1}^p \var[\psi_j(\beta_0)] E\left[ K_j^{*2}I \left\{|K_j^*|>\epsilon \left(\frac{V_1}{\var[\psi_j(\beta_0)]}\right)^{1/2}\right\}\right]\rightarrow 0,
\]
where 
\[
K^*_j=\frac{\psi_j(\beta_0)-E\left[\psi_j(\beta_0)\right]}{\{\var [\psi_j(\beta_0)]\}^{1/2}}.
\]  By the conditions  in the theorem, specifically $\max_j(\gamma_j^2\sigma_{Xj}^{-2}q_{\lambda, j})/\kappa_\lambda p_\lambda\rightarrow 0$ and $\kappa_\lambda p_\lambda\rightarrow \infty$, we arrive at $\max_j\{\var[\psi_j(\beta_0)]\}/V_1=o(1)$. Also, because $E(K_j^{*2})=1$ by definition, we have
\begin{align}
	&V_1^{-1}\sum_{j=1}^p \var[\psi_j(\beta_0)] E\left[ K_j^{*2}I \left\{|K_j^*|>\epsilon \left(\frac{V_1}{\var[\psi_j(\beta_0)]}\right)^{1/2}\right\}\right]\nonumber \\
	& \leq \max_j E\left[ K_j^{*2}I \left\{|K_j^*|>\epsilon \left(\frac{V_1}{\var[\psi_j(\beta_0)]}\right)^{1/2}\right\}\right]=o(1)\nonumber 
\end{align}
and the Lindeberg's condition holds. 
Then Lemma \ref{lemma: ivw three sample 1} follows from
\[
\frac{\psi(\beta_0)}{[\var\{\psi(\beta_0)\}]^{1/2}}-\underbrace{\frac{ E[\psi(\beta_0)]}{[\var\{\psi(\beta_0)\}]^{1/2}}}_{\Theta(p_\lambda/(\kappa_\lambda p_\lambda+p_\lambda)^{1/2})}\xrightarrow{D} N(0,1)
\]
where $p_\lambda/(\kappa_\lambda p_\lambda+p_\lambda)^{1/2}=o(1)$ from the assumption that $\kappa_\lambda /p_\lambda\rightarrow\infty$. It also shows the asymptotic bias and standard deviation ratio for $ \hat{\beta}_{\lambda, \rmivw} $ is of order $ p_\lambda^{1/2}/(1+\kappa_\lambda)^{1/2} $.

{\bf Proof of Lemma \ref{lemma: ivw three sample 2}.}
It suffices to prove that $E(-\psi'/V_2)=1$ and $\var (-\psi'/V_2)=o(1)$. Then Lemma \ref{lemma: ivw three sample 2} follows by Markov Inequality, 
\[
P\left(\left|\frac{-\psi'}{V_2}-1\right|>\epsilon\right)\leq \frac{E\left(-\psi'/V_2-1\right)^2}{\epsilon^2}=\frac{\var (-\psi'/V_2)}{\epsilon^2}\rightarrow 0
\]
Some algebra reveals that
\begin{align}
	E(-\psi')&=\sum_{j=1}^p(w_j+v_j)q_{\lambda,j}=V_2\nonumber
\end{align}
Next, we show that $\var(-\psi'/V_2)=o(1)$. Notice that
\begin{align}
	\var(-\psi')&=\sum_{j=1}^p \left[(w_j^2+6w_jv_j+3v_j^2)q_{\lambda,j} - (w_j^2+2w_jv_j+v_j^2)q_{\lambda,j}^{2}\right] \nonumber\\
	&=\underbrace{\sum_{j=1}^p w_j^2 q_{\lambda,j}(1-q_{\lambda,j})}_{A1} +\underbrace{\sum_{j=1}^p v_jq_{\lambda,j}\left[6w_j+3v_j-(2w_j+v_j)q_{\lambda,j}\right]}_{A2}\nonumber
\end{align}
where $A1$ involves $w_j^2= \gamma_j^4/\sigma_{Yj}^{4}$, $A2=\Theta(\kappa_\lambda p_\lambda+p_\lambda)$. We will calculate the order of $A1$ in the following. Define $ \delta_j= |\gamma_j|/\sigma_{Xj}^{*} - \lambda$, then $ q_{\lambda, j}= \Phi( |\gamma_j|/\sigma_{Xj}^*-\lambda) + \Phi(- |\gamma_j|/\sigma_{Xj}^*-\lambda)=
\Phi(\delta_j) +\Phi(-\delta_j-2\lambda)$, where $ \Phi(\cdot) $ is the cumulative distribution function for standard normal distribution. It is easy to see that $ A1=0 $ when $ \lambda=0 $. When $ \lambda>0 $, consider the following two cases.
\begin{itemize}
	\item If $\delta_j\geq \lambda$, then 
	\begin{align*}
		w_j^2q_{\lambda,j} (1-q_{\lambda,j}) &\leq c(\gamma_j/ \sigma_{Xj}^{*} )^4 q_{\lambda, j} (1-q_{\lambda, j})\\
		& \leq  c(\gamma_j/ \sigma_{Xj}^{*} )^4 q_{\lambda, j}  \Phi(-\delta_j)\\
		&= c (\delta_j+\lambda)^4q_{\lambda,j} \Phi(-\delta_j)\\
		&\leq c(\delta_j+\lambda)^4q_{\lambda,j}\frac{1}{2}e^{-\delta_j^2/2} \\
		&\leq  c\delta_j^4e^{-\delta_j^2/2}q_{\lambda,j}\\
		&=  O(q_{\lambda,j}) 
	\end{align*}
	where $c$ is a generic constant. The first inequality is from Assumption \ref{assump: 2}, the second inequality is from $1-q_{\lambda,j}\leq \Phi(-\delta_j)$. The fourth line is from the tail bound of  normal distribution. The fifth line is because $\delta_j\geq \lambda$. The last equality is because   $\delta_j^4e^{-\delta_j^2/2}\leq 16e^{-2}$ for $\delta_j>0$. 
	\item If $\delta_j< \lambda$, then 
	\[
	w_j^2 q_{\lambda,j}(1-q_{\lambda,j})\leq   c(\gamma_j/ \sigma_{Xj}^{*} )^4 q_{\lambda,j}=c(\delta_j+\lambda)^4q_{\lambda,j}\leq c\lambda^4 q_{\lambda,j}
	\]
	where the first inequality is because of Assumption \ref{assump: 2} and $1-q_{\lambda,j}\leq 1$, the last inequality is because $0\leq \delta_j+\lambda< 2\lambda $.
\end{itemize}
Combining the above cases, $A1=O(\lambda^4p_\lambda+p_\lambda)$ when $ \lambda>0 $, and $ A1=0 $ when $ \lambda=0 $. Hence,
\[
\var(-\psi')= O\left(\lambda^4 p_\lambda+p_\lambda\right)+\Theta\left( \kappa_\lambda p_\lambda +p_\lambda\right).
\]
and $\var(-\psi'/V_2)\rightarrow 0$ from the condition $(\kappa_\lambda \sqrt{p_\lambda}+\sqrt{p_\lambda})/\max(1,\lambda^2)\rightarrow\infty$, which completes the proof of Lemma \ref{lemma: ivw three sample 2}. \\

Next, we prove (b)-(d). Since the conditions in Lemma \ref{lemma: ivw three sample 2} are implied throughout this theorem, we have under regimes (b)-(d),
\begin{align}
	\hat{\beta}_{\lambda, \rmivw}-\beta_0&=\frac{\sum_{j=1}^{p}\left(\hat{\Gamma}_j \hat{\gamma}_j-\beta_0\hat{\gamma}_j^2\right)\sigma_{Yj}^{-2}\indpit}{\sum_{j=1}^{p} \hat{\gamma}_j^2\sigma_{Yj}^{-2}\indpit},  \nonumber\\
	=& \frac{\sum_{j=1}^{p}\left(\hat{\Gamma}_j \hat{\gamma}_j-\beta_0\hat{\gamma}_j^2\right)\sigma_{Yj}^{-2}\indpit}{\sum_{j=1}^p (w_j+v_j)q_{\lambda, j}}\bigg/(1+o_P(1)).\nonumber
\end{align}

For (b), from the Markov inequality, for $\epsilon >0$ and $\kappa_\lambda \rightarrow \infty$, we have
\begin{align}
	&P\left\{ \bigg| \frac{\sum_{j=1}^{p}\left(\hat{\Gamma}_j \hat{\gamma}_j-\beta_0\hat{\gamma}_j^2\right)\sigma_{Yj}^{-2}\indpit}{\sum_{j=1}^p (w_j+v_j)q_{\lambda, j}}\bigg|>\epsilon\right\}\nonumber\\ 
	&\leq \frac{ \beta_0^2\Theta(p_\lambda^{2})+\Theta (\kappa_{\lambda} p_\lambda+p_\lambda)}{\epsilon^2 \Theta ((\kappa_{\lambda} p_\lambda+p_\lambda)^2)}\rightarrow 0.  \nonumber
\end{align}
Therefore, $\hat{\beta}_{\lambda, \rmivw}=\beta_0+o_P(1)$. \\

For (c)-(d), by Markov's inequality, we have
\begin{align}
	&P\left\{ \bigg| \frac{\sum_{j=1}^{p}\left(\hat{\Gamma}_j \hat{\gamma}_j-\beta_0\hat{\gamma}_j^2+\beta_0\sigma_{Xj}^2\right)\sigma_{Yj}^{-2}\indpit}{\sum_{j=1}^p (w_j+v_j)q_{\lambda, j}}\bigg|>\epsilon\right\}\nonumber\\ 
	&\leq \frac{ \Theta (\kappa_{\lambda} p_\lambda+p_\lambda)}{\epsilon^2 \Theta ((\kappa_{\lambda} p_\lambda+p_\lambda)^2)}\rightarrow 0  \nonumber
\end{align}
as $p_\lambda\rightarrow \infty$. This is equivalent to
\[
\underbrace{\frac{\sum_{j=1}^{p}\left(\hat{\Gamma}_j \hat{\gamma}_j-\beta_0\hat{\gamma}_j^2\right)\sigma_{Yj}^{-2}\indpit}{\sum_{j=1}^p (w_j+v_j)q_{\lambda, j}}}_{\hat{\beta}_{\lambda,\rmivw}-\beta_0+o_P(1)}+ \underbrace{\frac{\beta_0\sum_{j=1}^{p}v_j\indpit}{\sum_{j=1}^p (w_j+v_j)q_{\lambda, j}}}_{\frac{\beta_0\sum_{j=1}^pv_jq_{\lambda,j}}{\sum_{j=1}^p (w_j+v_j)q_{\lambda, j}}+o_P(1)}=o_P(1),
\]
where the second term has the stated probability limit due to Chebyshev's inequality and 
\[
E\left[\frac{\beta_0\sum_{j=1}^{p}v_j\indpit}{\sum_{j=1}^p (w_j+v_j)q_{\lambda, j}}\right]=\frac{\beta_0\sum_{j=1}^pv_jq_{\lambda,j}}{\sum_{j=1}^p (w_j+v_j)q_{\lambda, j}}
\]
\begin{align}
	\var\left\{ \frac{\beta_0\sum_{j=1}^{p}v_j\indpit}{\sum_{j=1}^p (w_j+v_j)q_{\lambda, j}}\right\}&=\frac{\beta_0^2\sum_{j=1}^p v_j^2(q_{\lambda,j}-q_{\lambda,j}^{2})}{\left(\sum_{j=1}^p (w_j+v_j)q_{\lambda, j}\right)^2} \nonumber\\
	&=\frac{O(p_\lambda)}{\Theta(\kappa_\lambda p_\lambda+p_\lambda)^2}\rightarrow 0 \nonumber
\end{align}
as $p_\lambda\rightarrow \infty$. Hence,
\[
\hat{\beta}_{\lambda, \rmivw}-\beta_0+ \frac{\beta_0\sum_{j=1}^{p}v_jq_{\lambda,j}}{\sum_{j=1}^p (w_j+v_j)q_{\lambda, j}}=o_P(1).
\]
When the conditions in (c) are true ($\kappa_\lambda \rightarrow c>0$), 
\[
\hat{\beta}_{\lambda, \rmivw}- \frac{\beta_0\sum_{j=1}^{p}w_jq_{\lambda,j}}{\sum_{j=1}^p (w_j+v_j)q_{\lambda, j}}=o_P(1).
\]
When the conditions in (d) are true ($\kappa_\lambda \rightarrow 0$), $\hat{\beta}_{\lambda, \rmivw}=o_P(1)$.

When $ \beta_0=0 $ and the conditions in (e) are true, the results follow from $ V_1^{-1/2} \psi(\beta_0)\xrightarrow{D} N(0,1) $ and Lemma \ref{lemma: ivw three sample 2}.

\subsection{ Proof of $ \kappa /p $  approximately upper bounded by  $ n_X / p^2  $}
From the definition, \[
\kappa p= n_X\sum_{j=1}^{p} \frac{\gamma_j^2\var(Z_j)}{ \var(X)- \gamma_j^2\var(Z_j)} \approx n_X\sum_{j=1}^{p} \frac{\gamma_j^2\var(Z_j)}{ \var(X)} \leq n_X
\] 
The approximation is reasonable because each $ \gamma_j^2\var(Z_j) $ is small compared with $ \var(X) $.  The last inequality is because $ \sum_{j=1}^{p}\gamma_j^2\var(Z_j)\leq  \var(X) $.

\subsection{Proof of $ \kappa_\lambda $ is approximately increasing in $ \lambda $}
Let   $ \mu_j^2= \gamma_j^2/\sigma_{Xj}^2 $. Without loss of generality, assume $ \gamma_j\geq 0 $, as we can simply replace negative $ \gamma_j $ with their negations.
From the definition 
\begin{align}
	\kappa_\lambda= \frac{\sum_{j =1}^p \mu_j^2q_{\lambda, j} }{\sum_{j=1}^p q_{\lambda,j}}  \approx \frac{\sum_{j  =1}^p\mu_j^2 \Phi(\mu_j- \lambda) }{\sum_{j  =1}^p  \Phi( \mu_j- \lambda)} 
\end{align}
where   $ \Phi(\cdot) $ is the cumulative distribution function for standard normal distribution, the approximation is because $ q_{\lambda, j}\approx\Phi( \mu_j-\lambda)  $.  In what follows, we will prove that 
\begin{align}   
	\frac{f(\lambda)}{g(\lambda)}=\frac{\sum_{j  =1}^p\mu_j^2 \Phi(\mu_j- \lambda) }{\sum_{j  =1}^p  \Phi( \mu_j- \lambda)} \label{appeq: kappa increasing}
\end{align}
is an increasing function in $ \lambda $ for $ \lambda\in (-\infty, \infty) $. The key idea is applying L'H\^opital Monotone Rule \citep{Anderson:2006aa}. 

Note that $ f $ and $ g $ in (\ref{appeq: kappa increasing}) are both continuously differentiable functions of $ \lambda $ on $ (-\infty, +\infty) $ with $ f(\infty) = g(\infty)=0$, and $ g'(x)\neq 0 $ for each $ x\in(\-\infty, \infty) $. It remains to prove that $ f'/g' $ is increasing in ($-\infty, +\infty$), then applying L'H\^opital Monotone Rule, $ f/g $ is also increasing in $ (-\infty, \infty)  $.

In the sequel, we prove that $ f'/g' $ is increasing in $ \lambda $. Notice that 
\begin{align}
	&\left(\frac{f'}{g'}\right)'= \left( \frac{\sum_{j  =1}^p \mu_j^2 \phi (\lambda- \mu_j)}{\sum_{j  =1}^p\phi (\lambda- \mu_j)}\right)' \nonumber\\
	&\propto \left[ \sum_{j=1}^{p} \mu_j^2 \phi(\lambda-\mu_j)\right] \left[ \lambda \sum_{j=1}^{p} \phi(\lambda-\mu_j) - \sum_{j=1}^{p} \mu_j \phi(\lambda-\mu_j)\right] \nonumber\\
	&\qquad- \left[ \sum_{j=1}^{p}  \phi(\lambda-\mu_j)\right] \left[\lambda \sum_{j=1}^{p}\mu_j^2 \phi(\lambda-\mu_j) - \sum_{j=1}^{p}\mu_j^3 \phi(\lambda-\mu_j)\right] \nonumber\\ 
	&=  - \left[ \sum_{j=1}^{p} \mu_j^2\phi(\lambda-\mu_j)\right]
	\left[ \sum_{j=1}^{p} \mu_j\phi(\lambda-\mu_j)\right]+  \left[ \sum_{j=1}^{p} \phi(\lambda-\mu_j)\right]
	\left[ \sum_{j=1}^{p} \mu_j^3\phi(\lambda-\mu_j)\right]\nonumber \\
	&\propto \frac{\sum_{j=1}^{p} \mu_j^3 \phi(\lambda-\mu_j)}{\sum_{j=1}^{p} \mu_j^2  \phi(\lambda-\mu_j)}   -  \frac{\sum_{j=1}^{p} \mu_j \phi(\lambda-\mu_j)}{\sum_{j=1}^{p}  \phi(\lambda-\mu_j)}   \nonumber \\
	&=  \frac{\sum_{j=1}^{p} \mu_j^3 \phi(\lambda-\mu_j)}{\sum_{j=1}^{p} \mu_j^2  \phi(\lambda-\mu_j)}  -  \frac{\sum_{j=1}^{p} \mu_j^2 \phi(\lambda-\mu_j)}{\sum_{j=1}^{p} \mu_j \phi(\lambda-\mu_j)}+  \frac{\sum_{j=1}^{p} \mu_j^2 \phi(\lambda-\mu_j)}{\sum_{j=1}^{p} \mu_j \phi(\lambda-\mu_j)}  -  \frac{\sum_{j=1}^{p} \mu_j \phi(\lambda-\mu_j)}{\sum_{j=1}^{p}  \phi(\lambda-\mu_j)}   \nonumber \\
	&=  \frac{[\sum_{j=1}^{p} \mu_j^3 \phi(\lambda-\mu_j)] [\sum_{j=1}^{p} \mu_j \phi(\lambda-\mu_j) ]- [\sum_{j=1}^{p} \mu_j^2  \phi(\lambda-\mu_j)]^2}{[\sum_{j=1}^{p} \mu_j^2  \phi(\lambda-\mu_j)]  [\sum_{j=1}^{p} \mu_j  \phi(\lambda-\mu_j)]  } \nonumber\\
	&\qquad +  \frac{[\sum_{j=1}^{p} \mu_j^2 \phi(\lambda-\mu_j)] [\sum_{j=1}^{p}  \phi(\lambda-\mu_j) ]- [\sum_{j=1}^{p} \mu_j  \phi(\lambda-\mu_j)]^2}{[\sum_{j=1}^{p} \mu_j \phi(\lambda-\mu_j)] [ \sum_{j=1}^{p}  \phi(\lambda-\mu_j)  ]}  >0\nonumber
\end{align}
where $ \phi(\cdot) $ is the density function of standard normal distribution, and the last line is from Cauchy-Schwartz Inequality, and  equality holds if and only if all the $ \mu_j $'s are equal. Therefore, $ f'/g' $ is an increasing function in $ \lambda $, completing the proof.

\subsection{Proof of Theorem \ref{theo: dIVW}} We divide the proof of Theorem \ref{theo: dIVW} into four parts. We first provide a proof for the dIVW estimator when assuming $ \hat{\sigma}_{Xj}= \sigma_{Xj}, \hat{\sigma}^*_{Xj}= \sigma^*_{Xj}, \hat{\sigma}_{Yj}= \sigma_{Yj}$ and with $ \lambda=0 $. Then, we provide a proof for the dIVW estimator when assuming $ \hat{\sigma}_{Xj}= \sigma_{Xj}, \hat{\sigma}^*_{Xj}= \sigma^*_{Xj}, \hat{\sigma}_{Yj}= \sigma_{Yj}$ and with general $ \lambda\geq 0 $. Next, we  provide a  proof when $ \lambda=0 $ but without assuming knowing $ \sigma_{Xj}, \sigma_{Xj}^*, \sigma_{Yj} $. Finally, we complete the proof by showing the result for general  $ \lambda\geq 0 $ and without assuming knowing $ \sigma_{Xj}, \sigma_{Xj}^*, \sigma_{Yj} $. The proof that $ \hat{V}_{\lambda, \rmdivw} $ is consistent for $ V_{\lambda, \rmdivw} $ is similar to the proof of Theorem \ref{theo: p}(a), thus is omitted.

\subsubsection{When $ \hat{\sigma}_{Xj}= \sigma_{Xj}, \hat{\sigma}^*_{Xj}= \sigma^*_{Xj}, \hat{\sigma}_{Yj}= \sigma_{Yj}$ and  $ \lambda=0 $}
\label{subsubsec: divw, known SDs, lambda=0}
(a) Consistency follows from the proof of (b), so we only prove asymptotic normality here. Following a similar argument in the proof of Theorem \ref{theo: p}(a), we use Lindeberg Central Limit Theorem to show that if $\max_j (\gamma_j^2\sigma_{Xj}^{-2})/(\kappa p+p)\rightarrow 0$, as $p\rightarrow \infty$, 
\begin{align*}
	&\frac{\sum_{j=1}^{p} (\hat{\gamma}_j^2 - \sigma_{X_j}^2) \sigma_{Y_j}^{-2}}{\sum_{j=1}^{p} w_j} \cdot  V_{0, \rm dIVW}^{-1/2} (\hat{\beta}_{\rm dIVW} - \beta_0) \\
	=& \frac{\sum_{j=1}^p\left(\hat{\Gamma}_j \hat{\gamma}_j-\beta_0 \hat{\gamma}_j^2+\beta_0\sigma_{Xj}^2\right)\sigma_{Yj}^{-2}}{[\sum_{j=1}^p (w_j+v_j)+\beta_0^2v_j(w_j+2v_j)]^{1/2}} \xrightarrow{D} N(0,1).\nonumber
\end{align*} 
Then, it suffices to show that if $\kappa\sqrt{p}\rightarrow\infty$,
\begin{eqnarray}
	\frac{\sum_{j=1}^p (\hat{\gamma}_j^2-\sigma_{Xj}^2)\sigma_{Yj}^{-2}}{\sum_{j=1}^p w_j}\xrightarrow{P} 1. \label{app: divwD}
\end{eqnarray}
Equation (\ref{app: divwD}) follows from $E[\sum_{j=1}^p (\hat{\gamma}_j^2-\sigma_{Xj}^2)\sigma_{Yj}^{-2}]=\sum_{j=1}^pw_j$ and 
\[
\frac{\var[\sum_{j=1}^p \hat{\gamma}_j^2\sigma_{Yj}^{-2}] }{\left(\sum_{j=1}^p w_j\right)^2}=O(\frac{1}{\kappa p}+\frac{1}{\kappa^2p})\rightarrow 0
\]
whenever $\kappa\sqrt{p}\rightarrow \infty$. 

(b) We show that $\hat{\beta}_{\rmdivw}=\beta_0+o_P(1)$ as $\kappa\sqrt{p}\rightarrow\infty$. Notice that 
\begin{align}
	\hat{\beta}_{\rmdivw}-\beta_0&=\frac{\sum_{j=1}^p\left(\hat{\Gamma}_j \hat{\gamma}_j-\beta_0\hat{\gamma}_j^2+\beta_0\sigma_{Xj}^2\right)\sigma_{Yj}^{-2}}{\sum_{j=1}^p( \hat{\gamma}_j^2-\sigma_{Xj}^2)\sigma_{Yj}^{-2}}\nonumber\\
	&=\underbrace{\frac{\sum_{j=1}^p\left(\hat{\Gamma}_j \hat{\gamma}_j-\beta_0\hat{\gamma}_j^2+\beta_0\sigma_{Xj}^2\right)\sigma_{Yj}^{-2}}{\sum_{j=1}^p w_j}}_{o_P(1)}\biggr/ \underbrace{\frac{\sum_{j=1}^p (\hat{\gamma}_j^2-\sigma_{Xj}^2)\sigma_{Yj}^{-2}}{\sum_{j=1}^p w_j}}_{1+o_P(1)} \nonumber,
\end{align}
where the first term is $o_P(1)$ because as $\kappa\sqrt{p}\rightarrow \infty$, we have $\kappa p\rightarrow\infty$, $\kappa^2p\rightarrow \infty$, and by Markov inequality with any $\epsilon > 0$,
\begin{align}
	&P\left\{\left|\frac{\sum_{j=1}^p\left(\hat{\Gamma}_j \hat{\gamma}_j-\beta_0\hat{\gamma}_j^2+\beta_0\sigma_{Xj}^2\right)\sigma_{Yj}^{-2}}{\sum_{j=1}^p w_j} \right|>\epsilon\right\}\nonumber\\
	& \leq  \frac{\sum_{j=1}^p (w_j+v_j) + \beta_0^2v_j(w_j+2v_j)}{\epsilon^2(\sum_{j=1}^p w_j)^2}\nonumber\\
	&=O(\frac{1}{\kappa p}+\frac{1}{\kappa^2p})\rightarrow 0. \nonumber
\end{align}
The second term equals $1+o_P(1)$ from (\ref{app: divwD}).

\subsubsection{ When $ \hat{\sigma}_{Xj}= \sigma_{Xj}, \hat{\sigma}^*_{Xj}= \sigma^*_{Xj}, \hat{\sigma}_{Yj}= \sigma_{Yj}$ and  $ \lambda\geq 0 $}
(a) Here, with slightly abuse of notation, we borrow the notations $\psi(\beta_0)$ and $\psi'$ from the proof of Theorem 3.1. Notice that
\begin{align}
	&\hat{\beta}_{\lambda,\rmdivw}-\beta_0=\frac{\sum_{j=1}^{p}(\hat{\Gamma}_j \hat{\gamma}_j-\beta_0\hat{\gamma}_j^2+\beta_0\sigma_{Xj}^2)\sigma_{Yj}^{-2}\indpit}{\sum_{j=1}^{p}( \hat{\gamma}_j^2-\sigma_{Xj}^2)\sigma_{Yj}^{-2}\indpit}
	:=\frac{\psi(\beta_0)}{-\psi'} \nonumber
\end{align}
where $\psi'=\partial \psi(\beta)/\partial \beta$. To prove the asymptotic normality for $\hat{\beta}_{\lambda,\rmdivw}-\beta_0$, we first show that the numerator $\psi(\beta_0)$ is asymptotically normal with mean zero when properly normalized (Lemma \ref{lemma: three sample 1}), we then show the denominator $-\psi'$ converges in probability to a positive number (Lemma \ref{lemma: three sample 2}).
\begin{lemma}
	\label{lemma: three sample 1}
	For a given threshold $\lambda\geq 0$, if $\max_j(\gamma_j^2\sigma_{Xj}^{-2}q_{\lambda,j})/(\kappa_\lambda p_\lambda+p_\lambda)\rightarrow 0$, as $p\rightarrow \infty$,
	\[
	V_3^{-1/2}\psi(\beta_0) \xrightarrow{D} N(0,1),\qquad V_3=\sum_{j=1}^p [  (w_j+v_j) + \beta_0^2v_j(w_j+2v_j)] q_{\lambda, j}
	\]
\end{lemma}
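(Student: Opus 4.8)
The plan is to recognize $\psi(\beta_0)=\sum_{j=1}^p\psi_j(\beta_0)$ as a sum of independent, mean-zero summands and to invoke the Lindeberg--Feller central limit theorem, exactly as in the companion Lemma \ref{lemma: ivw three sample 1}, the only new (and simplifying) feature being that the bias-correction term renders each summand \emph{exactly} centered. Write $\psi_j(\beta_0)=A_j\indpit$ with $A_j=(\hat{\Gamma}_j\hat{\gamma}_j-\beta_0\hat{\gamma}_j^2+\beta_0\sigma_{Xj}^2)\sigma_{Yj}^{-2}$. Under Assumption \ref{assump: 2} the triples $(\hat{\Gamma}_j,\hat{\gamma}_j,\hat{\gamma}_j^*)$ are independent across $j$, so the $\psi_j(\beta_0)$ are independent; moreover within each $j$ the factor $A_j$ is independent of $\indpit$. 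Since $E(\hat{\Gamma}_j\hat{\gamma}_j)=\beta_0\gamma_j^2$ and $E(\hat{\gamma}_j^2)=\gamma_j^2+\sigma_{Xj}^2$, the correction yields $E(A_j)=0$, whence $E[\psi_j(\beta_0)]=E(A_j)q_{\lambda,j}=0$. Because adding the constant $\beta_0 v_j$ does not change a variance, $\var(A_j)$ equals the quantity $\tau_j^2:=(w_j+v_j)+\beta_0^2 v_j(w_j+2v_j)$ already computed in the $\lambda=0$ case, so $\var[\psi_j(\beta_0)]=\tau_j^2 q_{\lambda,j}$ and, by independence, $\var[\psi(\beta_0)]=\sum_{j=1}^p\tau_j^2 q_{\lambda,j}=V_3$. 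Thus $V_3$ is the exact variance and no asymptotic-bias term must be subtracted; this is precisely why the present lemma dispenses with the condition $\kappa_\lambda/p_\lambda\to\infty$ needed in Lemma \ref{lemma: ivw three sample 1}.

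To obtain $V_3^{-1/2}\psi(\beta_0)\xrightarrow{D}N(0,1)$ I would verify a fourth-moment (Lyapunov) condition, the cleanest way to discharge Lindeberg here. Using $\indpit^2=\indpit$ and the within-$j$ independence, $E[\psi_j(\beta_0)^4]=E(A_j^4)q_{\lambda,j}$. Granting a uniform normalized bound $E(A_j^4)\le c\,\tau_j^4$, one gets $\sum_{j=1}^p E[\psi_j(\beta_0)^4]\le c\sum_{j=1}^p\tau_j^4 q_{\lambda,j}\le c(\max_j\tau_j^2)\sum_{j=1}^p\tau_j^2 q_{\lambda,j}=c(\max_j\tau_j^2)V_3$, so that $V_3^{-2}\sum_{j=1}^p E[\psi_j(\beta_0)^4]\le c\,\max_j\tau_j^2/V_3$. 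Hence the whole CLT reduces to the moment bound together with $\max_j\tau_j^2/V_3\to 0$.

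Two points remain. First, the uniform bound $E(A_j^4)\le c\,\tau_j^4$: since $A_j$ is a degree-two polynomial in the independent Gaussians $\hat{\Gamma}_j,\hat{\gamma}_j$ with its mean removed, the equivalence of $L^2$ and $L^4$ norms on Gaussian chaos of order at most two (hypercontractivity) gives such a bound with a universal constant; alternatively one expands $E(A_j^4)$ explicitly and checks it is $O(\tau_j^4)$ uniformly in $j$. Second, and this is the main obstacle, one must pass from the hypothesis, which controls $\max_j(\gamma_j^2\sigma_{Xj}^{-2}q_{\lambda,j})$, to $\max_j\tau_j^2/V_3\to 0$, in which $\tau_j^2=\Theta(\gamma_j^2\sigma_{Xj}^{-2}+1)$ carries no screening factor while $V_3=\Theta(\kappa_\lambda p_\lambda+p_\lambda)$. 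The divergence $\kappa_\lambda p_\lambda+p_\lambda\to\infty$ is inherited from the standing assumption $\kappa_\lambda\sqrt{p_\lambda}/\max(1,\lambda^2)\to\infty$, which disposes of the ``$+1$'' piece. For the $\max_j\gamma_j^2\sigma_{Xj}^{-2}$ piece, the key observation is that the screening probability is increasing in IV strength: writing $\mu_j^*=|\gamma_j|/\sigma_{Xj}^*$, one has $q_{\lambda,j}=\Phi(\mu_j^*-\lambda)+\Phi(-\mu_j^*-\lambda)$, and since $\sigma_{Xj}/\sigma_{Xj}^*$ is bounded, any index with $\gamma_j^2\sigma_{Xj}^{-2}\to\infty$ also has $\mu_j^*\to\infty$ and hence $q_{\lambda,j}\to 1$. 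Consequently the index maximizing $\gamma_j^2\sigma_{Xj}^{-2}$ carries $q_{\lambda,j}$ bounded away from $0$, so $\max_j\gamma_j^2\sigma_{Xj}^{-2}\asymp\max_j(\gamma_j^2\sigma_{Xj}^{-2}q_{\lambda,j})$, and the hypothesis then yields $\max_j\tau_j^2/V_3\to 0$. With the Lyapunov ratio shown to vanish, Lindeberg's condition holds and the stated convergence follows.
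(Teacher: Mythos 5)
Your overall strategy is the paper's: the summands $\psi_j(\beta_0)=A_j\indpit$ are independent and, thanks to the $+\beta_0\sigma_{Xj}^2$ correction, exactly centered, so one computes the variance and applies a CLT for a triangular array of independent terms. Your moment computations are correct and agree with the paper's: $E[\psi_j(\beta_0)]=0$ and $\var[\psi_j(\beta_0)]=\tau_j^2 q_{\lambda,j}$ with $\tau_j^2=(w_j+v_j)+\beta_0^2v_j(w_j+2v_j)$, so $V_3$ is the exact variance and no asymptotic-bias term must be removed --- which is indeed why the condition $\kappa_\lambda/p_\lambda\to\infty$ of Lemma \ref{lemma: ivw three sample 1} is dispensable here. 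Replacing the paper's direct Lindeberg truncation by a Lyapunov fourth-moment bound, with $E(A_j^4)\le c\,\tau_j^4$ from Gaussian hypercontractivity, is a legitimate variant; both routes reduce the CLT to negligibility of the largest summand relative to $V_3$.

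The gap is in your final step, where you pass from the stated hypothesis $\max_j(\gamma_j^2\sigma_{Xj}^{-2}q_{\lambda,j})/(\kappa_\lambda p_\lambda+p_\lambda)\to 0$ to the condition your Lyapunov bound actually requires, $\max_j\tau_j^2/V_3\to 0$, in which the screening probability no longer appears. Your bridge --- that the index maximizing $\gamma_j^2\sigma_{Xj}^{-2}$ carries $q_{\lambda,j}$ bounded away from zero --- is valid only for bounded $\lambda$ (where in fact $q_{\lambda,j}=\Phi(|\gamma_j|/\sigma_{Xj}^{*}-\lambda)+\Phi(-|\gamma_j|/\sigma_{Xj}^{*}-\lambda)\ge 2\Phi(-\lambda)$ uniformly in $j$, so the two maxima are trivially comparable). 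But the lemma must cover thresholds that diverge with $p$, e.g.\ $\lambda=\sqrt{2\log p}$, which the paper explicitly recommends; there, an IV with $|\gamma_j|/\sigma_{Xj}^{*}\to\infty$ but $|\gamma_j|/\sigma_{Xj}^{*}\ll\lambda$ has $q_{\lambda,j}\to 0$, the argmax need not have selection probability bounded below, and $\max_j\gamma_j^2\sigma_{Xj}^{-2}$ can be of much larger order than $\max_j(\gamma_j^2\sigma_{Xj}^{-2}q_{\lambda,j})$. So as written your argument does not close the diverging-$\lambda$ case. To be fair, this is exactly the point the paper's own proof elides: it asserts the Lindeberg condition follows from $\max_j\var[\psi_j(\beta_0)]/V_3\to 0$ together with $E(K_j^{*2})=1$, but unwinding that truncation shows the $q_{\lambda,j}$ inside $\var[\psi_j(\beta_0)]$ cancels against the one in the truncation level, leaving one to control $\tau_j^2/V_3$ anyway (or to argue separately that strong-but-rarely-selected IVs contribute negligibly, in probability, both to the sum and to $V_3$). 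Your write-up has the virtue of making this requirement explicit; it just does not discharge it in the generality the lemma needs.
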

\begin{lemma} For a given threshold $\lambda\geq 0$, if $\kappa_\lambda  \sqrt{p_\lambda}/\max(1,\lambda^2)\rightarrow \infty$,
	\label{lemma: three sample 2}
	\[
	-V_4^{-1}\psi'\xrightarrow{P}1,\qquad V_4=\sum_{j=1}^p w_jq_{\lambda,j}
	\]
\end{lemma}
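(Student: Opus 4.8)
The plan is to mirror the argument for Lemma~\ref{lemma: ivw three sample 2} from the proof of Theorem~\ref{theo: p}, the only structural change being that the debiasing replaces $\hat\gamma_j^2$ by $\hat\gamma_j^2-\sigma_{Xj}^2$, which shifts the target denominator from $V_2=\sum_j(w_j+v_j)q_{\lambda,j}$ to $V_4=\sum_j w_jq_{\lambda,j}$. First I would record that
\[
-\psi'=\sum_{j=1}^p(\hat\gamma_j^2-\sigma_{Xj}^2)\sigma_{Yj}^{-2}\indpit ,
\]
and exploit that under Assumption~\ref{assump: 2} the selection indicator $\indpit$ depends only on $\hat\gamma_j^*$ and is therefore independent of $\hat\gamma_j$. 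This factorizes the mean: since $E[(\hat\gamma_j^2-\sigma_{Xj}^2)\sigma_{Yj}^{-2}]=\gamma_j^2\sigma_{Yj}^{-2}=w_j$ and $E[\indpit]=q_{\lambda,j}$, I obtain $E(-\psi')=\sum_{j=1}^p w_jq_{\lambda,j}=V_4$, so $E(-V_4^{-1}\psi')=1$ exactly. This is where the bias correction pays off: subtracting the constant $\sigma_{Xj}^2$ removes precisely the $v_j$ contribution that inflates the IVW denominator.

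By independence across $j$ and Chebyshev's inequality it then suffices to show $\var(-\psi'/V_4)\to0$. Using $\indpit^2=\indpit$ and the independence of $\indpit$ from $\hat\gamma_j$, a direct second-moment computation gives
\[
\var(-\psi')=\sum_{j=1}^p\Big[(w_j^2+4w_jv_j+2v_j^2)q_{\lambda,j}-w_j^2q_{\lambda,j}^2\Big]
=\underbrace{\sum_{j=1}^p w_j^2q_{\lambda,j}(1-q_{\lambda,j})}_{A1}+\underbrace{\sum_{j=1}^p(4w_jv_j+2v_j^2)q_{\lambda,j}}_{A2},
\]
where, since the variance ratios are bounded by Assumption~\ref{assump: 2}, $A2=\Theta(\kappa_\lambda p_\lambda+p_\lambda)$ and $V_4=\Theta(\kappa_\lambda p_\lambda)$.

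The main obstacle is $A1$, which involves the fourth power $w_j^2=\gamma_j^4/\sigma_{Yj}^4$ and is not directly governed by $\kappa_\lambda$ or $p_\lambda$. However, this is the identical term already handled in the proof of Lemma~\ref{lemma: ivw three sample 2}, so I would reuse that bound verbatim: writing $\delta_j=|\gamma_j|/\sigma_{Xj}^*-\lambda$ and splitting on whether $\delta_j\ge\lambda$ or $\delta_j<\lambda$, the Gaussian tail bound $1-q_{\lambda,j}\le\Phi(-\delta_j)$ together with $\delta_j^4 e^{-\delta_j^2/2}\le 16e^{-2}$ yields $A1=O(\lambda^4 p_\lambda+p_\lambda)$ for $\lambda>0$ and $A1=0$ for $\lambda=0$. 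The intuition is that a strong IV (large $w_j$) has $q_{\lambda,j}$ near $1$, so $1-q_{\lambda,j}$ is exponentially small, whereas a weak IV that barely survives screening cannot have $w_j$ of larger order than $\lambda^2$.

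Combining the two pieces gives $\var(-\psi')=O\big((\kappa_\lambda+\lambda^4+1)p_\lambda\big)$, and dividing by $V_4^2=\Theta(\kappa_\lambda^2 p_\lambda^2)$ yields
\[
\var(-\psi'/V_4)=O\!\left(\frac{1}{\kappa_\lambda p_\lambda}+\Big(\frac{\max(1,\lambda^2)}{\kappa_\lambda\sqrt{p_\lambda}}\Big)^2\right)\to0
\]
under the hypothesis $\kappa_\lambda\sqrt{p_\lambda}/\max(1,\lambda^2)\to\infty$, which in particular forces $\kappa_\lambda p_\lambda\to\infty$. A final application of Markov's inequality to $(-\psi'/V_4-1)^2$ then delivers $-V_4^{-1}\psi'\xrightarrow{P}1$, completing the lemma. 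The whole argument is routine apart from the $A1$ bound, so I expect almost no new work beyond importing that estimate and tracking how the three error terms are each killed by the single effective-sample-size condition.
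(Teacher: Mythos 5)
Your proposal is correct and follows essentially the same route as the paper: compute $E(-\psi')=V_4$ exactly, decompose $\var(-\psi')$ into $\sum_j w_j^2 q_{\lambda,j}(1-q_{\lambda,j})$ plus $\sum_j(4w_jv_j+2v_j^2)q_{\lambda,j}$, import the $O(\lambda^4 p_\lambda+p_\lambda)$ bound on the first term from the proof of Lemma~\ref{lemma: ivw three sample 2}, and conclude by Chebyshev under $\kappa_\lambda\sqrt{p_\lambda}/\max(1,\lambda^2)\to\infty$. The paper's proof is exactly this argument, stated more tersely.
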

Combining Lemmas \ref{lemma: three sample 1} and \ref{lemma: three sample 2}, we have from Slutsky's theorem that 
\[
\left(\frac{V_3}{V_4^2}\right)^{-1/2} (\hat{\beta}_{\lambda,\rmdivw}-\beta_0)\xrightarrow{D} N(0,1).
\]
Consistency follows from the proof in part (b). In what follows, we prove Lemma \ref{lemma: three sample 1} and Lemma \ref{lemma: three sample 2}, completing the proof of Theorem 4.1 (a).

{\bf Proof of Lemma \ref{lemma: three sample 1}.}
Let  $\psi_j(\beta_0)=\left(\hat{\Gamma}_j \hat{\gamma}_j-\beta_0\hat{\gamma}_j^2+\beta_0\sigma_{Xj}^2\right)\sigma_{Yj}^{-2}\indpit$. Some algebra reveals that
\begin{align}
	& E[\psi_j(\beta_0)]=0, \nonumber\\
	& \var[\psi_j(\beta_0)]=[  (w_j+v_j) + \beta_0^2v_j(w_j+2v_j)] q_{\lambda, j}. \nonumber 
\end{align}
Then, similar to the proof of Lemma  \ref{lemma: ivw three sample 1}, as $p\rightarrow \infty$, applying the Lindeberg Central Limit Theorem with assumption $\max_j(\gamma_j^2\sigma_{Xj}^{-2}q_{\lambda, j})/(\kappa_\lambda p_\lambda+p_\lambda)\rightarrow 0$ arrives at the desired result.

{\bf Proof of Lemma \ref{lemma: three sample 2}.}
The proof is similar to the proof of Lemma \ref{lemma: ivw three sample 2}. In particular, some algebra reveals that
\begin{align}
	E(-\psi')&=\sum_{j=1}^pw_jq_{\lambda,j}=V_4\nonumber\\
	\var(-\psi')&=\sum_{j=1}^p (w_j^2+4w_jv_j+2v_j^2)q_{\lambda,j} - w_j^2q_{\lambda,j}^{2} \nonumber\\
	&=\sum_{j=1}^p w_j^2q_{\lambda,j}(1-q_{\lambda,j}) +\sum_{j=1}^p \left\{4w_jv_j+2v_j^2\right\}q_{\lambda,j}\nonumber
\end{align}
Now, it remains to show that $\var(-\psi'/V_4)=o(1)$ where $V_4=\Theta(\kappa_\lambda p_\lambda)$. Similar to the proof of Lemma \ref{lemma: ivw three sample 2}, we have
\[
\var(-\psi')= O\left(\lambda^4 p_\lambda+p_\lambda\right)+\Theta\left( \kappa_\lambda p_\lambda +p_\lambda\right).
\]
Hence, by the Markov Inequality and the condition $\kappa_\lambda \sqrt{p_\lambda}/\max(1,\lambda^2)\rightarrow\infty$,  we have $\var(-\psi'/V_4)=o(1)$.\\

(b) As $\kappa_\lambda \sqrt{p_\lambda}/\max(1,\lambda^2)\rightarrow\infty$, we have from Lemma \ref{lemma: three sample 2} that
\begin{align}
	&\hat{\beta}_{\lambda,\rmdivw}-\beta_0 \nonumber = \frac{\sum_{j=1}^{p}(\hat{\Gamma}_j \hat{\gamma}_j-\beta_0\hat{\gamma}_j^2+\beta_0\sigma_{Xj}^2)\sigma_{Yj}^{-2}\indpit}{\sum_{j=1}^p w_jq_{\lambda,j} }\biggr/(1+o_P(1))\nonumber
\end{align}
Then, for any $\epsilon > 0$, as $\kappa_\lambda \sqrt{p_\lambda}\rightarrow\infty$, 
\begin{align}
	&P\left\{ \bigg| \frac{\sum_{j=1}^{p}\left(\hat{\Gamma}_j \hat{\gamma}_j-\beta_0\hat{\gamma}_j^2+\beta_0\sigma_{Xj}^2\right)\sigma_{Yj}^{-2}\indpit}{\sum_{j=1}^p w_j q_{\lambda,j}}\bigg|>\epsilon\right\}\nonumber\\ 
	&\leq \frac{ \Theta (\kappa_{\lambda} p_\lambda+p_\lambda)}{\epsilon^2 \Theta (\kappa_{\lambda} p_\lambda)^2}\rightarrow 0,  \nonumber
\end{align}
where the variance of the numerator is calculated in the proof of Lemma \ref{lemma: three sample 1}. Therefore, $\hat{\beta}_{\lambda, \rmdivw}=\beta_0+o_P(1)$.

\subsubsection{Without assuming knowing SDs and $ \lambda=0 $}
Before diving into the investigations about the effect of using SEs $ \hat{\sigma}_{Xj}, \hat{\sigma}_{Yj}, \hat{\sigma}_{Xj}^* $ instead of SDs  $ \sigma_{Xj}, \sigma_{Yj}, \sigma_{Xj}^* $, we introduce some notations and useful results that will be used in the proof.

Assume models (2.1)-(2.2). Let $ (X_i, Z_{X1i}, \dots, Z_{Xpi})_{i=1,\dots, n_X}$ be the i.i.d. individual-level data from the exposure dataset, $ (Y_i, Z_{Y1i}, \dots, Z_{Ypi})_{i=1,\dots, n_Y} $  be the  i.i.d. individual-level data from the outcome dataset. The summary statistics $ \{\hat{\gamma}_j, \hat{\sigma}_{Xj}\}_{j=1,\dots, p} $ are calculated from the individual-level data from the exposure data, $ \{\hat{\Gamma}_j, \hat{\sigma}_{Yj}\}_{j=1,\dots, p} $ are calculated from the individual-level data from the outcome data, but the individual-data are not available to us. Due to the two-sample design, $ (X_i, Z_{X1i}, \dots, Z_{Xpi})_{i=1,\dots, n_X}$ are independent with $ (Y_i, Z_{Y1i}, \dots, Z_{Ypi})_{i=1, \dots, n_Y} $.

We use $ H_Q= Q(Q^T Q)^{-1} Q^T $	to denote the hat matrix, where $ Q^T $ represents the transpose of $ Q $. For example, $ H_e= e(e^Te)^{-1} e^T $ denotes the hat matrix when the regression only includes the intercept, where $ e $ is a column vector of 1's; $ H_{Yj} = \Z_{Yj} (\Z_{Yj}^T \Z_{Yj})^{-1} \Z_{Yj}^T$, where $ \Z_{Yj} $ is the column vector of the $ j $th IV from the outcome dataset after \emph{centering}, i.e., $ e^T\Z_{Yj}=0 $; $ H_{Xj} = \Z_{Xj} (\Z_{Xj}^T \Z_{Xj})^{-1} \Z_{Xj}^T$, where $ \Z_{Xj} $ is the column vector of the $ j $th IV from the exposure dataset after centering. 

From marginal regression, 
\begin{align}
	&\sigma_{Yj}^2=  \frac{\var (Y)- \Gamma_j^2 \var(Z_{j})}{n_Y\var(Z_{j})}, \qquad \hat{\sigma}_{Yj}^2= \frac{1}{\Z_{Yj}^T\Z_{Yj} } \frac{RSS_{Yj}}{n_Y-2} \nonumber\\
	&\sigma_{Xj}^2=  \frac{\var (X)- \gamma_j^2 \var(Z_{j})}{n_X\var(Z_{j})}, \qquad \hat{\sigma}_{Xj}^2= \frac{1}{\Z_{Xj}^T\Z_{Xj} } \frac{RSS_{Xj}}{n_X-2} \nonumber
\end{align}
where $ RSS_{Yj} = \Y^T (I- H_e- H_{Yj}) \Y$ is the residual sum of squares from the marginal regression of $ Y $ on $ Z_{Yj} $, $ \Y^T = (Y_1,\dots, Y_{n_Y})$, $ RSS_{Xj} = \X^T (I- H_e- H_{Xj}) \X$ is the residual sum of squares from the marginal regression of $ X $ on $ Z_{Xj} $, $ \X^T = (X_1,\dots, X_{n_X})$.

We will use the following useful facts in the proof.
\begin{enumerate}
	\item The residual sum of squares from the marginal regression can be decomposed as 
	\begin{align}
		RSS_{Yj}&= \Y^T (I- H_e- H_{Yj})\Y = \Y^T(I-H_e)\Y - \Y^T H_{Yj} \Y \nonumber\\
		& = \Y^T(I-H_e)\Y- \hat{\Gamma}_j^2 \Z_{Yj}^T\Z_{Yj} \label{appeq: RSS_j} \\
		RSS_{Xj}&=\X^T (I- H_e- H_{Xj})\X = \X^T(I-H_e)\X - \X^T H_{Xj} \X \nonumber\\
		& = \X^T(I-H_e)\X- \hat{\gamma}_j^2 \Z_{Xj}^T\Z_{Xj}  \nonumber
	\end{align}
	\item From Taylor expansion,
	\begin{align}
		\frac{\Y^T(I-H_e)\Y}{ \Z_{Yj}^T\Z_{Yj}} &= \frac{\var(Y)}{\var(Z_{j})} \biggr[  1+\left( \frac{\Y^T(I-H_e)\Y}{(n_Y-1)\var(Y)}- 1\right) \nonumber\\
		&\qquad \qquad	- \left( \frac{\Z_{Yj}^T\Z_{Yj}}{(n_Y-1)\var(Z_{j})}-1\right)+O_P(n_Y^{-1})\biggr] \nonumber
	\end{align}
	Thus, from (\ref{appeq: RSS_j}) and the above equation, 
	\begin{align}
		\frac{\hat{\sigma}_{Yj}^2}{\sigma_{Yj}^2}-1&=\sigma_{Yj}^{-2} \left( \frac{\Y^T(I-H_e)\Y}{(n_Y-2)\Z_{Yj}^T\Z_{Yj}}- \frac{\hat{\Gamma}_j^2}{n_Y-2}- \sigma_{Yj}^2\right) \nonumber\\
		&= \psi_{Yj}^2\left[ \left( \frac{\Y^T(I-H_e)\Y}{(n_Y-1)\var(Y)}- 1\right) - \left( \frac{\Z_{Yj}^T\Z_{Yj}}{(n_Y-1)\var(Z_{j})}-1\right)         \right] \nonumber\\
		& -\sigma_{Yj}^{-2} \left(\frac{\hat{\Gamma}_j^2-\Gamma_j^2-\sigma_{Yj}^2}{n_Y-2}\right) + \frac{1}{n_Y} \psi_{Yj}^2- \frac{\beta_0^2w_j}{n_Y(n_Y-2)}+ O_P(n_Y^{-1}) \label{eq: sigma_{Yj}}
	\end{align}
	where $\psi_{Yj}^2= \frac{n_Y \text{var}(Y)}{(n_Y-2)(\text{var}(Y)-\Gamma_j^2\text{var}(Z_{j}))} $, which by assumption is  bounded for every $ j $. 	Similarly,
	\begin{align}
		\frac{\hat{\sigma}_{Xj}^2}{\sigma_{Xj}^2}-1&=\sigma_{Xj}^{-2} \left( \frac{\X^T(I-H_e)\X}{(n_X-2)\Z_{Xj}^T\Z_{Xj}}- \frac{\hat{\gamma}_j^2}{n_Y-2}- \sigma_{Xj}^2\right) \nonumber\\
		&= \psi_{Xj}^2\left[ \left( \frac{\X^T(I-H_e)\X}{(n_X-1)\var(X)}- 1\right) - \left( \frac{\Z_{Xj}^T\Z_{Xj}}{(n_X-1)\var(Z_{j})}-1\right)         \right]  \nonumber\\
		&-\sigma_{Xj}^{-2} \left(\frac{\hat{\gamma}_j^2-\gamma_j^2-\sigma_{Xj}^2}{n_X-2}\right) +\frac{1}{n_X} \psi_{Xj}^2- \frac{w_j/v_j}{n_X(n_X-2)}+ O_P(n_X^{-1}) 
	\end{align}
	where $\psi_{Xj}^2= \frac{n_X\text{var}(X)}{(n_X-2)(\text{var}(X)-\gamma_j^2\text{var}(Z_{j}))} $, which by assumption is  bounded for every $ j $.
	
	\item  From Taylor expansion, we have 
	\begin{align}
		\frac{1}{\hat{\sigma}_{Yj}^2} = \frac{1}{{\sigma}_{Yj}^2} - \frac{1}{{\sigma}_{Yj}^4} (\hat{\sigma}_{Yj}^2- {\sigma}_{Yj}^2) + \frac{1}{\sigma_{Yj}^6} (\hat{\sigma}_{Yj}^2- {\sigma}_{Yj}^2)^2 + \dots \nonumber
	\end{align}
	thus,
	\begin{align}
		\frac{\hat{\sigma}^{-2}_{Yj}}{{\sigma}_{Yj}^{-2}}-1 = - \left(\frac{\hat{\sigma}_{Yj}^2}{{\sigma}_{Yj}^2}-1\right) + \underbrace{\left( \frac{\hat{\sigma}_{Yj}^2}{{\sigma}_{Yj}^2}-1 \right)^2}_{O_p(n_Y^{-1})}+ o_P(n_Y^{-1}) \label{appeq: sigma_Yj}
	\end{align}

	\item From the above derivations,  we have
	\begin{align}
		&-	\sum_{j=1}^p\beta_0v_j\left(\frac{\hat{\sigma}_{Yj}^{-2}}{\sigma_{Yj}^{-2}}- 1\right)= \sum_{j=1}^{p} \beta_0 v_j \left(\frac{\hat{\sigma}_{Yj}^2}{{\sigma}_{Yj}^2}-1\right) + O_P(p/n_Y) \nonumber\\
		&=\left( \sum_{j=1}^p\beta_0v_j\psi_{Yj}^2 \right) \left(\frac{\Y^T (I-H_e)\Y}{(n_Y-1) \var(Y)}-1\right)- \underbrace{\sum_{j=1}^p \beta_0v_j\sigma_{Yj}^{-2} \left( \frac{\hat{\Gamma}_j^2- \Gamma_j^2-\sigma_{Yj}^2}{n_Y-2}\right) }_{O_p\left(\frac{\sqrt{\kappa p+p}}{n_Y}\right)}\nonumber\\
		&-\underbrace{\sum_{j=1}^p\beta_0v_j\psi_{Yj}^2 \left(\frac{\Z_{Yj}^T\Z_{Yj}}{(n_Y-1) \var(Z_{j})}-1\right)}_{O_p(\sqrt{p/n_Y})}   +O_P(p/n_Y) + \Theta(p/n_Y+ \kappa p/n_Y^2) \nonumber
	\end{align}
	and 
	\begin{align}
		&\sum_{j=1}^p\beta_0v_j\left(\frac{\hat{\sigma}_{Xj}^{2}}{\sigma_{Xj}^{2}}- 1\right) \label{eq: B2}\\
		&=\left( \sum_{j=1}^p\beta_0v_j\psi_{Xj}^2 \right) \left(\frac{\X^T (I-H_e)\X}{(n_X-1) \var(X)}-1\right)
		- \underbrace{\sum_{j=1}^p \beta_0v_j\sigma_{Xj}^{-2} \left( \frac{\hat{\gamma}_j^2- \gamma_j^2-\sigma_{Xj}^2}{n_X-2}\right) }_{O_p\left(\frac{\sqrt{\kappa p+p}}{n_X}\right)}\nonumber\\
		& -\underbrace{\sum_{j=1}^p\beta_0v_j\psi_{Xj}^2 \left(\frac{\Z_{Xj}^T\Z_{Xj}}{(n_X-1) \var(Z_{j})}-1\right)}_{O_p(\sqrt{p/n_X})} +O_P(p/n_X) + \Theta(p/n_X+ \kappa p/n_X^2) \nonumber
	\end{align}
	
	\item By central limit theorem, we have the conditional distribution of $ \hat{\Gamma}_j, \hat{\gamma}_j$ as 
	\begin{align}
		&	\hat{\Gamma}_j |\Z_{Yj} \sim N\left(\Gamma_j, \frac{ \var (Y)- \Gamma_j^2\var(Z_{j}) }{\Z_{Yj}^T\Z_{Yj}} \right)  \nonumber\\
		&	\hat{\gamma}_j |\Z_{Xj} \sim N\left(\gamma_j, \frac{ \var (X)- \gamma_j^2\var(Z_{j}) }{\Z_{Xj}^T\Z_{Xj}} \right) \nonumber
	\end{align}
	Thus,
	\begin{align}
		&	\cov(\hat{\Gamma}_j, \hat{\Gamma}_j^2\Z_{Yj}^T\Z_{Yj}) = 2\Gamma_j\{\var(Y)- \Gamma_j^2\var(Z_{j})\} \label{appeq:cov}\\
		&	\cov(\hat{\gamma}_j, \hat{\gamma}_j^2\Z_{Xj}^T\Z_{Xj}) = 2\gamma_j\{\var(X)- \gamma_j^2\var(Z_{j})\} \label{appeq: cov, X}
	\end{align}
	\item   Cauchy-Schwartz inequality:  for random variables $ X_j, j=1,\dots, p $ with finite second moments, it is true that 
	\begin{align}
		\var(\sum_{j=1}^p X_i) \leq p \sum_{j=1}^{p} \var(X_j) \label{appeq: cauchy-schwartz}
	\end{align}
	with equality holds when all the pairwise correlations among $ X_j $'s are equal to 1.
\end{enumerate}

Lemma \ref{lemma: dIVW} studies the numerator of $ \hat{\beta}_{\rmdivw}- \beta_0 $ without assuming $ \hat{\sigma}_{Xj}= \sigma_{Xj}, \hat{\sigma}^*_{Xj}= \sigma^*_{Xj}, \hat{\sigma}_{Yj}= \sigma_{Yj}$.
\begin{lemma} \label{lemma: dIVW}
	Assume models (\ref{eq: exposure-IV})-(\ref{eq: outcome-exposure}),  Assumptions \ref{assump: 1}-\ref{assump: 2},   and that $ p/n_X \rightarrow 0$,
	$\kappa \sqrt{p} \to \infty$ and 
	$\max_j (\gamma_j^2\sigma_{Xj}^{-2} )/ (\kappa p+p)\to 0$. Then, as $ p, n_X \rightarrow \infty $,
	\begin{align}
		\frac{\sum_{j=1}^p\left(\hat{\Gamma}_j \hat{\gamma}_j-\beta_0\hat{\gamma}_j^2+\beta_0\hat{\sigma}_{Xj}^2\right)\hat{\sigma}_{Yj}^{-2}}{ \sigma_{0,\rmdivw} }\xrightarrow{D} N(0,1)\nonumber 
	\end{align}
	where $  	\sigma^2_{0,\rmdivw}=\sum_{j=1}^p (w_j+v_j)+\beta_0^2v_j(w_j+2v_j), 
	v_j=\sigma_{Xj}^2/\sigma_{Yj}^2,  w_j=\gamma_j^2/\sigma_{Yj}^2  $.
\end{lemma}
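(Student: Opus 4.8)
The plan is to bootstrap off the known-SD case already settled in Section~\ref{subsubsec: divw, known SDs, lambda=0}, where it is shown that the \emph{infeasible} numerator satisfies $\sigma_{0,\rmdivw}^{-1}\sum_{j=1}^p(\hat\Gamma_j\hat\gamma_j-\beta_0\hat\gamma_j^2+\beta_0\sigma_{Xj}^2)\sigma_{Yj}^{-2}\xrightarrow{D}N(0,1)$, and that $\sigma_{0,\rmdivw}^2=\Theta(\kappa p+p)$. By Slutsky's theorem it then suffices to show that the discrepancy between the feasible numerator and this infeasible one,
\[
D=\sum_{j=1}^p\big(\hat\Gamma_j\hat\gamma_j-\beta_0\hat\gamma_j^2+\beta_0\hat\sigma_{Xj}^2\big)\hat\sigma_{Yj}^{-2}-\sum_{j=1}^p\big(\hat\Gamma_j\hat\gamma_j-\beta_0\hat\gamma_j^2+\beta_0\sigma_{Xj}^2\big)\sigma_{Yj}^{-2},
\]
obeys $D/\sqrt{\kappa p+p}\xrightarrow{P}0$. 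I would split $D=T_1+T_2+T_3$, with $T_1=\sum_j(\hat\Gamma_j\hat\gamma_j-\beta_0\hat\gamma_j^2)(\hat\sigma_{Yj}^{-2}-\sigma_{Yj}^{-2})$ the ``signal$\times$variance-error'' cross term, and $T_2=\beta_0\sum_j\hat\sigma_{Xj}^2(\hat\sigma_{Yj}^{-2}-\sigma_{Yj}^{-2})$, $T_3=\beta_0\sum_j(\hat\sigma_{Xj}^2-\sigma_{Xj}^2)\sigma_{Yj}^{-2}$ the two bias-correction-factor error terms.

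For $T_2$ and $T_3$ the work is essentially done by the Taylor expansions recorded above. Writing $\hat\sigma_{Xj}^2=\sigma_{Xj}^2(1+o_P(1))$ (the replacement error being a sum of products of two $O_P(n^{-1/2})$ relative errors, hence second order) reduces $T_2$ to $\beta_0\sum_j v_j(\hat\sigma_{Yj}^{-2}/\sigma_{Yj}^{-2}-1)$ and $T_3$ to $\beta_0\sum_j v_j(\hat\sigma_{Xj}^2/\sigma_{Xj}^2-1)$. Invoking \eqref{appeq: sigma_Yj}, \eqref{eq: sigma_{Yj}}, \eqref{eq: B2} and its $Y$-analogue, each of these is a \emph{shared-sample} term of the form $\big(\sum_j\beta_0 v_j\psi^2\big)\big(\tfrac{\Y^T(I-H_e)\Y}{(n_Y-1)\var(Y)}-1\big)$ of order $O_P(p/\sqrt{n_Y})$ (respectively $O_P(p/\sqrt{n_X})$), plus a term independent across $j$ and strictly smaller remainders. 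Dividing by $\sqrt{\kappa p+p}\ge\sqrt p$ and using $p/n_X\to0$ shows $T_2/\sigma_{0,\rmdivw}\xrightarrow{P}0$ and $T_3/\sigma_{0,\rmdivw}\xrightarrow{P}0$.

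The crux is $T_1$. Using \eqref{appeq: sigma_Yj} and then \eqref{eq: sigma_{Yj}}, I would expand $\hat\sigma_{Yj}^{-2}/\sigma_{Yj}^{-2}-1$ into three pieces: (i) a shared factor proportional to $A_Y=\tfrac{\Y^T(I-H_e)\Y}{(n_Y-1)\var(Y)}-1=O_P(n_Y^{-1/2})$; (ii) a piece $\propto\tfrac{\Z_{Yj}^T\Z_{Yj}}{(n_Y-1)\var(Z_j)}-1$ that is independent across $j$; and (iii) a piece $\propto\sigma_{Yj}^{-2}(\hat\Gamma_j^2-\Gamma_j^2-\sigma_{Yj}^2)$ that is correlated with $\hat\Gamma_j$. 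Piece (i) contributes $|A_Y|$ times a sum with mean of order $p$, hence $O_P(p/\sqrt{n_Y})=o_P(\sqrt{\kappa p+p})$. Piece (ii) has mean zero and mutually independent summands (distinct SNPs, two-sample design), so summing the per-coordinate variances gives order $\Theta((\kappa p+p)/n_Y)$, i.e.\ $O_P(\sqrt{(\kappa p+p)/n_Y})=o_P(\sqrt{\kappa p+p})$. Piece (iii) is delicate: since $\hat\Gamma_j$ enters both factors one cannot treat them as independent, so I would use the two-sample independence $\hat\gamma_j\perp(\hat\Gamma_j,\hat\sigma_{Yj})$ together with the third-moment identity behind \eqref{appeq:cov} to evaluate its summand mean as $2\beta_0 w_j/(n_Y-2)$; summing gives a bias of order $\kappa p/n_Y$, which is $o(\sqrt{\kappa p+p})$ precisely because $\kappa p=O(n_X)=O(n_Y)$ (the a priori bound $\kappa/p=O(n_X/p^2)$ shown earlier), while its centered fluctuation is controlled by summing independent variances.

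The main obstacle is exactly piece (iii) together with the shared-factor pieces: the estimated variance $\hat\sigma_{Yj}^2$ is computed from the same outcome sample as $\hat\Gamma_j$ (and $\hat\sigma_{Xj}^2$ from the same exposure sample as $\hat\gamma_j$), so the cross term $T_1$ carries a genuine nonzero bias of order $\kappa p/n_Y$ that must be shown to be dominated by the standard deviation $\sqrt{\kappa p+p}$; this is where the covariance identities \eqref{appeq:cov}--\eqref{appeq: cov, X}, the Cauchy--Schwarz variance bound \eqref{appeq: cauchy-schwartz} for the remainder sums coupled through $\Y^T(I-H_e)\Y$ and $\X^T(I-H_e)\X$, and the bound $\kappa p=O(n_X)$ are indispensable. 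A coarser route---bounding $|T_1|\le\max_j|\hat\sigma_{Yj}^{-2}/\sigma_{Yj}^{-2}-1|\cdot\sum_j|(\hat\Gamma_j\hat\gamma_j-\beta_0\hat\gamma_j^2)\sigma_{Yj}^{-2}|$---is available but the uniform maximum over $p$ coordinates costs a $\sqrt{\log p}$ factor and only delivers the claim under the slightly stronger $p\log p/n_X\to0$; it is the shared-sample coupling that makes the finer decomposition necessary to reach the stated $p/n_X\to0$. Collecting the three bounds gives $D/\sigma_{0,\rmdivw}\xrightarrow{P}0$, and Slutsky's theorem yields the asserted convergence.
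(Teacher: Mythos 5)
Your proposal is correct and follows essentially the same route as the paper's proof: the same Taylor/regression-residual expansion of the SE errors, the same three-way split of each expansion into a shared-sample factor ($\X^T(I-H_e)\X$ or $\Y^T(I-H_e)\Y$), an independent-across-$j$ piece ($\Z_{Yj}^T\Z_{Yj}$), and the $\hat\Gamma_j^2$ (resp.\ $\hat\gamma_j^2$) piece whose bias of order $\kappa p/n_Y$ is killed via the covariance identities and $\kappa p = O(n_X)$, together with the same Cauchy--Schwarz bound for the higher-order remainders. The only difference is cosmetic: the paper keeps the exposure-side shared term $\Psi_X B_2$ inside a joint CLT with the infeasible numerator $B_1$ (computing $\mbox{Cov}(B_1,B_2)$ explicitly), whereas you discard it as $o_P(\sqrt{\kappa p+p})$ via Slutsky; since its variance is $\Theta(p^2/n_X)=o(\kappa p+p)$ under $p/n_X\to 0$, both treatments are valid (your stated order $O_P(p/\sqrt{n_Y})$ for piece~(i) omits the $O_P(\sqrt{(\kappa p+p)/n_Y})$ fluctuation of the multiplying sum, but that term is also negligible, so the conclusion is unaffected).
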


\begin{proof}
	Consider the numerator 
	\begin{align}
		&\sum_{j=1}^p\left(\hat{\Gamma}_j \hat{\gamma}_j-\beta_0\hat{\gamma}_j^2+\beta_0\hat{\sigma}_{Xj}^2\right)\hat{\sigma}_{Yj}^{-2} \nonumber\\
		&= \sum_{j=1}^p\left(\hat{\Gamma}_j \hat{\gamma}_j-\beta_0\hat{\gamma}_j^2+ \beta_0\sigma_{Xj}^2\right){\sigma}_{Yj}^{-2} + \sum_{j=1}^p \beta_0(\hat{\sigma}_{Xj}^2- \sigma_{Xj}^2)\sigma_{Yj}^{-2}\nonumber\\
		&+ 	\sum_{j=1}^p\left(\hat{\Gamma}_j \hat{\gamma}_j-\beta_0\hat{\gamma}_j^2+\beta_0\sigma_{Xj}^2\right)(\hat{\sigma}_{Yj}^{-2}- {\sigma}_{Yj}^{-2}) + \sum_{j=1}^{p} \beta_0(\hat{\sigma}_{Xj}^2- \sigma_{Xj}^2)(\hat{\sigma}_{Yj}^{-2}- \sigma_{Yj}^{-2}) \nonumber\\
		&=\underbrace{ \sum_{j=1}^p\left(\hat{\Gamma}_j \hat{\gamma}_j-\beta_0\hat{\gamma}_j^2+ \beta_0 \sigma_{Xj}^2\right){\sigma}_{Yj}^{-2}}_{B_1} +\Psi_X \underbrace{ \left(\frac{\X^T (I-H_e)\X}{(n_X-1) \var(X)}-1\right) }_{B_2}\qquad \text{From (\ref{eq: B2})} \nonumber\\ 
		&  +	\underbrace{ \sum_{j=1}^p\left(\hat{\Gamma}_j \hat{\gamma}_j-\beta_0\hat{\gamma}_j^2+ \beta_0 \sigma_{Xj}^2\right)(\hat{\sigma}_{Yj}^{-2}- {\sigma}_{Yj}^{-2})}_{B_3}+ \underbrace{ \sum_{j=1}^{p} \beta_0(\hat{\sigma}_{Xj}^2- \sigma_{Xj}^2)(\hat{\sigma}_{Yj}^{-2}- \sigma_{Yj}^{-2})}_{B_4}  \nonumber\\
		&\qquad +  \underbrace{  O_P\left( \frac{p}{n_X}\right)   + O_P \left( \sqrt{\frac{p}{n_X}}\right) +O_P\left(\frac{\sqrt{\kappa p+p}}{n_X}\right)+\Theta\left( \frac{p}{n_X}+\frac{\kappa p}{n_X^2}\right)}_{o_P(\sqrt{\kappa p+p})+o(\sqrt{\kappa p+p}) } \nonumber\\ 
		&= B_1+\Psi_X B_2+B_3+B_4+  o_P(\sqrt{\kappa p+p})+o(\sqrt{\kappa p+p}) \nonumber
	\end{align}
	where $ 	\Psi_X=  \sum_{j=1}^p \beta_0v_j\psi_{Xj}^2,  \psi_{Xj}^2= \frac{n_X \var(X)}{(n_X-2) (\var(X)- \gamma_j^2\var(Z_j))} $.

	In the sequel, we will deal with $ B_1 $ and $ \Psi_X B_2 $ separately, and show that $ B_3+B_4 =o_P(\sqrt{\kappa p+p})$.

	The first term $ B_1 $ is studied in Section \ref{subsubsec: divw, known SDs, lambda=0}, as $ p\rightarrow \infty $
	\begin{eqnarray}
		\frac{B_1}{[\sum_{j=1}^p (w_j+v_j)+\beta_0^2(w_j+2v_j)v_j]^{1/2}} \xrightarrow{D} N(0,1).\nonumber
	\end{eqnarray}
	For the second term $ B_2 $, by central limit theorem for i.i.d. random variables, as $ n_X\rightarrow \infty $,
	\begin{align}
		\sqrt{n_X}B_2  \xrightarrow{D} N\left( 0, \frac{E[(X-E(X))^4]}{(\var(X))^2}-1\right) \nonumber
	\end{align} 
	Consider the covariance 
	\begin{align}
		&	\cov(B_1, B_2)\nonumber\\
		&=  \sum_{j=1}^{p} \cov\left(  \left(\hat{\Gamma}_j \hat{\gamma}_j-\beta_0\hat{\gamma}_j^2+ \beta_0 \sigma_{Xj}^2\right){\sigma}_{Yj}^{-2},    \frac{\X^T (I-H_e)\X }{(n_X-1) \var(X)}    \right) \nonumber \\
		& = \sum_{j=1}^{p} \cov\left( ( \beta_0\gamma_j\hat{\gamma}_j- \beta_0\hat{\gamma}_j^2)\sigma_{Yj}^{-2},    \frac{\X^T (I-H_e-H_{Xj})\X + \hat{\gamma}_j^2 \Z_{Xj}^T\Z_{Xj}}{(n_X-1) \var(X)}    \right) \nonumber \\
		&= \sum_{j=1}^{p} \cov\left( ( \beta_0\gamma_j\hat{\gamma}_j- \beta_0\hat{\gamma}_j^2)\sigma_{Yj}^{-2},    \frac{ \hat{\gamma}_j^2 \Z_{Xj}^T\Z_{Xj}}{(n_X-1) \var(X)}    \right) \nonumber \\
		&=\frac{-2\beta_0n_X}{(n_X-2)(n_X-1)} \sum_{j=1}^{p} (w_j+v_j)\psi_{Xj}^{-2}\nonumber
	\end{align}
	where the second equality is because $\hat{\Gamma}_j  $'s are independent with $ \X $, the third equality is because $ \hat{\gamma}_j \perp \X^T(I-H_e-H_{Xj})\X $, the last equality is from  (\ref{appeq: cov, X}) and $ \cov (\hat{\gamma}_j^2, \hat{\gamma}_j^2\Z_{Xj}^T\Z_{Xj}) = (4\gamma_j^2+2\sigma_{Xj}^2) \var(X-\gamma_jZ_j)$.
	
	Therefore, 
	\begin{align}
		\var(B_1+\Psi_X B_2)&= \var(B_1) + \frac{\Psi_X^2}{n_X} \left( \frac{E[(X-E(X))^4]}{(\var(X))^2}-1\right)\nonumber\\ &-\frac{4\beta_0 n_X\Psi_X}{(n_X-2)(n_X-1)}\sum_{j=1}^{p} (w_j+v_j)\psi_{Xj}^{-2} \nonumber\\
		&= \Theta(\kappa p+p) +\Theta(\frac{p^2}{n_X} ) -\Theta( \frac{p}{n_X} (\kappa p+p))\nonumber\\
		&=\var(B_1)+ o( \kappa p+p) \nonumber
	\end{align}
	where the last equality is from $ p/n_X=o(1) $.
	
	Besides, given that the conditional distribution  $ B_1|\{ \Z_{X1}, \dots, \Z_{Xp}, \X\} $ is asymptotically normal,  it implies that any linear combination of $ B_1 $ and $ B_2 $ is asymptotically normal conditional on $  \Z_{X1}, \dots, \Z_{Xp}, \X$. By bounded convergence theorem, it is also true that any linear combination of $ B_1 $ and $ B_2 $ is asymptotically normal unconditionally. Therefore,  as $ p, n_X\rightarrow \infty $  and $ p/n_X\rightarrow 0$, 
	\begin{align}
		\frac{B_1+\Psi_X B_2}{ \sigma_{0, \rmdivw}}\xrightarrow{D}N(0,1)
	\end{align}

	Next, we prove that $ B_3=o_P(\sqrt{\kappa p+p}) $. Note that
	\begin{align}
		B_3 \nonumber&= \sum_{j=1}^p\left(\hat{\Gamma}_j \hat{\gamma}_j-\beta_0\hat{\gamma}_j^2+ \beta_0 \sigma_{Xj}^2\right) (\hat{\sigma}_{Yj}^{-2} - {\sigma}_{Yj}^{-2}) \nonumber\\
		&=-\underbrace{\sum_{j=1}^{p} \left(\hat{\Gamma}_j \hat{\gamma}_j-\beta_0\hat{\gamma}_j^2+ \beta_0 \sigma_{Xj}^2\right) {\sigma}_{Yj}^{-2} \left( \frac{\hat{\sigma}_{Yj} ^{2}}{{\sigma}_{Yj}^{2}} -1 \right)}_{B_{31}} \nonumber\\
		&+ \underbrace{\sum_{j=1}^{p} \left(\hat{\Gamma}_j \hat{\gamma}_j-\beta_0\hat{\gamma}_j^2+ \beta_0 \sigma_{Xj}^2\right) {\sigma}_{Yj}^{-2} \left( \frac{\hat{\sigma}_{Yj} ^{2}}{{\sigma}_{Yj}^{2}} -1 \right)^2}_{B_{32}} + \dots\nonumber
	\end{align}
	Applying Cauchy-Schwartz inequality in (\ref{appeq: cauchy-schwartz}) to $ B_{32} $, we have
	that 
	\begin{align}
		\var(B_{32})& \leq p \sum_{j=1}^p \var \left\{   \left(\hat{\Gamma}_j \hat{\gamma}_j-\beta_0\hat{\gamma}_j^2+ \beta_0 \sigma_{Xj}^2\right) {\sigma}_{Yj}^{-2} \left( \frac{\hat{\sigma}_{Yj} ^{2}}{{\sigma}_{Yj}^{2}} -1 \right)^2  \right\} \nonumber\\
		&=p\sum_{j=1}^{p} \left\{(w_j+v_j)+ \beta_0^2 (w_j+2v_j) v_j\right\} \Theta(n_Y^{-2}) = \Theta\left( \frac{p (\kappa p+p)}{n^2_Y}\right) \nonumber
	\end{align}
	where the  second line is due to $( \hat{\Gamma}_j, \hat{\gamma}_j) \perp \hat{\sigma}_{Yj} $ and (\ref{appeq: sigma_Yj}). Similarly, the remaining terms' variance is of order $ o( p (\kappa p+p)/ n^2_Y) $. Therefore, $ B_{32} $ together with the remaining terms are $ o_P( \sqrt{\kappa p+p}) $ when $ p/n^2_Y \rightarrow 0$.
	
	For $ B_{31}$, 
	\begin{align}
		B_{31}&=\underbrace{\left[ \sum_{j=1}^p (\hat{\Gamma}_j \hat{\gamma}_j-\beta_0\hat{\gamma}_j^2+ \beta_0 \sigma_{Xj}^2)\sigma_{Yj}^{-2} \psi_{Yj}^2 \right]\left( \frac{\Y^T(I-H_e)\Y}{(n_Y-1)\var(Y)}- 1\right)}_{A_1}  \nonumber\\
		&- \underbrace{\sum_{j=1}^p (\hat{\Gamma}_j \hat{\gamma}_j-\beta_0\hat{\gamma}_j^2+ \beta_0 \sigma_{Xj}^2)\sigma_{Yj}^{-2} \psi_{Yj}^2   \left( \frac{\Z_{Yj}^T\Z_{Yj}}{(n_Y-1)\var(Z_{j})}-1\right)  }_{A_2}       \nonumber\\
		&- \underbrace{ \sum_{j=1}^p (\hat{\Gamma}_j \hat{\gamma}_j-\beta_0\hat{\gamma}_j^2+ \beta_0 \sigma_{Xj}^2)\sigma_{Yj}^{-2} \psi_{Yj}^2    \left(\frac{\hat{\Gamma}_j^2-\Gamma_j^2-\sigma_{Yj}^2}{n_Y-2}\right)\sigma_{Yj}^{-2}}_{A_3}  +  o_P(\sqrt{\kappa p+p})\nonumber
	\end{align} 
	The first term $ A_1 =o_P(\sqrt{\kappa p+p})$ because $ \Y^T(I-H_e)\Y/\{(n_Y-1) \var(Y)\})-1=o_P(1)$ from the central limit theorem. Consider the second term. From $ E(\hat{\Gamma}_j|\Z_{Yj})=\Gamma_j $, $ \hat{\gamma}_j\perp \Z_{Yj} $, we have $ E(A_2)=E[E(A_2|\Z_{Yj})]=0 $.	For its variance, 
	\begin{align}
		&\var \left\{   (\hat{\Gamma}_j \hat{\gamma}_j-\beta_0\hat{\gamma}_j^2+ \beta_0 \sigma_{Xj}^2)\sigma_{Yj}^{-2} \psi_j^2   \left( \frac{\Z_{Yj}^T\Z_{Yj}}{(n_Y-1)\var(Z_{j})}-1\right) \right\} \nonumber\\
		&= E\left\{    \var( \hat{\Gamma}_j \hat{\gamma}_j-\beta_0\hat{\gamma}_j^2+ \beta_0 \sigma_{Xj}^2|\Z_{Yj}   ) \sigma_{Yj}^{-4} \psi_j^4    \left( \frac{\Z_{Yj}^T\Z_{Yj}}{(n_Y-1)\var(Z_{j})}-1\right)^2  \right\} \nonumber\\
		&= E\left[ \left\{   \beta_0^2(w_j+2v_j)v_j+\frac{w_j+v_j}{\Z_{Yj}^T\Z_{Yj}} \var(Y-\Gamma_j Z_{j})\sigma_{Yj}^{-2}\right\} \left( \frac{\Z_{Yj}^T\Z_{Yj}}{(n_Y-1)\var(Z_{j})}-1\right)^2  \right]\psi_j^4 \nonumber\\
		&= \beta_0^2(w_j+2v_j) v_j\psi_j^4 E\left(\frac{\Z_{Yj}^T\Z_{Yj}}{(n_Y-1)\var(Z_{j})}-1\right)^2 \nonumber\\
		&\qquad + (w_j+v_j)\psi_j^4  \underbrace{\sigma_{Yj}^{-2}\var(Y-\Gamma_j Z_{j}) }_{n_Y\var(Z_{j})} E\left\{  -\frac{1}{(n_Y-1)\var(Z_{j})} +\frac{1}{\Z_{Yj}^T\Z_{Yj}}\right\} \nonumber\\
		&=   \underbrace{ \beta_0^2(w_j+2v_j) v_j\psi_j^4 E\left(\frac{\Z_{Yj}^T\Z_{j}}{(n_Y-1)\var(Z_{j})}-1\right)^2}_{\Theta((w_j+1)/n_Y)} \nonumber\\
		&\qquad + \underbrace{(w_j+v_j)v_j\psi_j^4  E\left\{ \frac{\var(Z_{j})}{\Z_{j}^T\Z_{Yj}/(n_Y-1)}-1\right\}}_{{o(w_j+1)}} \nonumber
	\end{align}
	Hence, $ A_2=o_P(\sqrt{\kappa p+p}) $.	Finally, consider the third term. One can show that $ E(A_3)= \frac{2\beta_0}{n_Y-2}\sum_{j=1}^{p} w_j\psi_j^2  = \Theta( \kappa p/n_Y)$, and 
	\begin{align}
		&	\var\left\{   (\hat{\Gamma}_j \hat{\gamma}_j-\beta_0\hat{\gamma}_j^2+ \beta_0 \sigma_{Xj}^2)\sigma_{Yj}^{-2} \psi_j^2    \left(\frac{\hat{\Gamma}_j^2-\Gamma_j^2-\sigma_{Yj}^2}{n_Y-2}\right)\sigma_{Yj}^{-2}  \right\} \nonumber\\
		&=\frac{ C_1(\gamma^4_j\sigma_{Xj}^{-4})+C_2(\gamma^2_j\sigma_{Xj}^{-2})+C_3}{(n_Y-2)^2} \nonumber
	\end{align}
	where $ C_1, C_2, C_3 $ are bounded constants. 	Hence, $ \var (A_3)= \frac{\Theta(\sum_{j=1}^{p}\gamma^4_j\sigma_{Xj}^{-4})+\Theta( \kappa p +p) }{n_Y^2}$ and $ A_3=\Theta(\kappa p/n_Y)+o_P(\sqrt{\kappa p+p} )$ when $ \sum_{j=1}^{p} \gamma^4_j\sigma_{Xj}^{-4}/(n_Y^2 (\kappa p+p)) \rightarrow 0$. This condition holds because 
	\begin{align}
		&	\sum_{j=1}^{p} \gamma^4_j\sigma_{Xj}^{-4}/(n_Y^2 (\kappa p+p)) \leq \max_j( \gamma_j^2\sigma_{Xj}^{-2})	\sum_{j=1}^{p} \gamma^2_j\sigma_{Xj}^{-2}/(n_Y^2 (\kappa p+p))\nonumber\\
		&= \frac{\max_j( \gamma_j^2\sigma_{Xj}^{-2}) }{\kappa p+p} \frac{\Theta(\kappa p)}{n_Y^2} \rightarrow 0. \nonumber
	\end{align} 
	Finally,  $ B_4=o_P(\sqrt{\kappa p+p})$ because
	\begin{align}
		B_4&=\sum_{j=1}^{p}\beta_0(\hat{\sigma}_{Xj}^2- \sigma_{Xj}^2)(\hat{\sigma}_{Yj}^{-2}- \sigma_{Yj}^{-2})\nonumber\\
		&=\underbrace{-\beta_0\left(\sum_{j=1}^{p} v_j^2\psi_{Xj}^2\psi_{Yj}^2 \right) \left( \frac{\X^T (I-H_e)\X}{(n_X-1)\var(X)}-1\right)\left( \frac{\Y^T (I-H_e)\Y}{(n_Y-1)\var(Y)}-1\right)}_{ O_P (p/\sqrt{n_Xn_Y})} \nonumber\\
		&+  o_P( p/n_X) \nonumber
	\end{align}
	This completes the proof.
\end{proof}

Lemma \ref{lemma: dIVW denominator} studies the denominator of $ \hat{\beta}_{\rmdivw}- \beta_0 $ without assuming $ \hat{\sigma}_{Xj}= \sigma_{Xj}, \hat{\sigma}^*_{Xj}= \sigma^*_{Xj}, \hat{\sigma}_{Yj}= \sigma_{Yj}$.
\begin{lemma} \label{lemma: dIVW denominator}
	Assume models (\ref{eq: exposure-IV})-(\ref{eq: outcome-exposure}),  Assumptions \ref{assump: 1}-\ref{assump: 2},   $ \kappa \sqrt{p}\rightarrow \infty $ and $ p/n_X\rightarrow 0 $, as $ p, n_X \rightarrow \infty $,
	\begin{align}
		\frac{\sum_{j=1}^p\left( \hat{\gamma}_j^2-\hat{\sigma}_{Xj}^2 \right) \hat{\sigma}_{Yj}^{-2}}{\sum_{j=1}^{p} w_j}\xrightarrow{P} 1\nonumber 
	\end{align}
\end{lemma}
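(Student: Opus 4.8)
The plan is to reduce the statement to the known-variance result \eqref{app: divwD} by an explicit decomposition of the numerator, treating the replacement of the true standard deviations $\sigma_{Xj},\sigma_{Yj}$ by the estimated standard errors $\hat\sigma_{Xj},\hat\sigma_{Yj}$ as a perturbation. Writing $\hat\gamma_j^2-\hat\sigma_{Xj}^2=(\hat\gamma_j^2-\sigma_{Xj}^2)-(\hat\sigma_{Xj}^2-\sigma_{Xj}^2)$ and $\hat\sigma_{Yj}^{-2}=\sigma_{Yj}^{-2}+(\hat\sigma_{Yj}^{-2}-\sigma_{Yj}^{-2})$, I would split
\begin{align*}
\sum_{j=1}^p (\hat\gamma_j^2-\hat\sigma_{Xj}^2)\hat\sigma_{Yj}^{-2}
&= \sum_{j=1}^p (\hat\gamma_j^2-\sigma_{Xj}^2)\sigma_{Yj}^{-2}
+ \sum_{j=1}^p (\hat\gamma_j^2-\sigma_{Xj}^2)(\hat\sigma_{Yj}^{-2}-\sigma_{Yj}^{-2}) \\
&\quad - \sum_{j=1}^p (\hat\sigma_{Xj}^2-\sigma_{Xj}^2)\hat\sigma_{Yj}^{-2}
=: T_0 + D_1 - D_2 .
\end{align*}
The leading term satisfies $T_0/\sum_{j=1}^p w_j\xrightarrow{P}1$ by \eqref{app: divwD}, and since the boundedness of the variance ratios gives $\sum_{j=1}^p w_j=\Theta(\kappa p)$, it remains to show $D_1/\sum_{j=1}^p w_j\xrightarrow{P}0$ and $D_2/\sum_{j=1}^p w_j\xrightarrow{P}0$.

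For $D_2$ I would substitute the Taylor expansion of $\hat\sigma_{Xj}^2/\sigma_{Xj}^2-1$ recorded in the preliminary facts and use $\sigma_{Xj}^2\hat\sigma_{Yj}^{-2}=v_j(1+o_P(1))=\Theta(1)$. The dominant contribution comes from the \emph{common} factor, namely $\bigl(\tfrac{\X^T(I-H_e)\X}{(n_X-1)\var(X)}-1\bigr)\sum_{j=1}^p v_j\psi_{Xj}^2$, which is $O_P(p/\sqrt{n_X})$ by the ordinary CLT applied to the i.i.d.\ exposure sample together with $\sum_{j=1}^p v_j\psi_{Xj}^2=\Theta(p)$. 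The remaining pieces, namely the $j$-specific mean-zero terms $\psi_{Xj}^2\bigl(\tfrac{\Z_{Xj}^T\Z_{Xj}}{(n_X-1)\var(Z_j)}-1\bigr)$ and $\sigma_{Yj}^{-2}(\hat\gamma_j^2-\gamma_j^2-\sigma_{Xj}^2)/(n_X-2)$, which are independent across $j$, plus the deterministic $O(p/n_X)$ remainder, are of strictly smaller order by a variance bound. Dividing by $\sum_{j=1}^p w_j=\Theta(\kappa p)$ turns the dominant $O_P(p/\sqrt{n_X})$ into $O_P\bigl(1/(\kappa\sqrt{n_X})\bigr)$.

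For $D_1$ I would exploit the two-sample design: since $\hat\gamma_j$ comes from the exposure sample and $\hat\sigma_{Yj}$ from the outcome sample, the factors $\hat\gamma_j^2-\sigma_{Xj}^2$ and $\hat\sigma_{Yj}^{-2}-\sigma_{Yj}^{-2}$ are independent. Expanding $\hat\sigma_{Yj}^{-2}-\sigma_{Yj}^{-2}=-\sigma_{Yj}^{-2}(\hat\sigma_{Yj}^2/\sigma_{Yj}^2-1)+O_P(n_Y^{-1})$ via \eqref{appeq: sigma_Yj} and inserting the expansion of $\hat\sigma_{Yj}^2/\sigma_{Yj}^2-1$, the dominant piece is $-\bigl(\tfrac{\Y^T(I-H_e)\Y}{(n_Y-1)\var(Y)}-1\bigr)\sum_{j=1}^p(\hat\gamma_j^2-\sigma_{Xj}^2)\sigma_{Yj}^{-2}\psi_{Yj}^2$. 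Here the common factor is $O_P(n_Y^{-1/2})$, while the inner sum has mean $\Theta(\kappa p)$ and, by the same second-moment argument used for $T_0$, equals $O_P(\kappa p)$; hence this piece is $O_P(\kappa p/\sqrt{n_Y})=o_P(\kappa p)$. The $j$-specific terms of the expansion again contribute lower-order fluctuations, controlled by a variance bound that invokes $\max_j(\gamma_j^2\sigma_{Xj}^{-2})/(\kappa p+p)\to 0$ to keep $\sum_{j=1}^p w_j^2$ negligible relative to $(\kappa p)^2$.

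The main obstacle is the bookkeeping: the argument is structurally parallel to the proof of Lemma \ref{lemma: dIVW} for the numerator but requires carefully tracking the order of each cross term and confirming that every one is $o_P(\kappa p)$. The delicate point is that the two largest error contributions, $O_P(p/\sqrt{n_X})$ from $D_2$ and $O_P(\kappa p/\sqrt{n_Y})$ from $D_1$, must both vanish after normalization by $\kappa p$. The former gives $O_P(1/(\kappa\sqrt{n_X}))$, and since $p/n_X\to 0$ forces $n_X/p\to\infty$ we have $\kappa\sqrt{n_X}\ge\kappa\sqrt{p}\to\infty$ (using $n_X\asymp n_Y$ from Assumption \ref{assump: 1}), while the latter gives $O_P(1/\sqrt{n_Y})\to 0$. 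Verifying these rate comparisons, and ensuring that the scattered remainder terms genuinely fall below them under the stated conditions $\kappa\sqrt{p}\to\infty$ and $p/n_X\to 0$, is where the real care is needed, whereas the algebraic decomposition itself is routine.
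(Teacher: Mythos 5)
Your proposal is correct and follows essentially the same route as the paper: the paper's own proof decomposes the numerator into $\sum_j \gamma_j^2\sigma_{Yj}^{-2}$ plus five perturbation terms $A_1,\dots,A_5$ (your $T_0$, $D_1$, $D_2$ regrouped), disposes of $A_1$ via the known-SD result \eqref{app: divwD}, and declares the remaining terms $o_P(\kappa p)$ "similar to the proof of Lemma \ref{lemma: dIVW}". Your write-up actually supplies more of the rate bookkeeping (the $O_P(p/\sqrt{n_X})$ and $O_P(\kappa p/\sqrt{n_Y})$ dominant contributions and why they vanish under $\kappa\sqrt{p}\to\infty$ and $p/n_X\to 0$) than the paper itself records.
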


\begin{proof}
	\begin{align}
		&\sum_{j=1}^p\left( \hat{\gamma}_j^2-\hat{\sigma}_{Xj}^2 \right) \hat{\sigma}_{Yj}^{-2} = \sum_{j=1}^p \gamma_j^2\sigma_{Yj}^{-2} + \underbrace{\sum_{j=1}^{p} (\hat{\gamma}_j^2-\sigma_{Xj}^2-\gamma_j^2)\sigma_{Yj}^{-2} }_{A_1}\nonumber\\
		&\qquad +\underbrace{ \sum_{j=1}^{p} (\hat{\gamma}_j^2- \sigma_{Xj}^2-\gamma_j^2 ) (\hat{\sigma}_{Yj}^{-2}- \sigma_{Yj}^{-2}) }_{A_2} + \underbrace{\sum_{j=1}^{p} \gamma_j^2(\hat{\sigma}_{Yj}^{-2}- \sigma_{Yj}^{-2}) }_{A_3}  \nonumber\\
		& \qquad    -\underbrace{ \sum_{j=1}^{p} (\hat{\sigma}_{Xj}^2-\sigma_{Xj}^2) \sigma_{Yj}^{-2} }_{A_4}-\underbrace{ \sum_{j=1}^{p} (\hat{\sigma}_{Xj}^2-\sigma_{Xj}^2) (\hat{\sigma}_{Yj}^{-2}- \sigma_{Yj}^{-2}) }_{A_5}\nonumber
	\end{align}
	In Section \ref{subsubsec: divw, known SDs, lambda=0}, we have shown that $ A_1=o_P( \kappa p) $. The rest of the terms can also be shown to be $ o_P(\kappa p) $. The details are similar to the proof of Lemma \ref{lemma: dIVW}, thus are omitted.
\end{proof}

Combining Lemmas \ref{lemma: dIVW} and \ref{lemma: dIVW denominator}, we have the consistency and asymptotic normality of $ \hat{\beta}_{\rmdivw}. $

\subsubsection{Without assuming knowing SDs and $ \lambda\geq 0 $}

In this section, we show the results for general  $ \lambda\geq 0 $ and without assuming knowing $ \sigma_{Xj}, \sigma_{Xj}^*, \sigma_{Yj} $.   
\begin{lemma} \label{lemma: dIVW, threshold}
	Assume models (\ref{eq: exposure-IV})-(\ref{eq: outcome-exposure}),  Assumptions \ref{assump: 1}-\ref{assump: 2},   and that $ \kappa_\lambda \sqrt{p_\lambda}/ \max (1, \lambda^2)\rightarrow \infty $,  $
	\max_j (\gamma_j^2\sigma_{Xj}^{-2} q_{\lambda, j})/ (\kappa_\lambda p_\lambda+p_\lambda)\rightarrow 0 $, $ p/n_X \rightarrow 0$, when $ p, n_X \rightarrow \infty $,
	\begin{align}
		\frac{\sum_{j=1}^p\left(\hat{\Gamma}_j \hat{\gamma}_j-\beta_0\hat{\gamma}_j^2+\beta_0\hat{\sigma}_{Xj}^2\right)\hat{\sigma}_{Yj}^{-2}I(|\hat{\gamma}_j^*|>\lambda\hat{\sigma}_{Xj}^* )}{ {\sigma}_{\lambda, dIVW} }\xrightarrow{D} N(0,1)\nonumber 
	\end{align}
	where $ 	{\sigma}^2_{\lambda, \rmdivw}=\sum_{j=1}^p[ (w_j+v_j)+\beta_0^2(w_j+2v_j)v_j]q_{\lambda, j} $
\end{lemma}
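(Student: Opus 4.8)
The plan is to follow the decomposition used in the proof of Lemma~\ref{lemma: dIVW} (the $\lambda=0$, estimated-SD numerator), carrying the selection indicator through every term, and to isolate the single genuinely new ingredient: the error incurred by using $\hat\sigma_{Xj}^*$ rather than $\sigma_{Xj}^*$ inside the threshold. Write $\hat I_j=I(|\hat\gamma_j^*|>\lambda\hat\sigma_{Xj}^*)$ and $\tilde I_j=I(|\hat\gamma_j^*|>\lambda\sigma_{Xj}^*)$, and set $D_j=(\hat\Gamma_j\hat\gamma_j-\beta_0\hat\gamma_j^2+\beta_0\sigma_{Xj}^2)\sigma_{Yj}^{-2}$. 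I would split the target sum as
\[
\sum_{j=1}^p(\hat\Gamma_j\hat\gamma_j-\beta_0\hat\gamma_j^2+\beta_0\hat\sigma_{Xj}^2)\hat\sigma_{Yj}^{-2}\hat I_j
= B_1 + \beta_0\sum_{j=1}^p(\hat\sigma_{Xj}^2-\sigma_{Xj}^2)\sigma_{Yj}^{-2}\tilde I_j + B_3 + B_4 + E_{\mathrm{ind}},
\]
where $B_1=\sum_j D_j\tilde I_j$ is the leading term (true SDs, true-SD indicator), $B_3$ and $B_4$ collect the contributions of replacing $\sigma_{Yj}^{-2}$ by $\hat\sigma_{Yj}^{-2}$ exactly as in Lemma~\ref{lemma: dIVW} but each multiplied by $\tilde I_j$, and $E_{\mathrm{ind}}=\sum_j D_j(\hat I_j-\tilde I_j)$ is the new indicator-replacement error. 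Because the selection dataset is independent of the exposure and outcome data, $D_j$ is independent of $\tilde I_j$ and $E[D_j]=0$, so $\var(B_1)=\sum_j\{(w_j+v_j)+\beta_0^2v_j(w_j+2v_j)\}q_{\lambda,j}=\sigma_{\lambda,\rmdivw}^2$ exactly. Thus Lemma~\ref{lemma: three sample 1} gives $B_1/\sigma_{\lambda,\rmdivw}\xrightarrow{D}N(0,1)$ at once, and it remains only to show that each of the other four pieces is $o_P(\sqrt{\kappa_\lambda p_\lambda+p_\lambda})$; the conclusion then follows from Slutsky's theorem.

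The two standard-error terms reuse the machinery of Lemma~\ref{lemma: dIVW} with essentially no change. The key structural fact is again three-sample independence: substituting the expansions of $\hat\sigma_{Xj}^2/\sigma_{Xj}^2-1$ and $\hat\sigma_{Yj}^{-2}/\sigma_{Yj}^{-2}-1$, the common factors $\mathbf X^T(I-H_e)\mathbf X/\{(n_X-1)\var(X)\}-1$ and $\mathbf Y^T(I-H_e)\mathbf Y/\{(n_Y-1)\var(Y)\}-1$ still factor out, while the indicator multiplies only the idiosyncratic summands and thereby turns every moment into its $q_{\lambda,j}$-weighted version (so $\kappa p+p$ becomes $\kappa_\lambda p_\lambda+p_\lambda$ and the Cauchy--Schwarz prefactor $p$ can be kept as an upper bound). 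The $\hat\sigma_{Xj}^2$ correction term is then $O_P(p_\lambda/\sqrt{n_X})=o_P(\sqrt{\kappa_\lambda p_\lambda+p_\lambda})$ since $p_\lambda/n_X\le p/n_X\to0$, and the Cauchy--Schwarz bounds for $B_3,B_4$ carry over because $p/n_Y^2\to0$. None of this requires a new idea.

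The crux, and the step I expect to be the main obstacle, is $E_{\mathrm{ind}}$. Conditioning on the selection dataset makes the $\{D_j\}$ independent, mean-zero, and independent of the now-fixed differences $\hat I_j-\tilde I_j$, which yields the exact identity
\[
\var(E_{\mathrm{ind}})=\sum_{j=1}^p\{(w_j+v_j)+\beta_0^2v_j(w_j+2v_j)\}\,\tilde q_{\lambda,j},\qquad \tilde q_{\lambda,j}:=P(\hat I_j\neq\tilde I_j),
\]
so the whole problem reduces to proving $\sum_j\{(w_j+v_j)+\beta_0^2v_j(w_j+2v_j)\}\tilde q_{\lambda,j}=o(\kappa_\lambda p_\lambda+p_\lambda)$, i.e. that the disagreement probability $\tilde q_{\lambda,j}$ is negligible against the selection probability $q_{\lambda,j}$ in the appropriate weighted sense. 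Since $\hat\sigma_{Xj}^*/\sigma_{Xj}^*-1=O_P(n_{X^*}^{-1/2})$, the two thresholds bracket a window of width $O_P(\lambda\,n_{X^*}^{-1/2})$ around $\lambda\sigma_{Xj}^*$, so bounding the expected mass of $|\hat\gamma_j^*|/\sigma_{Xj}^*$ in that window by (width)$\times$(peak density) gives $\tilde q_{\lambda,j}=O(\lambda\,n_{X^*}^{-1/2}\{\phi(\lambda-\nu_j^*)+\phi(\lambda+\nu_j^*)\})$ with $\nu_j^*=|\gamma_j|/\sigma_{Xj}^*$. One then compares this against $q_{\lambda,j}=\Phi(\nu_j^*-\lambda)+\Phi(-\nu_j^*-\lambda)$ through the same normal-tail/Mills-ratio case split on $\delta_j:=\nu_j^*-\lambda$ used in Lemma~\ref{lemma: ivw three sample 2}, balancing the growth of the weight $(w_j+v_j)+\beta_0^2v_j(w_j+2v_j)$ against the Gaussian decay of $\phi(\delta_j)$. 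This yields $\var(E_{\mathrm{ind}})=o(\kappa_\lambda p_\lambda+p_\lambda)$ once $p/n_{X^*}\to0$ (which follows from $p/n_X\to0$ and the equal-order requirement of Assumption~\ref{assump: 1}) in the relevant range $\lambda\le\sqrt{2\log p}$. Tracking this $\lambda$-dependence cleanly---and confirming it is compatible with the effective-sample-size condition $\kappa_\lambda\sqrt{p_\lambda}/\max(1,\lambda^2)\to\infty$---is the delicate bookkeeping; with it in place, Slutsky's theorem assembles the five pieces into the stated central limit theorem.
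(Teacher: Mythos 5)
Your proposal is correct and its skeleton is the same as the paper's: split off the leading term $\sum_j D_j\tilde I_j$ with true SDs and the true-SD threshold (the paper's $B_1$, handled by Lemma~\ref{lemma: three sample 1}), recycle the $\hat\sigma_{Yj}^{-2}$- and $\hat\sigma_{Xj}^2$-correction terms from Lemma~\ref{lemma: dIVW} with the indicator carried along, and control the indicator-replacement error $\sum_j D_j(\hat I_j-\tilde I_j)$ by conditioning on the selection sample so that the cross-covariances vanish and only $\sum_j\mathrm{Var}(D_j)\,P(\hat I_j\neq\tilde I_j)$ needs to be bounded. Two sub-steps diverge from the paper. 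First, you dismiss the $\hat\sigma_{Xj}^2$-correction as marginally $o_P(\sqrt{\kappa_\lambda p_\lambda+p_\lambda})$; the paper instead retains its leading part $\Psi_{\lambda,X}B_2$ (a multiple of the centered sample variance of $X$, which is \emph{dependent} on $B_1$), proves joint asymptotic normality of $B_1+\Psi_{\lambda,X}B_2$, and only then shows the variance and covariance corrections are $o(\kappa_\lambda p_\lambda+p_\lambda)$. Your shortcut is legitimate --- Slutsky is indifferent to dependence once the extra term is $o_P(\sigma_{\lambda,\rmdivw})$, and $p_\lambda^2/n_X=o(p_\lambda)$ under $p/n_X\to0$ --- but you should record that the full expansion of $\hat\sigma_{Xj}^2-\sigma_{Xj}^2$ (cf.\ (\ref{eq: B2})) also produces $\hat\gamma_j^2$-type terms and a $\tilde I_j$-versus-$q_{\lambda,j}$ fluctuation, each of which must be ordered separately, and that your displayed decomposition is only the leading part of an exact identity (the cross terms between SE errors and the indicator difference are omitted). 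Second, for $P(\hat I_j\neq\tilde I_j)$ you propose the local bound $O\bigl(\lambda n_{X^*}^{-1/2}\{\phi(\lambda-|\gamma_j|/\sigma_{Xj}^*)+\phi(\lambda+|\gamma_j|/\sigma_{Xj}^*)\}\bigr)$ followed by a per-$j$ Mills-ratio comparison with $q_{\lambda,j}$; the paper avoids any per-$j$ comparison by using the cruder uniform bound $P(\lambda|\hat\sigma_{Xj}^*/\sigma_{Xj}^*-1|\geq\epsilon)\leq \lambda^2\epsilon^{-2}\,\Theta(1/n_{X^*})$ and closing with the global inequalities $\kappa p\lesssim n_X$, $p/n_X\to0$, and $\lambda^2=o(\kappa_\lambda\sqrt{p_\lambda})$, which covers every $\lambda\geq0$. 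Your finer bound is sharper per term, but the weighted comparison of $\phi(\lambda-\cdot)$ against $q_{\lambda,j}$ costs a factor of order $1+\lambda$ and therefore needs roughly $\lambda^2=o(\sqrt{n_{X^*}})$ or your stated restriction $\lambda\leq\sqrt{2\log p}$, which is not assumed in the lemma; for the general statement the paper's cruder Chebyshev bound is the safer closing move.
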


\begin{proof}
	Consider the numerator 
	\begin{align}
		&\sum_{j=1}^p\left(\hat{\Gamma}_j \hat{\gamma}_j-\beta_0\hat{\gamma}_j^2+\beta_0\hat{\sigma}_{Xj}^2\right)\hat{\sigma}_{Yj}^{-2}  I(|\hat{\gamma}_j^*|>\lambda\hat{\sigma}_{Xj}^* )\nonumber\\
		&= \sum_{j=1}^p\left(\hat{\Gamma}_j \hat{\gamma}_j-\beta_0\hat{\gamma}_j^2+\beta_0\hat{\sigma}_{Xj}^2\right)\hat{\sigma}_{Yj}^{-2}  I(|\hat{\gamma}_j^*|>\lambda{\sigma}_{Xj}^* ) \nonumber\\
		&\qquad + \sum_{j=1}^p\left(\hat{\Gamma}_j \hat{\gamma}_j-\beta_0\hat{\gamma}_j^2+\beta_0\hat{\sigma}_{Xj}^2\right)\hat{\sigma}_{Yj}^{-2}  \{I(|\hat{\gamma}_j^*|>\lambda\hat{\sigma}_{Xj}^* )-I(|\hat{\gamma}_j^*|>\lambda{\sigma}_{Xj}^* )\} \nonumber
	\end{align}
	We will deal with the first term, and then show the second term is negligible. Notice that 
	\begin{align}
		& \sum_{j=1}^p\left(\hat{\Gamma}_j \hat{\gamma}_j-\beta_0\hat{\gamma}_j^2+\beta_0\hat{\sigma}_{Xj}^2\right)\hat{\sigma}_{Yj}^{-2}  I(|\hat{\gamma}_j^*|>\lambda{\sigma}_{Xj}^* )\label{eq: threshold}\\
		&= \sum_{j=1}^p\left(\hat{\Gamma}_j \hat{\gamma}_j-\beta_0\hat{\gamma}_j^2+ \beta_0\sigma_{Xj}^2\right){\sigma}_{Yj}^{-2}I(|\hat{\gamma}_j^*|>\lambda{\sigma}_{Xj}^* ) + \sum_{j=1}^p \beta_0(\hat{\sigma}_{Xj}^2- \sigma_{Xj}^2)\sigma_{Yj}^{-2} q_{\lambda, j}\nonumber\\
		&+ 	\sum_{j=1}^p\left(\hat{\Gamma}_j \hat{\gamma}_j-\beta_0\hat{\gamma}_j^2+\beta_0\sigma_{Xj}^2\right)(\hat{\sigma}_{Yj}^{-2}- {\sigma}_{Yj}^{-2}) I(|\hat{\gamma}_j^*|>\lambda{\sigma}_{Xj}^* )\nonumber\\
		&	+ \sum_{j=1}^{p} \beta_0(\hat{\sigma}_{Xj}^2- \sigma_{Xj}^2)(\hat{\sigma}_{Yj}^{-2}- \sigma_{Yj}^{-2})I(|\hat{\gamma}_j^*|>\lambda{\sigma}_{Xj}^* ) \nonumber\\
		&+ \sum_{j=1}^p  \beta_0(\hat{\sigma}_{Xj}^2- \sigma_{Xj}^2) \sigma_{Yj}^{-2}  \{    I(|\hat{\gamma}_j^*|>\lambda{\sigma}_{Xj}^* )- q_{\lambda, j} \}\nonumber\\
		&=\underbrace{ \sum_{j=1}^p\left(\hat{\Gamma}_j \hat{\gamma}_j-\beta_0\hat{\gamma}_j^2+ \beta_0 \sigma_{Xj}^2\right){\sigma}_{Yj}^{-2} I(|\hat{\gamma}_j^*|>\lambda{\sigma}_{Xj}^* )}_{B_1} +\Psi_{\lambda,X} \underbrace{ \left(\frac{\X^T (I-H_e)\X}{(n_X-1) \var(X)}-1\right) }_{B_2}\nonumber\\ 
		&+	\underbrace{ \sum_{j=1}^p\left(\hat{\Gamma}_j \hat{\gamma}_j-\beta_0\hat{\gamma}_j^2+ \beta_0 \sigma_{Xj}^2\right)(\hat{\sigma}_{Yj}^{-2}- {\sigma}_{Yj}^{-2})I(|\hat{\gamma}_j^*|>\lambda{\sigma}_{Xj}^* )}_{B_3}\nonumber\\
		&+ \underbrace{ \sum_{j=1}^{p} \beta_0(\hat{\sigma}_{Xj}^2- \sigma_{Xj}^2)(\hat{\sigma}_{Yj}^{-2}- \sigma_{Yj}^{-2})I(|\hat{\gamma}_j^*|>\lambda{\sigma}_{Xj}^* )}_{B_4}  \nonumber\\
		& +  \underbrace{  O_P\left( \frac{p_\lambda}{n_X}\right)   + O_P \left( \sqrt{\frac{p_\lambda}{n_X}}\right) +O_P\left(\frac{\sqrt{\kappa_\lambda p_\lambda+p_\lambda}}{n_X}\right)+\Theta\left( \frac{p_\lambda}{n_X}+\frac{\kappa_\lambda p_\lambda}{n_X^2}\right)}_{o_P(\sqrt{\kappa_\lambda p_\lambda+p_\lambda}) +  o(\sqrt{ \kappa_\lambda p_\lambda+p_\lambda} )} \nonumber\\ 
		&= B_1+\Psi_{\lambda, X} B_2+B_3+B_4+ o_P(\sqrt{\kappa_\lambda p_\lambda+p_\lambda})+ o(\sqrt{ \kappa_\lambda p_\lambda+p_\lambda} )\nonumber
	\end{align}
	where $ 	\Psi_{\lambda, X}=  \sum_{j=1}^p \beta_0v_j\psi_{Xj}^2q_{\lambda, j},  \psi_{Xj}^2= \frac{n_X \var(X)}{(n_X-2) (\var(X)- \gamma_j^2\var(Z_j))}$.

	In the sequel, we deal with $ B_1 $ and $ \Psi_{\lambda, X} B_2 $ separately, and show that $ B_3+B_4=o_P(\sqrt{\kappa_\lambda p_\lambda+p_\lambda})$.
	
	The first term $ B_1 $ is studied in Lemma \ref{lemma: three sample 1}, and
	\[
	\frac{B_1}{	\{ \sum_{j=1}^p[ (w_j+v_j)+\beta_0^2(w_j+2v_j)v_j]q_{\lambda, j}  \}^{1/2}} \xrightarrow{D} N(0,1).
	\]
	The second term $ B_2 $ has been studied in Lemma \ref{lemma: dIVW}. Therefore, similar to the proof of Lemma \ref{lemma: dIVW}, as $ p\rightarrow \infty $ and $ n_X\rightarrow \infty $,
	\begin{align}
		\frac{B_1+\Psi_{\lambda, X} B_2}{ \sigma_{\lambda, dIVW}}\xrightarrow{D}N(0,1)
	\end{align}
	One can similarly show that $ B_3, B_4 $ are negligible, following the proofs in Lemma \ref{lemma: dIVW}.
	
	Finally, it remains to show that 
	\begin{align}
		&	\sum_{j=1}^p\left(\hat{\Gamma}_j \hat{\gamma}_j-\beta_0\hat{\gamma}_j^2+\beta_0\hat{\sigma}_{Xj}^2\right)\hat{\sigma}_{Yj}^{-2}  \{I(|\hat{\gamma}_j^*|>\lambda\hat{\sigma}_{Xj}^* )-I(|\hat{\gamma}_j^*|>\lambda{\sigma}_{Xj}^* )\}  \nonumber\\
		&	= o_P (\sqrt{ \kappa_\lambda p_\lambda + p_\lambda}). \nonumber
	\end{align}
	One can  decompose the above equation similarly as in  (\ref{eq: threshold}), where the leading term is 
	\begin{align}
		\sum_{j=1}^p\left(\hat{\Gamma}_j \hat{\gamma}_j-\beta_0\hat{\gamma}_j^2+\beta_0{\sigma}_{Xj}^2\right){\sigma}_{Yj}^{-2}  \{I(|\hat{\gamma}_j^*|>\lambda\hat{\sigma}_{Xj}^* )-I(|\hat{\gamma}_j^*|>\lambda{\sigma}_{Xj}^* )\}  \label{eq: threshold leading }
	\end{align}
	To prove (\ref{eq: threshold leading }) is of order $ o_P(\sqrt{\kappa_\lambda p_\lambda+p_\lambda}) $, it is easy to show that the above equation has expectation zero. To derive its variance, we first notice that the pairwise covariance equals zero, which can be seen using the relationship that for two generic random variables $ X, Y $ and sigma field $ {\cal F} $, 
	\begin{align}
		\cov(X, Y)= E(\cov(X, Y|{\cal F}) ) + \cov(E(X|{\cal F}), E(Y|{\cal F}))
	\end{align}
	For $ j\neq k $, let $ X= \left(\hat{\Gamma}_j \hat{\gamma}_j-\beta_0\hat{\gamma}_j^2+\beta_0{\sigma}_{Xj}^2\right){\sigma}_{Yj}^{-2}  \{I(|\hat{\gamma}_j^*|>\lambda\hat{\sigma}_{Xj}^* )-I(|\hat{\gamma}_j^*|>\lambda{\sigma}_{Xj}^* )\}  $ ,$ Y=\left( \hat{\Gamma}_k \hat{\gamma}_k-\beta_0\hat{\gamma}_k^2+\beta_0{\sigma}_{Xk}^2\right){\sigma}_{Yk}^{-2}  \{I(|\hat{\gamma}_k^*|>\lambda\hat{\sigma}_{Xk}^* )-I(|\hat{\gamma}_k^*|>\lambda{\sigma}_{Xk}^* )\}   $, $ {\cal F} = \{( \hat{\gamma}_j^*, \hat{\sigma}_{Xj}^*), j=1,\dots, p\}$. Because the selection dataset is independent with the exposure and the outcome datasets, we have $ E(X|{\cal F}) =E(Y|{\cal F}) =0$, and thus $  \cov(E(X|{\cal F}), E(Y|{\cal F})) =0$. Also, $ \cov(X, Y|{\cal F}) =0$, which implies that $ \cov(X, Y)=0 $.
	
	Then, some algebras reveal that 
	\begin{align}
		&E \{I(|\hat{\gamma}_j^*|>\lambda\hat{\sigma}_{Xj}^* )-I(|\hat{\gamma}_j^*|>\lambda{\sigma}_{Xj}^* )\} ^2\nonumber\\
		&= P \left(  \lambda\sigma_{Xj}^* \geq |\hat{\gamma}_j^*|>\lambda\hat{\sigma}_{Xj}^*\right) + P \left(  \lambda\sigma_{Xj}^* < |\hat{\gamma}_j^*|\leq \lambda\hat{\sigma}_{Xj}^*\right)  \nonumber\\
		&\leq P\left( \lambda (\hat{\sigma}_{Xj}^*/ \sigma_{Xj}^*-1)<0\right)+P\left( \lambda (\hat{\sigma}_{Xj}^*/ \sigma_{Xj}^*-1)>0\right) \nonumber\\
		&\leq P\left( \lambda (\hat{\sigma}_{Xj}^*/ \sigma_{Xj}^*-1)+\epsilon\leq 0\right)+P\left( \lambda (\hat{\sigma}_{Xj}^*/ \sigma_{Xj}^*-1)-\epsilon\geq 0\right)  \qquad \exists \epsilon>0 \nonumber\\
		&=  P\left( \lambda |\hat{\sigma}_{Xj}^*/ \sigma_{Xj}^*-1|\geq  \epsilon\right)   \nonumber\\
		&\leq \frac{\lambda^2}{\epsilon^2} E\left[ \left( \frac{\hat{\sigma}_{Xj}^*}{\sigma_{Xj}^*}-1\right)^2\right] = \Theta\left( \frac{\lambda^2}{n_X^*}\right)\nonumber
	\end{align}
	where the last inequality is from Markov Inequality. 
	
	Combining the above derivations, we have that the variance of (\ref{eq: threshold leading }) is of order $ \Theta(n_X^{-1}\lambda^2 (\kappa p+p)) $. From the fact that $ \kappa p/n_X = \Theta(1) $ and the assumptions that  $ p/n_X=o(1) , \lambda^2/(\kappa_\lambda\sqrt{p_\lambda})=o(1)$, we conclude that (\ref{eq: threshold leading }) is of order $ o_P(\sqrt{\kappa_\lambda p_\lambda+p_\lambda}) $.
	
\end{proof}

\subsection{ Proof of the dIVW estimator under balanced horizontal pleiotropy}This proof is similar to the proof of Theorem \ref{theo: dIVW}. In particular, notice that Assumption $ 2' $ implies $\hat{\Gamma}_j\sim N(\beta_0\gamma_j,\sigma_{Yj}^2+\tau_0^2)$, that is $\hat{\Gamma}_j$ follows a normal distribution with mean $\beta_0\gamma_j$ and variance $\sigma_{Yj}^2+\tau_0^2$, with  $\sigma_{Xj}^2/(\sigma_{Yj}^2+\tau_0^2)=\Theta (1)$. Therefore, under balanced horizontal pleiotropy, Assumptions \ref{assump: 1}-\ref{assump: 2} still hold if we replace $\sigma_{Yj}^2$ with $\sigma_{Yj}^2+\tau_0^2$. Here, Lemma \ref{lemma: three sample 2} also holds because the distribution of $\hat{\gamma}_j$ is unchanged. However, Lemma \ref{lemma: three sample 1} needs to be modified to account for the pleiotropy effect

Specifically, under Assumptions \ref{assump: 1}-\ref{assump: 2} (with $\sigma_{Yj}^2$ replaced with $\sigma_{Yj}^2+\tau_0^2$), suppose that $\max_j(\gamma_j^2\sigma_{Xj}^{-2}q_{\lambda,j})/(\kappa_\lambda p_\lambda+p_\lambda)\rightarrow 0$, then as
$p\rightarrow \infty$, 
\begin{eqnarray}
	\frac{\sum_{j=1}^p\left(\hat{\Gamma}_j \hat{\gamma}_j-\beta_0 \hat{\gamma}_j^2+\beta_0\sigma_{Xj}^2\right)\sigma_{Yj}^{-2}\indpit}{\left[\sum_{j=1}^p\left\{  (1+\tau_0^2\sigma_{Yj}^{-2})(w_j+v_j)+\beta_0^2v_j(w_j+2v_j)\right\}q_{\lambda,j}\right]^{1/2}} \xrightarrow{D} N(0,1).\nonumber
\end{eqnarray}
Hence, following Slutsky's theorem, we have
\[
W_{\lambda,\rmdivw}^{-1/2}(\hat{\beta}_{\lambda,\rmdivw}-\beta_0)\xrightarrow{D} N(0,1).
\]
where 
\[
W_{\lambda,\rmdivw} = \frac{\sum_{j=1}^p\left\{  (1+\tau_0^2\sigma_{Yj}^{-2})(w_j+v_j)+\beta_0^2v_j(w_j+2v_j)\right\}q_{\lambda,j} }{ [ \sum_{j=1}^{p} w_jq_{\lambda, j}]^2}
\]
Next, we  prove the plug-in variance estimator in (\ref{evar5}), hereafter denoted by $\hat{W}_{\lambda,\rmdivw}$, is consistent. Similar to the proof of Theorem \ref{theo: p}(a), it is straightforward to show that 
\[
\frac{\sum_{j \in S_\lambda}\left\{ \hat{\beta}_{\lambda,\rmdivw}^2v_j(\hat{w}_j+v_j)\right\}}{ \left[\sum_{j \in S_\lambda}(\hat{w}_j-v_j)\right]^2}= \frac{\sum_{j=1}^p\beta_0^2v_j (w_j+2v_j)q_{\lambda,j}}{(\sum_{j=1}^p w_jq_{\lambda,j})^2}+o_P(1).
\]
It remains to show that 
\begin{eqnarray}
	\frac{\sum_{j \in S_\lambda}\left\{  \hat{w}_j(1+\hat{\tau}^2\sigma_{Yj}^{-2})\right\}}{\sum_{j=1}^p (1+\tau_0^2\sigma_{Yj}^{-2})(w_j+v_j)q_{\lambda,j}}\xrightarrow{P} 1. \label{appeq: tau}
\end{eqnarray}
We define $\tilde{\tau}^2$ as an analogue of  $\hat{\tau}^2$ with $\hat{\beta}_{0, \rmdivw}$ replaced by $\beta_0$, i.e.,
\[
\tilde{\tau}^2=\frac{\sum_{j =1}^p\left[(\hat{\Gamma}_j-\beta_0\hat{\gamma}_j)^2-\sigma_{Yj}^2-\beta_0^{2}\sigma_{Xj}^2\right]\sigma_{Yj}^{-2}}{\sum_{j=1}^p\sigma_{Yj}^{-2}}.
\]
Some algebra reveals that we have $E(\tilde{\tau}^2)=\tau_0^2$ and 
\[
\var(\tilde{\tau}^2)=\frac{2\sum_{j=1}^p (1+\beta_0^2v_j+\tau_0^2\sigma_{Yj}^{-2})^2}{(\sum_{j=1}^p \sigma_{Yj}^{-2})^2}=\frac{\Theta(p)}{(\sum_{j=1}^p \sigma_{Yj}^{-2})^2}.
\]
In this proof, for a random variable $X$ with finite second moments, we can write
\[
X= E(X)+O_P((\var(X))^{1/2}).
\]
from Chebyshev inequality. Using this, we have
\[
(\sum_{j=1}^p \sigma_{Yj}^{-2})(\tilde{\tau}^2-\tau_0^2)=O_P(\sqrt{p}). 
\]
Under a Taylor series expansion of the above quantity, the first and second derivatives are non-zero, and we have
\begin{align}
	&(\sum_{j=1}^p \sigma_{Yj}^{-2})(\hat{\tau}^2-\tilde{\tau}^2) \nonumber\\
	=& (\hat{\beta}_{\rmdivw}-\beta_0) \left[\sum_{j=1}^p 2(-\hat{\Gamma}_j\hat{\gamma}_j+\beta_0\hat{\gamma}_j^2-\beta_0\sigma_{Xj}^2)\sigma_{Yj}^{-2}\right]\nonumber \\
	&\qquad+(\hat{\beta}_{\rmdivw}-\beta_0)^2\left[\sum_{j=1}^{p} (\hat{\gamma}_j^2-\sigma_{Xj}^2)\sigma_{Yj}^{-2}\right]\nonumber\\
	=&\frac{ -\left[\sum_{j=1}^p (\hat{\Gamma}_j\hat{\gamma}_j-\beta_0\hat{\gamma}_j^2+\beta_0\sigma_{Xj}^2)\sigma_{Yj}^{-2}\right]^2}{\sum_{j=1}^{p} (\hat{\gamma}_j^2-\sigma_{Xj}^2)\sigma_{Yj}^{-2}}  \label{appeq: pleiotropy}
\end{align}
where the second equality is from plugging in the expression for $\hat{\beta}_{\rmdivw}-\beta_0$. From the proof of Theorem \ref{theo: dIVW}, where we have already derived the asymptotic distribution of the numerator, the normalized version in (\ref{appeq: pleiotropy}) asymptotically follows a chi-square distribution and 
\begin{align*}
	&\frac{\sum_{j=1}^{p}w_j}{\sum_{j  =1}^p (w_j+v_j)+\beta_0^2(w_j+2v_j)v_j}  (\sum_{j=1}^p \sigma_{Yj}^{-2})(\hat{\tau}^2-\tilde{\tau}^2) =O_P(1) \\
	&(\sum_{j=1}^p \sigma_{Yj}^{-2})(\hat{\tau}^2-\tilde{\tau}^2) =O_P((\kappa p+p)/(\kappa p))
\end{align*}
From $\kappa \sqrt{p}\rightarrow \infty$ and $p\rightarrow \infty$, it is easy to see that $(\kappa p+p) / (\kappa p \sqrt{ p}) \rightarrow 0$ and $(\sum_{j=1}^p \sigma_{Yj}^{-2})(\hat{\tau}^2-\tilde{\tau}^2)=o_P(\sqrt{p})$. Therefore, 
\[
(\sum_{j=1}^p \sigma_{Yj}^{-2}) (\hat{\tau}^2-\tau_0^2)=(\sum_{j=1}^p \sigma_{Yj}^{-2})(\hat{\tau}^2-\tilde{\tau}^2)-(\sum_{j=1}^p \sigma_{Yj}^{-2})(\tilde{\tau}^2-\tau_0^2)=O_P (\sqrt{p}).
\]

Notice that the denominator of (\ref{appeq: tau}) is of order $\Theta(\kappa_\lambda p_\lambda+p_\lambda)$. As such, we will prove (\ref{appeq: tau}) by showing the difference of the numerator and the denominator is $o_P(\kappa_\lambda p_\lambda+p_\lambda)$. The difference between the numerator and the denominator is 
\[
\underbrace{\sum_{j \in S_\lambda} \hat{w}_j-\sum_{j=1}^p (w_j+v_j)q_{\lambda,j}}_{o_P(\kappa_\lambda p_\lambda+p_\lambda)}+\underbrace{\hat{\tau}^2\sum_{j \in S_\lambda} \sigma_{Yj}^{-2}\hat{w}_j-\tau_0^2 \sum_{j=1}^p \sigma_{Yj}^{-2} (w_j+v_j)q_{\lambda,j}}_{B_1},
\]
where the first term is $o_P(\kappa_\lambda p_\lambda+p_\lambda)$ from Lemma \ref{lemma: three sample 2}. We decompose $B_1$ as
\[
B_1=\underbrace{(\hat{\tau}^2-\tau_0^2)\sum_{j \in S_\lambda} \sigma_{Yj}^{-2}\hat{w}_j}_{B_{11}}+\underbrace{\tau_0^2 \left\{\sum_{j \in S_\lambda} \sigma_{Yj}^{-2}\hat{w}_j-\sum_{j=1}^p \sigma_{Yj}^{-2} (w_j+v_j)q_{\lambda,j}\right\}}_{B_{12}}.
\]
Some algebra reveals that
\begin{eqnarray}
	&E\left(\sum_{j \in S_\lambda} \sigma_{Yj}^{-2}\hat{w}_j\right)=\sum_{j=1}^p \sigma_{Yj}^{-2}(w_j+v_j)q_{\lambda,j} \nonumber\\
	& \left[\var\left(\sum_{j \in S_\lambda} \sigma_{Yj}^{-2}\hat{w}_j\right)\right]^{1/2}= o\left(\frac{\sum_{j  =1}^p \sigma_{Yj}^{-2}}{p} (\kappa_\lambda p_\lambda+p_\lambda)\right) \nonumber
\end{eqnarray}
where the second line uses the assumption   $\max_j \sigma_{Yj}^{-2}\leq cp^{-1}\sum_{k=1}^p \sigma_{Yk}^{-2}$ with a positive constant $c$, together with a similar argument in the proof of Lemma \ref{lemma: ivw three sample 2}.
Combining the results, the order of $B_{11}$ is
\begin{align}
	B_{11}&=(\hat{\tau}^2-\tau_0^2) \left[\sum_{j=1}^p \sigma_{Yj}^{-2} (w_j+v_j)q_{\lambda,j}+o_P\left(\frac{\sum_{j=1}^p \sigma_{Yj}^{-2}}{p} (\kappa_\lambda p_\lambda+p_\lambda)\right)\right] \nonumber\\
	&= \frac{O_P (\sqrt{p})}{\sum_{j=1}^p\sigma_{Yj}^{-2}} \left[\sum_{j=1}^p \sigma_{Yj}^{-2} (w_j+v_j)q_{\lambda,j}\right] +o_P(\kappa_\lambda p_\lambda+p_\lambda) \nonumber \\
	&\leq  \frac{O_P (\sqrt{p})}{\sum_{j=1}^p\sigma_{Yj}^{-2}}\left[\sum_{j=1}^p (w_j+v_j)q_{\lambda,j} \right] \frac{c}{p}\sum_{j=1}^p\sigma_{Yj}^{-2}+o_P(\kappa_\lambda p_\lambda+p_\lambda)=o_P(\kappa_\lambda p_\lambda+p_\lambda) \nonumber
\end{align}
Similarly $B_{12}=o_P(\kappa_\lambda p_\lambda+p_\lambda)$ from $\tau_0\leq c_{+} \sigma_{Yj}$ for all $j$ in the Assumption $ 2' $, completing the proof.

(b) Proof for consistency is similar to the proof in Theorem \ref{theo: dIVW}(b) and is omitted.

\end{document}